\renewcommand{\gets}{:=}
\newcommand{\Otilde}[1]{\Tilde{O}(#1)}
\newcommand{\Wtilde}[1]{\Tilde{\Omega}(#1)}
\newcommand{\alglabel}[1]  {\label{alg:#1}}
\newcommand{\algref}[1]    {Algorithm~\ref{alg:#1}}
\newcommand{\algreftwo}[2] {Algorithms~\ref{alg:#1} and~\ref{alg:#2}}
\newcommand{\seclabel}[1]  {\label{sec:#1}}
\newcommand{\secref}[1]    {Section~\ref{sec:#1}}
\newcommand{\secreftwo}[2] {Sections~\ref{sec:#1} and~\ref{sec:#2}}
\newcommand{\lemlabel}[1]  {\label{lem:#1}}
\newcommand{\lemref}[1]    {Lemma~\ref{lem:#1}}
\newcommand{\lemreftwo}[2] {Lemmas~\ref{lem:#1} and~\ref{lem:#2}}
\newcommand{\thmlabel}[1]  {\label{thm:#1}}
\newcommand{\thmref}[1]    {Theorem~\ref{thm:#1}}
\newcommand{\corlabel}[1]  {\label{cor:#1}}
\newcommand{\corref}[1]    {Corollary~\ref{cor:#1}}
\newcommand{\deflabel}[1]  {\label{def:#1}}
\newcommand{\defn}[1]      {\textbf{\emph{#1}}}
\newcommand{\id}[1]        {\ifmmode\mathit{#1}\else\textit{#1}\fi}
\newcommand{\poly}[1]      {\mathrm{poly}(#1)}
\newcommand{\card}[1]      {\left| #1\right|}
\newcommand{\ang}[1]       {\left<#1\right>}
\newcommand{\set}[1]       {\left\{#1\right\}}
\newcommand{\prob}[1]      {\Pr\set{#1}}
\newcommand{\ceil}[1]      {\lceil#1\rceil}
\newcommand{\floor}[1]     {\lfloor#1\rfloor}
\newcommand{\Succ}[2]      {R^+\ifblank{#1}{}{(#1,#2)}}
\newcommand{\Pred}[2]      {R^-\ifblank{#1}{}{(#1,#2)}}
\newcommand{\SuccD}[3]     {R^+_{#3}(#1,#2)}
\newcommand{\PredD}[3]     {R^-_{#3}(#1,#2)}
\newcommand{\concat}       {\mapsto}
\newcommand{\Bridge}[2]    {\id{Bridge}(#1,#2)}
\newcommand{\Anc}[2]       {\id{Anc}(#1,#2)}
\newcommand{\Desc}[2]      {\id{Desc}(#1,#2)}
\newcommand{\AncD}[3]      {\id{Anc}_{#3}(#1,#2)}
\newcommand{\DescD}[3]     {\id{Desc}_{#3}(#1,#2)}
\newcommand{\BridgeD}[3]   {\id{Bridge}_{#3}(#1,#2)}
\newcommand{\head}[1]      {\id{head}(#1)}
\newcommand{\tail}[1]      {\id{tail}(#1)}
\newcommand{\length}[1]    {\id{length}(#1)}
\newtheorem{theorem}{Theorem}[section]
\newtheorem{lemma}[theorem]{Lemma}
\newtheorem{corollary}[theorem]{Corollary}
\newtheorem{definition}[theorem]{Definition}
\newenvironment{closeitemize}
    {\begin{list}{$\bullet$}
    {
         \setlength{\itemsep}{-0.3\baselineskip}
     \setlength{\topsep}{0.15\baselineskip}
     \setlength{\parskip}{0pt}}}
    {\end{list}}
\newcounter{ccount}
\newenvironment{closeenum}
    {\begin{list}{\arabic{ccount}.}
    {\usecounter{ccount}\setlength{\itemsep}{-0.3\baselineskip}
     \setlength{\topsep}{0.15\baselineskip}
     \setlength{\parskip}{0pt}}}
    {\end{list}}
\begin{document}
\begin{titlepage}
  \title{Nearly Work-Efficient Parallel Algorithm for Digraph
    Reachability}
  \author{Jeremy T. Fineman\\
    Georgetown University\\
    \texttt{jfineman@cs.georgetown.edu}} \date{}
  \maketitle

  \begin{abstract}
    One of the simplest problems on directed graphs is that of
    identifying the set of vertices reachable from a designated source
    vertex.  This problem can be solved easily sequentially by
    performing a graph search, but efficient parallel algorithms have
    eluded researchers for decades.  For sparse high-diameter graphs
    in particular, there is no known work-efficient parallel algorithm
    with nontrivial parallelism.  This amounts to one of the most
    fundamental open questions in parallel graph algorithms: \emph{Is
      there a parallel algorithm for digraph reachability with nearly linear
      work?}  This paper shows that the answer is yes.

    This paper presents a randomized parallel algorithm for digraph
    reachability and related problems with expected work $\Otilde{m}$
    and span $\Otilde{n^{2/3}}$, and hence parallelism
    $\Wtilde{m/n^{2/3}} = \Wtilde{n^{1/3}}$, on any graph with $n$
    vertices and $m$ arcs.  This is the first parallel algorithm 
    having both nearly linear work and strongly sublinear span, i.e.,
    span $\Otilde{n^{1-\epsilon}}$ for any constant $\epsilon>0$. The
    algorithm can be extended to produce a directed spanning tree,
    determine whether the graph is acyclic, topologically sort the
    strongly connected components of the graph, or produce a directed
    ear decomposition, all with work $\Otilde{m}$ and span
    $\Otilde{n^{2/3}}$.
    
    The main technical contribution is an \emph{efficient} Monte Carlo
    algorithm that, through the addition of $\Otilde{n}$ shortcuts,
    reduces the diameter of the graph to $\Otilde{n^{2/3}}$ with high
    probability.  While both sequential and parallel algorithms are
    known with those combinatorial properties, even the sequential
    algorithms are not efficient, having sequential runtime
    $\Omega(mn^{\Omega(1)})$.  This paper presents a surprisingly simple
    sequential algorithm that achieves the stated diameter reduction and runs
    in $\Otilde{m}$ time.  Parallelizing that algorithm yields the
    main result, but doing so involves overcoming several other
    challenges. 
  \end{abstract}
  \thispagestyle{empty}
\end{titlepage}

\section{Introduction}\seclabel{intro}

There are essentially no good parallel algorithms known for the most basic
problems on general directed graphs, especially when the graph is
sparse. This paper yields several.

A good parallel algorithm should have polynomial parallelism and be
(nearly) work efficient.  The \defn{work} $W(n)$ of a
parallel algorithm on a size-$n$ problem is the total number of
primitive operations performed.  Ideally, the work of the parallel
algorithm should be similar to the best sequential running time
$T^*(n)$ known for the problem.  An algorithm is \defn{work efficient}
if $W(n) \in O(T^*(n))$ and \defn{nearly work efficient} if
$W(n) \in \Otilde{T^*(n)} = O(T^*(n)\cdot \poly{\log{n}})$, where
$\Tilde{O}$ hides logarithmic factors.\footnote{In addition to
  uncluttering the bounds, ignoring logarithmic factors is
  particularly convenient when comparing parallel algorithms --- the
  precise bounds depend on the specifics of the parallel model, but
  the bounds typically only vary by logarithmic factors
  (see~\cite{JaJa92} for discussion) --- allowing us to focus on the
  high-level discussion.}  (As a slight abuse of notation,
$\Otilde{1}$ is used to mean $O(\poly{\log{n}})$, where the $n$ should
be clear from context.)\footnote{The standard definition for soft-$O$
  is that $f(n) \in \Otilde{g(n)}$ if
  $f(n) \in O(g(n)\, \poly{\log g(n)})$.  This paper uses
  $f(n) \in \Otilde{g(n)}$ to mean $f(n) \in O(g(n)\, \poly{\log n})$,
  with the only relevant difference being the meaning of
  $\Otilde{1}$.}  The \defn{span} $S(n)$, also called \defn{depth}, of
a parallel algorithm is the length of the longest chain of sequential
dependencies.\footnote{Older PRAM literature often characterizes
  algorithms by a number of processors and parallel running time.
  Span here is generally equivalent to parallel time, and work
  corresponds to the product of processors and time.}  By Brent's
scheduling principle~\cite{Brent74}, such an algorithm can generally
be scheduled to run in $O(W(n)/p)$ time on $p \leq W(n)/S(n)$
processors; adding more processors beyond that point does not yield
asymptotic speedup.  The limit $W(n)/S(n)$ is called the
\defn{parallelism} of the algorithm; an algorithm is \defn{moderately
  parallel} if the parallelism is $\Omega(n^\epsilon)$, for some
constant $\epsilon > 0$, and \defn{highly parallel} if the span is
$\Otilde{1}$.  The goal is to achieve speedup with respect to the best
sequential algorithm, which is why work efficiency matters.  A nearly
work-efficient algorithm runs in $\Otilde{T^*(n)/p}$ time on
$p\leq W(n)/S(n)$ processors, but inefficient algorithms may require
enormous numbers of processors to beat the sequential algorithm.

\paragraph{Remark.} Aside from the context provided in this
introduction and high-level ideas, most of the paper does not require
any specific knowledge of parallel algorithms; the challenge lies in
producing an algorithm with properties amenable to parallelization.
Most implementation details are straightforward, so the parallel model
and implementation details are deferred to \secref{details}.

\paragraph{Problem and history.}
Perhaps the most basic problem on directed graphs is the single-source
reachability problem: given a directed graph $G=(V,E)$ and source
vertex $s\in V$, identify the set of vertices reachable by a directed
path originating from $s$.  Throughout, let $n=\card{V}$ be the number
of vertices and $m=\card{E}$ be the number of arcs, and for
conciseness of bounds assume that $m\in\Omega(n)$.  This problem has
simple sequential solutions: both breadth-first search (BFS) and
depth-first search (DFS) solve the problem in $O(m)$ time.  There are
two natural parallel algorithms for the reachability problem, which
seem to be folklore. See Table~\ref{table:comparison} for a
comparison.  Parallel transitive closure~\cite{JaJa92}, which amounts
to repeated squaring of the adjacency matrix, is highly parallel but
far from work efficient even for dense graphs.  Parallel BFS is
similar to sequential BFS, except that arcs from each layer (vertices
with the same distance) are explored in parallel.  Parallel BFS is
work efficient (see, e.g.,~\cite{KleinSu97}), but the span is
proportional to the diameter, which is $\Theta(n)$ in the worst case.
Both algorithms fall short of our goals, but they remain the state of
the art.

\begin{table}
  \begin{center}
    \small
    \setlength\extrarowheight{2pt}
    \begin{tabular}{|c|c|c|c|c@{\hspace{1em}}l|}\hline
      & Work & Span & \begin{minipage}{2.5cm}\begin{center}Nearly\\ work efficient?\end{center}\end{minipage} & \multicolumn{2}{c|}{\begin{minipage}{4cm} 
          \begin{center}Number of processors
          to achieve
          $\Otilde{n/k}$
          runtime, for $m\in\Theta(n)$\end{center}\end{minipage}}
      \\ \hline\hline
      parallel BFS & $O(m)$ & $\Otilde{n}$ & Yes &
                                                   \multicolumn{2}{c|}{Not
                                                   Possible unless $k=\Otilde{1}$} \\
      \hline parallel Trans. Closure & $\Otilde{M(n)}$ & $\Otilde{1}$ &
                                                                            No
                                                                                    &
                                                                                      $kM(n)/n
                                                                                      \gg
                                                                                      nk$
                   &
                                                                                                             for
                                                                                                             $k\leq
                                                                                                             n$\\ \hline
      Spencer's~\cite{Spencer97} & $\Otilde{m+n\rho^2}$ &
                                                        $\Otilde{n/\rho}$
                    & if $\rho=\Otilde{\sqrt{m/n}}$ & $k^3$ & for
                                                              $k\leq n$
      \\ \hline
      UY~\cite{UllmanYa91}$^*$ & $\Otilde{m\rho +
                                                  \rho^4/n}$ 
             &
               $\Otilde{n/\rho}$
                    & if $\rho=\Otilde{1}$ & $k^2$ & for $k \leq
                                                     n^{2/3}$ $^\dagger$ \\ \hline
      This paper$^*$ & $\Otilde{m}$ & $\Otilde{n^{2/3}}$ & Yes & $k$ &for
                                                                       $k\leq
                                                                       n^{1/3}$ 
      \\ \hline
    \end{tabular}\vspace{-1em}
  \end{center}
  \caption[XXX]{
    Comparison of parallel algorithms for single-source
    reachability.  Two of the algorithms are parameterized by
    $\rho$, $1\leq \rho \leq n$, which trades off work and span.
    $M(n)$ is the work of the best highly parallel $n \times n$ matrix multiplication, which is at
    least the current best sequential time of $O(n^{2.372869})$~\cite{LeGall14}. \\
    $^*$: the algorithm is randomized. Bounds are with high probability.\\
    $^\dagger$: for higher $k$, the dependence on $k$ becomes worse
    and more complicated to state.
  }\label{table:comparison}
\end{table}
 
The only other progress on general graphs are work/span tradeoffs.
Ullman and Yannakakis~\cite{UllmanYa91} raised the question over 25
years ago of whether it is possible to solve digraph reachability with
sublinear work without sacrificing work efficiency. Instead, their
algorithm~\cite{UllmanYa91}, henceforth termed UY, and Spencer's
algorithm~\cite{Spencer97} exhibit tradeoffs between work and span.
Though not originally described in the same terms, both algorithms can
be parameterized by a value $\rho$, $1\leq \rho \leq n$.
Table~\ref{table:comparison} summarizes the performance
bounds.\footnote{The work bound stated by Ullman and
  Yannakakis~\cite{UllmanYa91} is worse, for small $\rho$, than the
  bound displayed in Table~\ref{table:comparison}. The table shows the
  improved bound observed by Schudy~\cite{Schudy08}.}  For $\rho=1$,
both algorithms are a parallel BFS.  As $\rho$ increases, the span
decreases but the work increases.  When $\rho = n$, both algorithms
converge to transitive closure via regular $\Theta(n^3)$-work matrix
multiplication.  They differ for intermediate $\rho$.  Spencer's
algorithm is deterministic and, for sufficiently dense graphs, can be
nearly work efficient with moderate parallelism.  In contrast, UY is
randomized and never simultaneously work efficient and moderately
parallel, but it exhibits a better work/span tradeoff for sparse
graphs.

Other work focuses on either restricted graph classes or sequential
preprocessing.  Kao and Klein~\cite{KaoKl90} give an
algorithm for reachability on planar digraphs with $\Otilde{n}$ work
and $\Otilde{1}$ span.  Klein~\cite{Klein93} gives an algorithm that
preprocesses the graph in $O(np)$ sequential time, where $p\geq 1$ is
a parameter; after the preprocessing, reachability can be solved in
$O(m/p)$ time on $p$ processors.

\subsection{Shortcutting Approach and Contributions}

The high-level approach is intuitive: (1) reduce the diameter of
the graph through the addition of \defn{shortcuts}, or arcs whose
addition does not change the transitive closure of the graph; (2) run
parallel BFS on the shortcutted graph.  UY~\cite{UllmanYa91} fits this
general strategy (and parallel BFS and transitive closure are extreme
cases), but Spencer's algorithm~\cite{Spencer97} does not.  If the BFS
phase is to complete with $\Otilde{m}$ work, then the number of shortcuts
added must be limited to $\Otilde{m}$.

Ignoring the cost of computing the shortcuts, $O(n)$ shortcuts are
known to be sufficient to reduce the diameter of any graph to
$\Otilde{\sqrt{n}}$. UY~\cite{UllmanYa91} with $\rho=\sqrt{n}$, for
example, accomplishes this task.  Ignoring logarithmic factors, this
is the best diameter reduction known for general graphs using a linear
number of shortcuts.  As Hesse~\cite{Hesse03} shows, however, there
exist graphs requiring $\Omega(mn^{1/17})$ shortcuts to reduce their
diameter below $\Theta(n^{1/17})$.\footnote{Closing the gap between $n^{1/17}$
and $\sqrt{n}$ is an interesting open question but not addressed by this
paper.}    Hesse's lower bound implies a lower bound on reachability via
shortcutting: any nearly work-efficient algorithm must have
$\Wtilde{n^{1/17}}$ span.

The main technical challenge is to produce the shortcuts efficiently,
which is a challenge even ignoring parallelism.  There is no
$\Otilde{m}$-time \emph{sequential} algorithm known to reduce every
graph's diameter to $\Otilde{n^{1-\epsilon}}$, for any constant
$\epsilon > 0$.  For contrast, consider the most natural approach
(similar to UY~\cite{UllmanYa91}): sample $\sqrt{n}$ vertices, perform
a graph search from each, and add shortcuts between all related pairs
of samples.  It is straightforward to prove that this algorithm yields
a graph with
$O(\sqrt{n}\log n)$ diameter, but the running time of the $\sqrt{n}$
independent searches is $O(m\sqrt{n})$.

This paper has the following main contributions:
\begin{closeitemize}
\item (\secref{sequential}.) An $\Otilde{m}$-time sequential Monte
  Carlo algorithm that shortens the diameter of any graph to
  $\Otilde{n^{2/3}}$, with high probability, through the addition of
  $\Otilde{n}$ shortcuts.  
\item (\secreftwo{parallel}{details}.)  A Monte Carlo parallel algorithm having
  $\Otilde{m}$ work and $\Otilde{n^{2/3}}$ span that shortens the
  diameter of any graph to $\Otilde{n^{2/3}}$, with high probability,
  through the addition of $\Otilde{n}$ shortcuts.  
\item Applying the diameter reduction then parallel BFS yields a Las
  Vegas algorithm for single-source reachability with $\Otilde{m}$
  work and $\Otilde{n^{2/3}}$ span, with high probability.
\item (\secref{tree}.) An extension that finds a
  \emph{directed spanning tree} of $G$ rooted at source $s$, i.e., a
  tree rooted at $s$ including only arcs in $G$ and containing all
  vertices reachable from $s$.
\end{closeitemize}
Applying existing reductions yields the following Las Vegas randomized
parallel algorithms, both with $\Otilde{m}$ work and $\Otilde{n^{2/3}}$
span with high probability:
\begin{closeitemize}
\item An algorithm that identifies and sorts the strongly connected
  components of the graph.  (Use the new reachability algorithm in
  Schudy's algorithm~\cite{Schudy08}.)
\item 
  An algorithm that finds a directed ear decomposition of any strongly
  connected graph.  (Use the new directed spanning tree algorithm with
  Kao and Klein's algorithm~\cite{KaoKl90}.)
\end{closeitemize}

\begin{algorithm}[t]
\small
\caption{Sequential algorithm for shortcutting}\alglabel{seq1}
\SetKwFunction{foo}{SeqSC1}
\Indm\foo{$G=(V,E)$}\\
\Indp
  \nl\lIf{$V = \emptyset$}{\KwRet{$\emptyset$}}
  \nl select a pivot $x\in V$ uniformly at random\\
  \nl let $\Succ{}{}$ denote the set of vertices reachable from $x$\\
  \nl let $\Pred{}{}$ denote the set of vertices that can reach $x$ \\
  \nl $S \gets \set{(x,v)| v\in \Succ{}{}} \cup
      \set{(u,x)|u\in\Pred{}{}}$ \tcp*{add shortcuts to/from vertices\\
    having paths from/to $x$, resp.\ }
  \nl $V_B \gets \Succ{}{} \cap \Pred{}{}$ ; \hspace{1em}
   $V_S \gets \Succ{}{} \backslash V_B$ ; \hspace{1em}
   $V_P \gets \Pred{}{} \backslash V_B$ ; \hspace{1em}
   $V_R \gets V \backslash (V_B \cup V_S \cup V_P)$\\ 
  \nl \Return $S \cup \text{\foo{$G[V_S]$}} \cup \text{\foo{$G[V_P]$}}
  \cup \text{\foo{$G[V_R]$}}$
\end{algorithm}

\subsection{Algorithm and Analysis Overview}
  
The sequential algorithm is simple enough that the main
subroutine is given immediately.  (See also \algref{seq1}.)  The
algorithm is recursive.  First select a random vertex $x$, called the
\defn{pivot}.  Perform a graph search forwards and backwards from $x$
to identify subsets $\Succ{}{}$ and $\Pred{}{}$, respectively.  Add
shortcuts from $\Pred{}{}$ to $x$ and from $x$ to $\Succ{}{}$.  The
graph is next partitioned into four subsets of vertices: $V_B$
comprises vertices in both $\Succ{}{}$ and $\Pred{}{}$, $V_S$
comprises vertices in $\Succ{}{}$ but not $\Pred{}{}$, $V_P$ comprises
vertices in $\Pred{}{}$ but not $\Succ{}{}$, and $V_{R}$ is all
remaining vertices.  Recurse on the subgraphs induced by the three
subsets $V_P$, $V_S$, and $V_R$.

Ignoring the addition of shortcuts, \algref{seq1} is essentially the
divide-and-conquer algorithm for topologically sorting the strongly
connected components of a graph described by Coppersmith et
al.~\cite{CoppersmithFlHe05}. Their proof thus carries over to prove
that this algorithm runs in $O(m\log n)$ sequential time in
expectation, but they do not address the diameter problem.

What should be surprising is that \algref{seq1} reduces the graph's
diameter, captured by the following lemma.  The proof is not obvious
and leverages new
insights and techniques.
\begin{lemma}\lemlabel{seqmain}
  Let $G=(V,E)$ be a directed graph, and consider any vertices $u,v\in
  V$ such that there exists a directed path from $u$ to $v$ in $G$.  Let $S$
  be the shortcuts produced by an execution of \algref{seq1}.  Then
  with probability at least $1/2$ (over random choices in \algref{seq1}), there exists a directed path from $u$ to $v$
  in $G_S = (V,E\cup S)$ consisting of $O(n^{2/3}\log^{4/3} n)$ arcs. 
\end{lemma}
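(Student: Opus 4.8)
The plan is to track the fate of a fixed witness path $P$ from $u$ to $v$ through the recursion tree of \algref{seq1}. Fix a shortest $u$-to-$v$ path $P = (u = w_0, w_1, \ldots, w_\ell = v)$ in $G$. At the top-level call, the pivot $x$ partitions $V$ into $V_B, V_S, V_P, V_R$. First I would establish the key structural fact: along $P$, the sequence of ``which part $w_i$ lands in'' is highly constrained. Since $V_P$ consists of vertices that can reach $x$ but are not reachable from $x$, once $P$ leaves $V_P$ it cannot return (a vertex after a non-$V_P$ vertex on $P$ would be reachable from something reachable-from-nothing-reaching-$x$, so it can't be in $V_P$ unless... — this needs care, but the intended picture is that $P$ passes through $V_P$, then possibly $V_B$, then possibly $V_S$, in that block order, with $V_R$ vertices interspersed only in a way that doesn't interact). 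The crucial consequence: if $P$ ever hits $V_B$, then its $V_B$-portion can be replaced by the two shortcuts $(w_i, x)$ and $(x, w_j)$ where $w_i$ is the last $\Pred{}{}$-vertex and $w_j$ the first $\Succ{}{}$-vertex on $P$, collapsing a whole stretch of $P$ to $2$ arcs. So after one level, either the entire remaining problem lives inside a single recursive subproblem (on $G[V_S]$, $G[V_P]$, or $G[V_R]$), or $P$ splits into a bounded number (at most three: a prefix in $V_P$, two shortcut arcs, a suffix in $V_S$) of sub-paths each living in a single child, and we recurse on each piece independently.

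Second, I would set up the recursion for the path length. Let $f(n)$ denote (a high-probability or constant-probability bound on) the worst-case shortcutted distance between a connected pair in an $n$-vertex instance. The split above gives roughly $f(n) \le f(n_P) + 2 + f(n_S) + (\text{contribution from } V_R)$ where $n_P + n_S + n_B + n_R = n$; but the $V_R$ piece is a fresh independent subproblem and the $V_B$ piece is gone, so really we recurse on at most the pieces in $V_P$ and $V_S$ plus separately any $V_R$ tail. The win comes from the randomness of $x$: with good probability the pivot splits the relevant vertex set reasonably, so the recursion depth along any single strand of $P$ is $O(\log n)$ — but naively summing $O(\log n)$ shortcut-pairs over a path that itself fragments gives only $\mathrm{polylog}$ savings, not $n^{2/3}$. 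The real source of the $n^{2/3}$ must be a more clever accounting: I would guess one bounds the number of ``large'' recursive subproblems (those of size $> n^{1/3}$, say) that $P$ threads through, showing it is $\Otilde{n^{2/3}}$ because each consumes $\Omega(n^{1/3})$ distinct vertices of $P$ (or of the ambient set), while within any subproblem of size $\le n^{1/3}$ the raw path length is trivially $\le n^{1/3}$. Balancing the threshold at $n^{1/3}$ against a bound like $n/n^{1/3} = n^{2/3}$ on the number of large pieces, with the $\log^{4/3}$ factors coming from the $O(\log n)$ recursion depth raised to the appropriate power, yields the claimed $O(n^{2/3}\log^{4/3} n)$.

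Third, I would handle the probability bookkeeping. Each ``good split'' event at a node succeeds with constant probability over the choice of $x$ there, and we need all the splits along the path's recursion strands to behave — naively a union bound over $\Otilde{n^{2/3}}$ bad events with constant failure probability is hopeless, so instead I expect the argument to be: define a potential (e.g., the size of the current vertex set, or the remaining path length) and show that in expectation it shrinks geometrically per level along any strand, then apply a Markov/Chernoff-type argument to get that with probability $\ge 1/2$ the total accumulated path length is $O(n^{2/3}\log^{4/3}n)$. This is the delicate part: one wants a supermartingale or a direct expectation-then-Markov bound on $\sum(\text{lengths of surviving raw pieces}) + 2\cdot(\text{number of shortcut pairs used})$, showing its expectation is $\Otilde{n^{2/3}}$ and hence it exceeds twice that with probability $< 1/2$.

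\textbf{Main obstacle.} The hard part is not the shortcut-collapsing observation (which is clean) but the quantitative recursion: showing that a random pivot causes the path to fragment in a way that simultaneously (i) keeps the number of shortcut arcs small and (ii) keeps the raw residual path length small, with the two bounds balancing at $n^{2/3}$. Getting the exponent right — rather than a weaker bound like $\sqrt{n}\cdot\mathrm{polylog}$ or $n^{3/4}$ — and getting the constant success probability without a naive union bound over exponentially many recursion nodes is where the ``new insights'' the lemma statement alludes to must do their work. I would expect the proof to introduce an auxiliary quantity measuring how many recursive subproblems of each size scale the path visits, and to argue about its expectation carefully, possibly conditioning level-by-level down the recursion tree.
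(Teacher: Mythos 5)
Your setup (tracking a fixed witness path through the recursion, noting that a bridge pivot collapses the subpath to two shortcut arcs and that otherwise the path splits into a bounded number of pieces, one per child subproblem) matches the paper's framework (\lemref{rewrite} and \lemref{tree}), though the paper's case analysis is slightly different from yours: a pivot that is a bridge of the path terminates that subpath entirely (since $v_0\preceq x \preceq v_\ell$, no prefix/suffix recursion is needed), and the genuine two-way splits occur when the pivot is an ancestor (split across $V_R$ and $V_S$) or descendent (split across $V_P$ and $V_R$), not a $V_P$/$V_S$ split. The real problem is that the quantitative core of the argument — the part you explicitly flag as the ``main obstacle'' and then guess at — is missing, and your guess is not how the bound is obtained and would not work as stated. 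Your proposed accounting (threshold subproblems at size $n^{1/3}$, bound the number of large subproblems the path threads through by $n^{2/3}$, and use raw length $\leq n^{1/3}$ inside small ones) controls neither the number of fragments the path breaks into (which a priori could be $2^{\Theta(\log n)}$, each fragment costing concatenation and shortcut arcs) nor the total residual length summed over many small subproblems, and it makes no use of the randomness of the pivot, which is where all the leverage is.

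The paper's actual mechanism is different. The key new insight is an antisymmetry argument (\lemref{asymmetry}): if the pivot is a uniformly random path ancestor, then the ``preserves'' relation among ancestors is antisymmetric, so the expected number of ancestors that remain path-active in the children is less than half. Averaging over whether the pivot is an ancestor, bridge, or descendent gives a per-level expected decay of the number of path-active vertices by $3/4$ (\lemref{nodereduction}) — but that only yields $O(n^{1/\lg(8/3)})\approx O(n^{0.707})$. To reach $2/3$ the paper introduces the imbalance potential $\phi = C_\phi\sqrt{(\alpha+\beta)(\delta+\beta)}+(\alpha+\beta+\delta)$, whose geometric-mean term decays by a factor $\approx 1/\sqrt{2}$ per level (\lemref{tightnode}). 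The probability bookkeeping is handled not by a union bound or a bespoke supermartingale but by Karp's probabilistic recurrence theorem (\thmref{karp}), which converts the expected geometric decay into: with probability $\geq 1/2$, after $r+O(1)$ levels the potential is at most $\approx (1/\sqrt{2})^r\, n\log^2 n$. Since the potential upper-bounds the total residual subpath length and \lemref{tree} bounds the fragmentation cost by $O(2^r)$, balancing $2^r$ against $(1/\sqrt 2)^r n\log^2 n$ at $r=(2/3)(\lg n + 2\lg\lg n)+\Theta(1)$ gives the stated $O(n^{2/3}\log^{4/3} n)$. None of these three ingredients — the asymmetry lemma, the imbalance potential giving decay $1/\sqrt 2$ rather than $3/4$, and the use of Karp's theorem — appears in your proposal, so the proposal as written does not establish the lemma.
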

\noindent As a corollary, through a simple application of a
Chernoff bound and union bound across $\leq n^2$ related pairs, the
union of shortcuts across $O(\log n)$ independent executions of
\algref{seq1} reduces the diameter of the graph to $\Otilde{n^{2/3}}$
with high probability.

\paragraph{Unusual aspects and insight.}  The analysis focuses on shortcutting a
particular path.  But unlike most divide-and-conquer analyses, the
division step here does not seem to effect progress.  Partitioning a
graph is good for reducing the problem size (which is what Coppersmith
et al.~\cite{CoppersmithFlHe05} leverage), but it is not good for
preserving paths --- and once vertices fall in different subproblems,
there can be no subsequent shortcuts between them.  This feature is
likely why previous algorithms, such as UY~\cite{UllmanYa91}, perform
independent searches on the original graph.

A key insight in the analysis is that the partitioning step also
reduces by a constant factor the number of vertices that could cause
the path to split again later.  In doing so, the probability of splitting
the path goes down, and hence the probability of shortcutting it goes
up.  The end effect is that the path is likely to be significantly
shortcutted before it is divided into too many pieces.

The proof of this filtering insight (\lemref{asymmetry}) leverages
antisymmetric relationships between certain vertices.  Interestingly,
the lack of symmetry in directed graphs is exactly the feature that
makes good parallel algorithms for digraphs so elusive, but here
asymmetry is crucial to the proof.

\paragraph{Building a parallel algorithm.} The main obstacle to
parallelizing \algref{seq1} is the graph searches employed to find
$\Succ{}{}$ and $\Pred{}{}$.  In fact, these searches are exactly the
single-source reachability problem that we want to solve.  The obvious
solution to try is to instead limit the searches to a distance of
$\Otilde{n^{2/3}}$, but unfortunately doing so causes other problems.
The parallel algorithm and the analysis are thus more involved.
\secref{parallel} provides a sequential algorithm with
distance-limited searches.  Given that, the parallel implementation
(\secref{details}) is straightforward.

\section{Preliminaries}\seclabel{prelim}

This section provides definitions, notations, and the main
probabilistic tools used throughout.

The subgraph of $G=(V,E)$ induced by vertices $V'\subseteq V$ is
denoted by $G[V']$.

If there is a directed path (possibly empty) from $u$ to $v$ in
digraph $G=(V,E)$, then \defn{$u$ precedes $v$} and \defn{$v$ succeeds
  $u$}, denoted $u\preceq v$.  We say also that $u$ \defn{can reach}
$v$ and that \defn{$v$ can be reached by $u$}.  If $u \preceq v$
and/or $v \preceq u$, then $u$ and $v$ are \defn{related}; otherwise
they are \defn{unrelated}.  The \defn{successors} or \defn{forward
  reach of $x$} is the set of nodes
$\Succ{G}{x} = \set{v | x \preceq v}$.  The \defn{predecessors} or
\defn{backwards reach of $x$} is the set
$\Pred{G}{x} = \set{u | u\preceq x}$.



A \defn{shortcut} is any arc $(u,v)$ such that $u\preceq v$ in $G$.

\paragraph{Paths and nonstandard notation.}  The analysis considers
paths as well as the relationships between paths and vertices.  A path
$P = \ang{v_0,v_1,\ldots,v_\ell}$ is denoted by the sequence of its
constituent vertices, with the arcs between consecutive pairs implied.
The first and last vertex of the path are denoted by $\head{P}$ and
$\tail{P}$, and the \defn{length} of the path, denoted $\length{P}$,
is the number of arcs.  For the path $P$ given, $\head{P} = v_0$,
$\tail{P}=v_\ell$, and $\length{P} = \ell$.  Two (possibly empty)
disjoint paths $P_1$ and $P_2$ may be concatenated, denoted
$P_1\concat P_2$, as long as the arc $(\tail{P_1},\head{P_2})$ exists.
\defn{Splitting a path $P$ into $k$ pieces} means partitioning it into
subpaths $P_1,P_2,\ldots,P_k$ such that
$P = P_1 \concat P_2 \concat \cdots \concat P_k$.

A vertex $x$ and a path $P$ can be compared in the following ways.
The vertex $x$ is a \defn{bridge of $P$} if $x$ can reach and can be
reached by vertices on the path, i.e., if there exists $v_i,v_j \in P$
such that $v_i \preceq x$ and $x \preceq v_j$. Note that every vertex
on the path is a bridge.  A vertex $x$ is an \defn{ancestor of $P$} if
$x$ can reach some vertex on the path, but $x$ cannot be reached by
any vertex on the path.  Similarly, $x$ is a \defn{descendent of $P$}
if $x$ can be reached by some vertex on the path, but $x$ cannot reach
any vertex on the path.  The set of all bridges, ancestors, and
descendents of $P$ are denoted $\Bridge{G}{P}$, $\Anc{G}{P}$, and
$\Desc{G}{P}$, respectively.  Note that these sets are all disjoint by
definition.  If a vertex $x$ is a bridge, ancestor, or descendent of
the path $P$, then $x$ and $P$ are \defn{related}.  Otherwise, they
are \defn{unrelated}.


\paragraph{Tools.}
The analysis employs one relatively uncommon probabilistic tool --- a special
case of Karp's~\cite{Karp94} probabilistic recurrence relations,
restated next. Roughly speaking, this theorem relates two processes:
(1) a random process where in each round the problem ``size'' ($\Phi$
in the theorem) reduces by a constant factor in expectation, and (2) a
deterministic process where the problem size reduces by exactly that
constant factor.  The theorem says that if the random process uses a
few extra rounds, it is very likely to experience at least the size
reduction of the deterministic process.
\begin{theorem}[Restatement of special case of Theorem~1.3\footnote{Karp states the theorem very differently. The
  process described here corresponds to his recurrence $T(I)
  = a(\Phi(I)) + T(h(I))$, where $a(x) = 0$, $x < d$ and $a(x)=1$,
  $x\geq d$, for $d = p^k \cdot \Phi(I_0)$. This recurrence counts the
  number of steps to reach the target size. (Note that $d$ depends only on
  the initial instance and is constant in the recurrence.) The deterministic
  counterpart is $\tau(x) = a(x) + \tau(px)$, which has solution $u(\Phi(I_0)) =
  \ceil{\log_{1/p}(\Phi(I_0)/d)} \leq  k+1$.} in
  \cite{Karp94}]\thmlabel{karp}
  Consider a random process of the following form.  Let $\cal I$
  denote the set of all problem instances, and let $I_0\in \cal I$
  denote the initial problem instance.  In the $r$th round, the
  process makes random choices and transforms the instance from 
  $I_{r-1}$ to $I_r$ (a random variable).  Let
  $\Phi : {\cal I} \rightarrow \mathbb{R}$ be any function satisfying
  $0 \leq \Phi(I_{r}) \leq \Phi(I_{r-1})$ for all relevant $r\geq 1$ and all
  feasible sequences $I_0,I_1,I_2,\ldots$ of instance outcomes.

  Suppose there exists some constant $p<1$ such that
  $E[\Phi(I_{r}) | I_0,I_1,\ldots,I_{r-1}] \leq p \cdot
  \Phi(I_{r-1})$, and consider any integers $k\geq 0$ and $w\geq 0$.
  Then $\prob{\Phi(I_{k+w+2}) > p^k \cdot \Phi(I_0)} \leq p^w$.
\end{theorem}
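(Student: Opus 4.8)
The plan is to prove this directly by iterating the one-step expectation bound and then applying Markov's inequality; in this special case Karp's general machinery is not needed, and the argument in fact establishes a slightly stronger bound. First I would handle the degenerate case $\Phi(I_0)=0$: the hypothesis $0\le\Phi(I_r)\le\Phi(I_{r-1})$ forces $\Phi(I_r)=0$ for every feasible $r$, so $\{\Phi(I_{k+w+2})>p^k\Phi(I_0)\}=\{\Phi(I_{k+w+2})>0\}$ is the empty event and the claimed probability is $0\le p^w$. Similarly I would observe that we may assume $0\le p<1$ without loss of generality: whenever $\Phi(I_{r-1})>0$ with positive probability, the hypothesis $E[\Phi(I_r)\mid I_0,\ldots,I_{r-1}]\le p\,\Phi(I_{r-1})$ together with $\Phi\ge 0$ already forces $p\ge 0$, and the case $p=0$ only makes the statement easier. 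I would also note that, consistent with the hypotheses, one may take the process to be defined for all rounds $0,1,2,\ldots$ (if the process terminates it can only do so once $\Phi$ has reached $0$, since otherwise the expectation hypothesis would be violated at the next round).

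The core step is to show, by induction on $r$, that $E[\Phi(I_r)]\le p^r\,\Phi(I_0)$ for all $r\ge 0$. The base case $r=0$ is the identity $E[\Phi(I_0)]=\Phi(I_0)$, since $I_0$ is a fixed instance. For the inductive step I would apply the tower property of conditional expectation followed by the per-round hypothesis and the inductive hypothesis:
\[
  E[\Phi(I_r)] = E\big[\,E[\Phi(I_r)\mid I_0,\ldots,I_{r-1}]\,\big]
  \le E[\,p\,\Phi(I_{r-1})\,] = p\,E[\Phi(I_{r-1})] \le p\cdot p^{r-1}\Phi(I_0) = p^{r}\Phi(I_0).
\]

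Finally, set $t=k+w+2$. In the nondegenerate case $p^k\Phi(I_0)>0$, so since $\Phi(I_t)\ge 0$, Markov's inequality together with the bound just proved gives
\[
  \prob{\Phi(I_t) > p^k\,\Phi(I_0)} \le \frac{E[\Phi(I_t)]}{p^k\,\Phi(I_0)} \le \frac{p^t\,\Phi(I_0)}{p^k\,\Phi(I_0)} = p^{t-k} = p^{w+2} \le p^w,
\]
where the last inequality uses $0\le p<1$, hence $p^2\le 1$. This is exactly the asserted bound (indeed with an extra factor $p^2$ to spare). The only real subtleties here are the degenerate cases and the interpretation of the process at termination, handled above; once those are dispatched, the argument is immediate, so there is no substantive obstacle to overcome.
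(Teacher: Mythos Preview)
Your proof is correct. The paper does not give its own proof of this statement; it simply cites Karp's general theorem on probabilistic recurrences and explains in a footnote how the present formulation maps to Karp's framework. Your argument takes a genuinely different and more elementary route: rather than invoking Karp's general machinery, you observe that in this special case (a constant multiplicative shrinkage factor $p$) one can directly iterate the one-step expectation bound via the tower property to get $E[\Phi(I_r)]\le p^r\Phi(I_0)$, and then apply Markov's inequality. This is self-contained and, as you note, even yields the stronger bound $p^{w+2}$ in place of $p^w$. The paper's route via Karp has the advantage that Karp's theorem applies to more general per-step reductions (where the expected shrinkage may depend nonuniformly on the current size), but for the constant-$p$ case actually used throughout the paper your direct argument is entirely sufficient and arguably preferable.
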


\section{Sequential Diameter Reduction}\seclabel{sequential}

This section focuses on proving the following theorem.  The unmodified
$G$ is used to refer to subgraphs $G=(V,E)$.  When the original input
graph is intended, $\hat{G}$ is employed instead.  Throughout, $x$
denotes the pivot, and the vertex sets $V_B$, $V_S$, $V_P$, and $V_R$
are used with meaning as setup in \algref{seq1}.
\begin{theorem}\thmlabel{seqfull}
  There exists a randomized sequential algorithm that takes as input a directed
  graph $\hat{G}=(\hat{V},\hat{E})$ and has the following guarantees, where
  $n=\card{\hat{V}}$, $m=\card{\hat{E}}$, and without loss of generality
  $m\geq n/2$: (1) the running time is $O(m\log^2n)$, (2) the
  algorithm produces a size-$O(n\log^2n)$ set $S^*$ of shortcuts, and
  (3) with high probability\footnote{With high probability means the
    failure probability can be driven down to $1/n^c$ for any constant
    $c$ by increasing the constants hidden inside the big-$O$ notation
    (specifically the running time and number of shortcuts here).}, the
  diameter of $G_{S^*}=(V,E\cup S^*)$ is $O(n^{2/3}\log^{4/3} n)$.
\end{theorem}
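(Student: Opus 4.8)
The plan is to bootstrap from Lemma 2.3 (which is about a single run of \algref{seq1}) to the full high-probability statement in three conceptually separate steps, and then account for the running time and number of shortcuts. First, note that \algref{seq1} as written already runs in $O(m\log n)$ time in expectation — this is the Coppersmith–Flajolet–Hermelin–Halperin analysis referenced in the overview — but the theorem wants a \emph{worst-case} $O(m\log^2 n)$ bound, so I would wrap the call in a standard "restart on overrun" device: run \algref{seq1}, abort it if it exceeds $c\, m\log n$ work for a suitable constant $c$, and retry; by Markov's inequality each attempt succeeds with probability $\ge 1/2$, so $O(\log n)$ attempts suffice with high probability, contributing the extra logarithmic factor. (Alternatively, one can run the $O(\log n)$ independent copies needed below and abort the whole batch on overrun.)

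Second, amplify Lemma 2.3. A single execution shortcuts a fixed related pair $(u,v)$ down to a path of $O(n^{2/3}\log^{4/3}n)$ arcs with probability only $\ge 1/2$. Take $t = \Theta(\log n)$ independent executions of \algref{seq1}, producing shortcut sets $S_1,\dots,S_t$, and let $S^* = \bigcup_i S_i$. For any single related pair $(u,v)$, each execution independently "succeeds" (gives a short $u\rightsquigarrow v$ path using its own shortcuts) with probability $\ge 1/2$, so the probability that \emph{all} $t$ fail is $\le 2^{-t} = n^{-\Theta(1)}$; and since shortcuts only ever help, if execution $i$ succeeds then the short path exists already in $G_{S_i} \subseteq G_{S^*}$. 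Now union-bound over all $\le n^2$ related (ordered) pairs $(u,v)$: choosing the constant in $t$ large enough makes the total failure probability $\le n^{-c}$ for any desired $c$. On the complement event, \emph{every} related pair is connected by a path of $O(n^{2/3}\log^{4/3}n)$ arcs in $G_{S^*}$, which is exactly the statement that the diameter of $G_{S^*}$ is $O(n^{2/3}\log^{4/3}n)$ (unrelated pairs impose no constraint on diameter). This is item (3).

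Third, the easy accounting. Each execution of \algref{seq1} adds at most $|\Succ{}{}| + |\Pred{}{}| \le 2|V|$ shortcuts at the top level, and the recursion partitions $V$ into disjoint pieces $V_S, V_P, V_R$, so summing the bound $2|V'|$ over all recursive calls at a fixed recursion depth gives $O(n)$; the recursion depth is $O(\log n)$ with high probability (random-pivot divide-and-conquer), so one execution produces $O(n\log n)$ shortcuts, hence $|S^*| = O(n\log^2 n)$ over the $t=O(\log n)$ runs — item (2). The running time is $O(m\log n)$ per run (via the restart device just described, or directly in expectation) times $O(\log n)$ runs, giving $O(m\log^2 n)$ — item (1). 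I should also fold the "recursion depth $O(\log n)$ w.h.p." event into the union bound, or equivalently truncate the recursion at depth $\Theta(\log n)$ and argue the truncation affects almost nothing; either way the bookkeeping is routine.

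The genuine content lies entirely in Lemma 2.3, which is assumed here; conditioned on it, Theorem 2.4 is a packaging result. The main thing to be careful about is the logical order of the amplification: one must fix the pair $(u,v)$ first, argue the $2^{-t}$ bound for that pair using independence across executions, and only then union-bound over pairs — reversing the quantifiers (trying to argue "one execution shortcuts all pairs simultaneously") would fail, since Lemma 2.3 gives only a per-pair guarantee. A secondary subtlety is ensuring the $O(\log n)$ bounds on recursion depth and on the per-run running time hold with high enough probability to survive the same union bound; this is handled by standard concentration for randomized divide-and-conquer plus the restart trick, so I would state it briefly rather than belabor it.
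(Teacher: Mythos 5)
Your amplification step (fix a related pair, argue each run independently succeeds with probability $\geq 1/2$, take $\Theta(\log n)$ runs, then union-bound over the $\leq n^2$ pairs) is exactly the paper's argument. The gap is in the worst-case accounting, and it is centered on one false claim: that the recursion depth of \algref{seq1} is $O(\log n)$ with high probability. It is not. The branch $G[V_R]$ shrinks only by the pivot's related set, so on a graph with no arcs (or any graph with tiny reachability sets) the $V_R$ chain has depth $\Theta(n)$ deterministically. This is precisely why the paper replaces \algref{seq1} by \algref{seq2}: the $G[V_R]$ recursion is converted into a loop, and an explicit base case truncates the recursion at depth $\lg n$. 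With that modification, each shortcut is charged to a vertex removed from the loop's graph, giving $O(n)$ shortcuts per level and $O(m)$ search work per level \emph{in the worst case}, and the hard cutoff gives $O(\log n)$ levels (\lemreftwo{seqshortcuts}{seqruntime}); no restart device or expected-time analysis is needed. Your fallback remark ``truncate the recursion at depth $\Theta(\log n)$ and argue the truncation affects almost nothing'' is the right instinct but is not routine as stated: if you truncate \algref{seq1}'s recursion with the $V_R$ branches counted toward depth, a single node of the \emph{flattened} path-relevant tree can span arbitrarily many $V_R$ levels, so the cutoff can fire before the flattened tree reaches the depth $\approx (2/3)\lg n$ that the proof of \lemref{seqmain} requires. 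The paper needs (and proves) a small lemma saying that, once the $V_R$ descent is a loop, the first $\lg n-1$ flattened levels lie strictly below the depth-$\lg n$ cutoff, so the diameter guarantee survives truncation.

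Two further consequences of the same issue. First, your work cap of $c\,m\log n$ per attempt only bounds the number of shortcuts per run by $O(m\log n)$, which is weaker than the claimed $O(n\log n)$ when $m\gg n$; since your structural argument for $O(n\log n)$ shortcuts rests on the false depth claim, item (2) is not established (you could patch it by also aborting when the shortcut count exceeds $c'n\log n$, using the expected bound plus Markov, but you would then have to argue that expected bound without the depth claim). Second, a minor constants issue in the restart device: if Markov only guarantees overrun probability $\leq 1/2$, the joint event ``the run finishes within budget \emph{and} shortcuts the fixed pair'' could have probability $0$ as stated, since the kept runs are conditioned on the budget event; you need the budget constant large enough that the overrun probability is at most, say, $1/4$, so each kept run still succeeds for the pair with constant probability. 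These are fixable, but the missing piece of substance is the algorithmic modification (or an equivalent argument) that makes the per-run time and shortcut bounds worst case without damaging \lemref{seqmain}.
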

\noindent
As mentioned in \secref{intro}, the algorithm entails taking the union
of shortcuts from $\Theta(\log n)$ runs of \algref{seq1}.  To make the
running time worst case, there will be one minor modification
introduced later: namely, an
extra base case to truncate the recursion. 

\secreftwo{seqover}{seqsimple} set up the main ideas for proof of
\lemref{seqmain} but instead proves a weaker distance bound of
$O(n^{1/\lg(8/3)}) = O(n^{0.7067})$.  \secref{seqbetter} tightens the
distance bound to $O(n^{2/3}\log^{4/3} n)$, thereby proving \lemref{seqmain}.  It
is worth emphasizing that \secreftwo{seqsimple}{seqbetter} use exactly
the same algorithm --- the only difference is the details of the
analysis.  Finally, \secref{seqfull} completes the proof of
\thmref{seqfull} by analyzing the running time and number of
shortcuts.

\subsection{Setup of the Analysis}\seclabel{seqover}

Fix any simple path $\hat{P}=\ang{v_0,\ldots,v_\ell}$ in the graph up
front.  By partitioning the graph, each call to \foo also splits the
path into subpaths.  The analysis tracks a collection of calls whose
subgraphs contain subpaths of $\hat{P}$.  

More precisely, a \defn{path-relevant subproblem}, denoted by pair
$(G,P)$, corresponds to a call \foo{$G$} and an associated nonempty
subpath $P$ of $\hat{P}$ to shortcut.  The starting subproblem is
$(\hat{G},\hat{P})$.  The path-relevant subproblems are most
subproblems for which $G \cap \hat{P} \neq \emptyset$, except with a
base case occurring when a subpath $P$ is shortcutted to two hops ---
all recursive subproblems arising beyond that point are \emph{not}
path relevant.  The following lemma characterizes the path-relevant
subproblems that arise when executing the call \foo{$G$} with
associated path $P$.

It is worth emphasizing that the algorithm has no knowledge of the
path $P$; associating the subpath with the subproblem is an analysis
tool only.  


\begin{lemma}\lemlabel{rewrite}
  Let $P=\ang{v_0,\ldots,v_\ell}$ be a nonempty path in $G = (V,E)$,
  and consider the effect of a single call \foo{$G$} in \algref{seq1}.
  The following are the outcomes depending on pivot~$x$:
  \begin{closeenum}
    \item (Base case.)  If $x$ is a bridge of $P$, then the shortcuts
      $(v_0,x)$ and $(x,v_\ell)$ are created.  There are no
      path-relevant subproblems.  
    \item If $x$ and $P$ are unrelated, then $P$ is entirely contained
      in $G[V_R]$; the one path-relevant subproblem is thus $(G[V_R],P)$.
    \item If $x$ is an ancestor of $P$, then $P = P_1 \concat P_2$ for
      $P_1 = P \cap G[V_R]$ and $P_2 = P \cap G[V_s]$.  There are
      thus at most two path relevant
      subproblems: if $P_1$ is nonempty, $(G[V_R],P_1)$ is path
      relevant; if $P_2$ is nonempty, $(G[V_S],P_2)$
      is path relevant. 
    \item If $x$ is a descendent of $P$, then $P = P_1 \concat P_2$ for
      $P_1 = P \cap G[V_P]$ and $P_2 = P \cap G[V_R]$.  This case
      gives rise to at most two path-relevant subproblems, as above.
    \end{closeenum}
\end{lemma}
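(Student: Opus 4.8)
The plan is a direct case analysis over the four possibilities for how the pivot $x$ relates to $P$, driven entirely by the definitions in \secref{prelim} together with the set definitions in \algref{seq1} and by which of $V_B, V_S, V_P, V_R$ each vertex $v_i$ of $P$ lands in. The only structural fact used is that $P$ is a path, so $v_i \preceq v_j$ whenever $i \le j$; everything else is bookkeeping.

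First I would record, for an arbitrary $v_i$ on $P$, that $v_i \in \Succ{}{}$ iff $x \preceq v_i$ and $v_i \in \Pred{}{}$ iff $v_i \preceq x$, and hence (using $V_B = \Succ{}{}\cap\Pred{}{}$, $V_S = \Succ{}{}\setminus V_B$, $V_P = \Pred{}{}\setminus V_B$, $V_R = V\setminus(\Succ{}{}\cup\Pred{}{})$) that $v_i\in V_R$ exactly when $v_i$ and $x$ are unrelated. With this, the \emph{unrelated} case is immediate: every $v_i$ lies in $V_R$, and since $G[V_R]$ is an induced subgraph containing every vertex (hence every arc) of $P$, all of $P$ survives in $G[V_R]$, giving the single subproblem $(G[V_R],P)$. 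For the \emph{ancestor} case, no vertex of $P$ can reach $x$, so no $v_i$ lies in $\Pred{}{}$ and therefore none lies in $V_B\cup V_P$; moreover $\set{i : v_i \in V_S}$ is upward closed, since $x\preceq v_i$ and $i\le j$ force $x\preceq v_i\preceq v_j$. Thus there is a threshold index $a$ with $v_0,\dots,v_{a-1}\in V_R$ and $v_a,\dots,v_\ell\in V_S$, so $P = P_1\concat P_2$ with $P_1 = P\cap G[V_R]$ a (possibly empty) prefix and $P_2 = P\cap G[V_S]$ a nonempty suffix; the joining arc $(v_{a-1},v_a)$ is an arc of $P$, so the concatenation is legitimate, and since $P$ avoids $V_P\cup V_B$ these are the only recursive calls meeting $P$. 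The \emph{descendent} case is the mirror image, interchanging $\Succ{}{}$ with $\Pred{}{}$, $V_S$ with $V_P$, and prefix with suffix.

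For the \emph{base case}, where $x$ is a bridge of $P$, I would pick $v_i,v_j\in P$ with $v_i\preceq x\preceq v_j$; then along $P$ we have $v_0\preceq v_i$ and $v_j\preceq v_\ell$, so $v_0\preceq x$ and $x\preceq v_\ell$, i.e.\ $v_0\in\Pred{}{}$ and $v_\ell\in\Succ{}{}$, which is precisely the condition under which the shortcut set $S$ of \algref{seq1} includes $(v_0,x)$ and $(x,v_\ell)$. Since $\ang{v_0,x,v_\ell}$ is then a two-hop path from $\head{P}$ to $\tail{P}$, the stopping convention set up in \secref{seqover} declares $P$ finished and no path-relevant subproblem descends from this call. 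Finally I would note that the algorithm recurses only on $V_S, V_P, V_R$ (never on $V_B$), so in all four cases the listed subproblems are exactly the recursive calls whose subgraph still contains an unfinished fragment of $P$.

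The argument is essentially mechanical, and the one point that deserves care is exactly the base case: fragments of $P$ may be scattered across $V_S$, $V_P$, $V_R$, and $V_B$, yet the lemma asserts there are \emph{no} path-relevant subproblems. The resolution is definitional rather than combinatorial --- ``path-relevant'' is defined relative to a subpath still awaiting shortcutting, and once the two-hop route $\ang{v_0,x,v_\ell}$ exists there is nothing left to do for this $P$ --- so the real work is making sure the statement of \lemref{rewrite}, the definition of path-relevant subproblem, and the two-hop stopping rule are phrased consistently, since the later accounting of how $\hat P$ fragments rests entirely on this convention.
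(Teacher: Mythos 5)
Your proposal is correct and follows essentially the same route as the paper: a definition-chasing case analysis in which path monotonicity ($v_i\preceq v_j$ for $i\le j$) yields a threshold index splitting $P$ into a contiguous piece in $V_R$ and a contiguous piece in $V_S$ (resp.\ $V_P$), with the bridge and unrelated cases immediate. The paper's proof only sketches the descendent case and leaves the rest to the reader, so your write-up is simply a fuller version of the same argument, including the (correct) observation that the ``no path-relevant subproblems'' clause in the bridge case is a matter of the two-hop stopping convention from \secref{seqover} rather than a combinatorial claim.
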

\begin{proof} The proof follows from the definitions. Consider for
  example the last case, that $x$ is a descendent of $P$. Then there
  is some latest vertex $v_k$ on the path such that $v_k \preceq x$.
  Then consider subpaths $\ang{v_0,\ldots,v_k}$ and
  $\ang{v_{k+1},\ldots,v_\ell}$. For all $v_i$ with $i\leq k$, we have
  $v_i \preceq v_k \preceq x$, and hence $P_1 = \ang{v_0,\ldots,v_k}$
  is entirely in $V_P$.  All $v_j$ with $j > k$ are unrelated to $x$
  and hence in~$V_R$.  
\end{proof}


Cases~3 and~4 seem like bad cases because the number of path-relevant
subproblems, and hence arcs in the final path, increases.
\secref{seqsimple} argues that these cases do make progress.

The path-relevant subproblems that arise during the execution of the
algorithm induce a \defn{path-relevant subproblem tree}, where each
node $s$ corresponds to a call of \foo on some path-relevant
subproblem $s = (G,P)$.  For the analysis, it is convenient to
consider the \defn{flattened path-relevant tree}, where each node
corresponding to case~2 in \lemref{rewrite} is merged with its only
child.  Viewed algorithmically, a node in the flattened path-relevant
tree corresponds to sampling multiple pivots $x$ (and discarding some
of the graph) until finally getting one that is related to the path
$P$.

The analysis considers levels in the flattened path-relevant tree in
aggregate, i.e., executing the algorithm in a breadth-first fashion.
The point is to later fit the analysis to \thmref{karp}.  Specifically, the
analysis consists of a sequence of rounds, where the instance $I_r$ in
round $r$ is the collection of subproblems defined by the nodes at
depth $r$ in the flattened path-relevant tree.  We have the following lemma
immediately.  All that remains is bounding the lengths (\secref{seqsimple}).

\begin{lemma}\lemlabel{tree}
  Consider any graph $\hat{G}=(\hat{V},\hat{E})$ and any path
  $\hat{P}$ from $u$ to $v$.  Consider an execution of \algref{seq1},
  let $S$ be the shortcuts produced, and let
  $\set{(G_1,P_1),\ldots,(G_k,P_k)}$ denote the set of path-relevant
  subproblems at level/depth $r$ in the flattened path-relevant
  tree. Then there is a $u$-to-$v$ path in
  $G_S = (\hat{V},\hat{E}\cup S)$ of length at most
  $2^r + 2^{r-1}+ \sum_{i=1}^k\length{P_i}$.
\end{lemma}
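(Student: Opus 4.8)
The plan is to prove the claim by induction on the round index $r$, directly following the structure of \lemref{rewrite}. The base case $r=0$ is trivial: the only node at depth $0$ is $(\hat G,\hat P)$, so $\sum_i \length{P_i} = \length{\hat P}$ is already the length of a $u$-to-$v$ path in $G_S$, and $2^0 + 2^{-1}$-style terms only help (the statement degenerates; one should take $2^{r-1}$ to mean $0$ when $r=0$, or simply note $\length{\hat P} \le 1 + 0 + \length{\hat P}$). The real content is the inductive step, where I would show how to stitch together the $u$-to-$v$ path for round $r-1$ into one for round $r$, paying a controlled number of extra arcs.

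First I would fix the induction hypothesis: there is a $u$-to-$v$ path $Q$ in $G_S$ of length at most $2^{r-1} + 2^{r-2} + \sum_{j} \length{P'_j}$, where the sum ranges over the path-relevant subproblems $(G'_j, P'_j)$ at depth $r-1$. The key structural fact is that $Q$ can be taken to be a concatenation, in order along $\hat P$, of (i) the subpaths $P'_j$ themselves and (ii) ``connector'' arcs between consecutive pieces — and moreover each connector arc is a shortcut in $S$, because whenever two subpaths got separated into different subproblems it was due to a pivot $x$ that is related to the path, so $x$ received shortcuts $(v_0,x),(x,v_\ell)$ (or the appropriate endpoint shortcuts) linking the tail of one piece through $x$ to the head of the next. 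So I would carry a slightly stronger induction hypothesis recording this ``alternating subpaths and shortcut connectors, in path order'' structure, since the bare length bound is not quite enough to push the induction through cleanly.

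Now for the step from round $r-1$ to round $r$: each depth-$(r-1)$ subproblem $(G'_j,P'_j)$ spawns its depth-$r$ children according to \lemref{rewrite}. In the base case (pivot is a bridge), $P'_j$ contributes no children but the shortcuts $(\head{P'_j},x)$ and $(x,\tail{P'_j})$ are created, so the segment of $Q$ equal to $P'_j$ can be replaced by the two-arc path $\head{P'_j} \concat x \concat \tail{P'_j}$; this replaces $\length{P'_j}$ by exactly $2$ in the length accounting. In cases~3 and~4 (ancestor/descendent), $P'_j = P_1 \concat P_2$ splits into at most two child subpaths, and the single arc $(\tail{P_1},\head{P_2})$ of $\hat P$ that formerly sat inside $P'_j$ now sits between two different subproblems; but that arc is an original arc of $\hat G$, so it is available in $G_S$, and we keep it as a connector. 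In case~2 it is already handled by flattening. So the total length after round $r$ is at most: (the old connector arcs, still $\le 2^{r-1}+2^{r-2}$ of them by hypothesis) $+$ (for each old piece, either it survives unchanged/split with $\length{}$ unchanged plus possibly one new connector arc, or it is a base case replaced by a $2$-arc path) $+ \sum_i \length{P_i}$ over the new depth-$r$ pieces. The number of newly introduced connector arcs is at most the number of depth-$(r-1)$ subproblems that split, which is at most the number of depth-$(r-1)$ subproblems, which is at most $2^{r-1}$ (since the flattened tree has branching factor $\le 2$); likewise base-case replacements cost $\le 2$ each but eliminate a $P'_j$. Bounding everything: new connectors $\le 2^{r-1}$, old connectors $\le 2^{r-1}+2^{r-2}$; but we must be careful that a depth-$(r-1)$ node that hits the base case also contributes $2$ arcs — I would fold those into the connector count by observing the two base-case arcs through $x$ can be charged as ``one connector into $x$ plus one out,'' so in aggregate the arc count is at most $2^r + 2^{r-1} + \sum_i \length{P_i}$, matching the claim. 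The bookkeeping that $2^{r-1} + (2^{r-1}+2^{r-2}) = 2^r + 2^{r-2} \le 2^r + 2^{r-1}$ closes the recursion.

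The main obstacle is the accounting of connector arcs: one must argue simultaneously that (a) every place where the path was cut across subproblems is bridged in $G_S$ either by an original arc of $\hat P$ or by the two pivot-shortcuts, and (b) the cuts introduced per round total at most $2^{r}$-ish, not more — which relies on the branching factor of the flattened path-relevant tree being at most $2$ and on each base-case node contributing a bounded-length replacement. I would therefore spend most of the write-up pinning down the stronger ``ordered alternation of subpaths and available connector arcs'' invariant, after which the geometric sum $2^r + 2^{r-1}$ falls out mechanically.
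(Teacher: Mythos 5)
Your construction of the witness path is the right one (and the same as the paper's): split $\hat{P}$ into the leaf pieces and the depth-$r$ pieces via \lemref{rewrite}, traverse each leaf piece with the two pivot shortcuts, keep each depth-$r$ piece intact, and join consecutive pieces by the original arc of $\hat{P}$ that sits between them. (Minor wobble: in your invariant you assert the connectors between split pieces are shortcuts through the pivot; they are not and need not be --- as you correctly use later, they are arcs of $\hat{P}$ itself, hence already in $\hat{E}$. Pivot shortcuts are only needed for base-case leaves.)

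The genuine gap is in the bookkeeping of the inductive step. You close the recursion with ``new arcs $\leq 2^{r-1}$, old arcs $\leq 2^{r-1}+2^{r-2}$, and $2^{r-1}+(2^{r-1}+2^{r-2}) \leq 2^r+2^{r-1}$,'' i.e.\ you charge at most one new arc per depth-$(r-1)$ node. But a depth-$(r-1)$ node that hits the base case contributes \emph{two} new arcs ($(\head{P'_j},x)$ and $(x,\tail{P'_j})$), and its subpath length simply leaves the $\sum_i\length{P_i}$ term, so nothing absorbs the second arc; relabeling the pair as ``one connector in, one connector out'' does not reduce the count. In the worst round (full binary tree down to depth $r-1$, all $2^{r-1}$ nodes then hitting the base case) the new arcs total $2^r$, and your induction would give $2^r + 2^{r-1} + 2^{r-2}$, exceeding the claimed $2^r+2^{r-1}$. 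The lemma is still true in that scenario --- the point is that when the round contributes many new arcs there must be many depth-$(r-1)$ nodes, which forces the \emph{old} extra-arc count to be well below $2^{r-1}+2^{r-2}$ --- but the bare numeric bound you carry as induction hypothesis cannot see this trade-off, so the per-level induction as stated does not close. The fix is what the paper does: prove by induction only the structural fact that the leaf pieces above level $r$ together with the depth-$r$ pieces partition $\hat{P}$, then count globally over the truncated tree. Since each node has at most two children, the leaves at depth $<r$ number at most $2^{r-1}$ (a Kraft-type argument: a leaf at depth $j$ occupies a $2^{-j}$ fraction of the tree), costing $2$ arcs each, and each concatenation costs one arc; summing these gives $2^r+2^{r-1}$ in aggregate. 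Alternatively, strengthen your induction hypothesis to track the number of leaves so far and the number of pieces (e.g.\ extra arcs $\leq 2\ell + (\text{pieces}-1)$ with $2\ell + k \leq 2^r$), which is the same global count in inductive clothing.
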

\begin{proof}
  Let $L_i$ denote the set of paths associated with leaves in the tree
  at depth $i$.  Then a simple induction over levels proves that: the
  set of paths
  $\set{P_1,\ldots,P_k} \cup \left(\bigcup_{i=1}^{r-1}L_i\right)$
  constitute a splitting (partition) of path $\hat{P}$.  To perform the
  inductive step, apply \lemref{rewrite} at each internal node.
  
  It remains to bound the path-length in $G_S$ by positing a specific
  path: the concatenation of the shortcutted paths for the leaves and
  the full unshortcutted paths for the remaining subproblems.  Each
  concatenation adds $1$ arc, each leaf's path uses $2$ shortcuts, and
  each remaining non-leaf path $P_i$ has $\length{P_i}$ arcs.  Since
  the degree of each node is at most 2 (\lemref{rewrite}), the number
  of leaves above level $r$ is at most $2^{r-1}$, and the number of
  internal nodes (concatenations) above level $r$ is also is at most
  $2^{r-1}$.  Adding everything together gives the bound.
\end{proof}
  
\subsection{Asymmetry Leads to Progress}\seclabel{seqsimple}

This section proves that with probability at least $1/2$, the distance
between $u$ and $v$ is at most $O(n^{1/\log(8/3)})$.  The main tools
are \thmref{karp} and a proof that the number of path-related vertices
decreases by a constant fraction, on average, with each level in the
flattened path-relevant tree.  More precisely, a vertex $v$ is \defn{path
  active at level~$r$} if (1) $v$ is part of some path-relevant
subproblem at level $r$ in the flattened tree, and (2) $v$ is related to the
path in that subproblem.  The goal is to argue that the expected
number of path-active vertices decreases with each level.  

Recall that the each node in the flattened tree corresponds to a
sequence of calls to \foo, where the last call happens to draw a pivot
$x$ that is path related.  The analysis focuses on that last choice
of~$x$. But consider instead the equivalent process of choosing $x$ by
first tossing a weighted coin to determine whether $x$ is bridge,
ancestor, or descendent, then choosing the specific vertex from within
that set uniformly at random.  The following lemma considers the
effect of choosing $x$ from all path ancestors.  Choosing from path
descendents is symmetric.

\begin{lemma}\lemlabel{asymmetry}
  Consider any subproblem $(G,P)$. Suppose that $x$ is drawn uniformly
  at random from $\Anc{G}{P}$, let $\alpha=\card{\Anc{G}{P}}$, and let $\alpha'$
  be denote the number of vertices in $\Anc{G}{P}$ that are path
  active at the next level, i.e., after calling \foo{$G$}.  Then
  $E[\alpha' | x \in \Anc{G}{P}] < \alpha/2$.
\end{lemma}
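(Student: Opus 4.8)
The plan is to bound $E[\alpha'\mid x\in \Anc{G}{P}]$ by linearity of expectation over the uniform choice of pivot $x$, which turns the whole claim into a double count over ordered pairs of ancestors. Write $A=\Anc{G}{P}$ and $\alpha=|A|$ (the case $\alpha=0$ is vacuous). For $w\in A$ let $i(w)$ be the smallest index with $w\preceq v_{i(w)}$; this is well defined since the path vertices reachable from $w$ form a suffix $\ang{v_{i(w)},\dots,v_\ell}$ of $P$, and it is monotone: $w\preceq x$ forces $i(w)\le i(x)$ and $x\preceq w$ forces $i(w)\ge i(x)$.

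The first step is to classify the fate of each ancestor. Fix a pivot $x\in A$; by \lemref{rewrite}(case~3) the path-relevant children are $(G[V_R],P_1)$ with $P_1=\ang{v_0,\dots,v_{k-1}}$ and $(G[V_S],P_2)$ with $P_2=\ang{v_k,\dots,v_\ell}$, where $k=i(x)$. For $w\in A$: (i) if $w\preceq x$ then $w\in V_B\cup V_P$, so $w$ is either discarded or lands in the one non-path-relevant child, hence is not path active at the next level; (ii) if $x\preceq w$ and $w\not\preceq x$ then $w\in V_S$, and since every vertex on a $w$-to-$v_{i(w)}$ path is reachable from $x$ but cannot reach $x$ (that would place $w$ in $V_B$), the whole path together with $v_{i(w)}\in P_2$ lies in $V_S$, so $w$ is related to $P_2$ in $G[V_S]$ and is \emph{always} path active at the next level (note $x$, being an ancestor of $P_2$, is not a bridge, so $(G[V_S],P_2)$ is genuinely path-relevant); (iii) if $w$ is unrelated to $x$ then $w\in V_R$, and a global ancestor of $P$ can only be an ancestor of (or unrelated to) $P_1$ in $G[V_R]$, so being path active forces $w$ to reach some $v_i\in P_1$, i.e.\ $i(w)\le i<k=i(x)$.

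Next comes the double count. By the classification, for pivot $x$ the number of surviving ancestors is at most $|\{w\in A: x\preceq w,\ w\not\preceq x\}|+|\{w\in A: i(w)<i(x),\ w\text{ unrelated to }x\}|$, a disjoint union. Sum over $x\in A$ and divide by $\alpha$. The first term contributes $\tfrac1\alpha(F-\Sigma_{\mathrm{co}})$, where $F=\#\{(x,w)\in A^2: x\preceq w\}$ and $\Sigma_{\mathrm{co}}=\sum_C|C|^2$ over the co-reachability classes $C$ of $A$. Because ``$i(w)<i(x)$ and $w$ related to $x$'' is equivalent to ``$i(w)<i(x)$ and $w\preceq x$'' by monotonicity of $i$, the second term contributes $\tfrac1\alpha\!\left(\tfrac12(\alpha^2-\Sigma_i)-(B-\beta)\right)$, where $\Sigma_i$ is the sum of squared sizes of the level sets of $i(\cdot)$, $B=\#\{(x,w): w\preceq x\}$, and $\beta=\#\{(x,w): w\preceq x,\ i(w)=i(x)\}$. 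Relabeling pairs gives $F=B$, so these cancel. Finally, within each level set $S$ of $i(\cdot)$, inclusion--exclusion on the events ``$w\preceq x$'' and ``$x\preceq w$'' (equal counts) gives $\#\{(x,w)\in S^2: w\preceq x\}\le\tfrac12(|S|^2+\#\{\text{co-reachable pairs in }S\})$, so $\beta\le\tfrac12(\Sigma_i+\Sigma_{\mathrm{co}})$. Substituting,
$$E[\alpha'\mid x\in A]\ \le\ -\frac{\Sigma_{\mathrm{co}}}{\alpha}+\frac{\alpha^2-\Sigma_i}{2\alpha}+\frac{\Sigma_i+\Sigma_{\mathrm{co}}}{2\alpha}\ =\ \frac{\alpha}{2}-\frac{\Sigma_{\mathrm{co}}}{2\alpha}\ \le\ \frac{\alpha}{2}-\frac12,$$
using $\Sigma_{\mathrm{co}}=\sum_C|C|^2\ge\sum_C|C|=\alpha$, which is strictly below $\alpha/2$.

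The crux is case (iii): an ancestor unrelated to the pivot survives into $G[V_R]$ and is not automatically filtered, so naively there could be many such survivors. The index function $i(\cdot)$ is what tames this — such a survivor must have $i(w)<i(x)$, confining it to a ``lower triangle'' of total size $\tfrac12\alpha^2$ — but only if the forward and backward pair counts $F$ and $B$ cancel exactly, which is precisely the antisymmetry of $\preceq$: the only slack lives inside co-reachability classes, and that slack is exactly what yields the favorable $-\Sigma_{\mathrm{co}}/(2\alpha)$ term. I expect the fiddly parts to be verifying the survivor characterization in case (ii) (that the $w$-to-$v_{i(w)}$ path really stays in $V_S$) and keeping straight, throughout the count, which pairs are merely related versus co-reachable.
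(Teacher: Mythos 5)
Your proof is correct, and it rests on the same core mechanism as the paper's: charging survival to ordered pairs of ancestors and exploiting antisymmetry of reachability together with the earliest-reachable path index --- your cases (ii) and (iii) are precisely the two cases in the paper's argument that its ``preserves'' relation is antisymmetric (and irreflexive). The paper then concludes directly that at most $\binom{\alpha}{2}$ pairs are preserved, giving $E[\alpha']\le(\alpha-1)/2$, whereas you unroll the same count explicitly via the $F=B$ cancellation, level sets of $i(\cdot)$, and co-reachability classes, arriving at the same $(\alpha-1)/2$ bound.
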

\begin{proof}
  Define the following binary relation over vertices in $\Anc{G}{P}$:
  \defn{$u$ preserves $v$} means that if $x=u$, then $v$ remains path
  active.  The relation is irreflexive by virtue of the fact that
  $x \in (\Pred{G}{x} \cap \Succ{G}{x})$ and hence not contained in
  any subproblems.  The goal is to prove that it is also
  antisymmetric.  Assuming the asymmetry, the total number of pairs
  satisfying the preserves relation is at most ${\alpha \choose 2}$.
  The number of vertices preserved by $x$ is $\alpha'$, and hence
  $E[\alpha'] \leq {\alpha \choose 2} / \alpha = (\alpha - 1)/2$.
  It remains to prove that the preserves relation is antisymmetric.  

  Consider any pair with $u$ preserves $v$, let
  $P=\ang{v_0,v_1,\ldots,v_\ell}$, and let $v_k$ be the earliest
  vertex in $P$ with $u \preceq v_k$.  Then selecting $x=u$ splits the
  path into $P_1^{(u)} = \ang{v_0,\ldots,v_{k-1}}$ and
  $P_2^{(u)} = \ang{v_k,\ldots,v_\ell}$, as stated in
  \lemref{rewrite}, where the superscript $^{(u)}$ indicates
  $x=u$. There are two ways that $v$ could be preserved: either
  $v \in V_R^{(u)}$ and $v \in \Anc{G[V_R^{(u)}]}{P_1^{(u)}}$, or
  $v \in V_S^{(u)}$ and $v \in \Anc{G[V_S^{(u)}]}{P_2^{(u)}}$.  (No relationships
  are added in the subproblems, so $v$ cannot, e.g., become a path
  descendent.)  Suppose the latter is true.  Then, by definition of
  $V_S^{(u)}$ in \algref{seq1}, $u \preceq v$ and $v \not\preceq u$.  It
  follows that if $x=v$, $u \in V_P^{(v)}$ and hence $v$ does not preserve
  $u$.

  Suppose instead that $v \in V_R^{(u)}$, implying $u$ and $v$ are
  unrelated.  Then $v$ can only preserve $u$ if
  $u\in \Anc{G[V_R^{(v)}]}{P_1^{(v)}}$.  But
  $v \in \Anc{G[V_R^{(u)}]}{P_1^{(u)}}$ implies $P_1^{(v)}$ is a
  subpath of $P_1^{(u)}$, which is unrelated to $u$.  So
  $u \not\in \Anc{G}{P_1^{(v)}} \supseteq
  \Anc{G[V_R^{(v)}]}{P_1^{(v)}}$, and hence $v$ does not preserve $v$.
\end{proof}

\lemref{asymmetry} states that if an ancestor is selected, the number
of path-active ancestors decreases by half.  The following lemma extends the
analysis to consider the effect on the total number of path-active
vertices.  The worst case is that the number of ancestors equals the
number of descendents.  Then \lemref{asymmetry} indicates that half of
the vertices decrease by half, i.e., a $3/4$ reduction in total.  

\begin{lemma}\lemlabel{nodereduction}
  Let $\eta$ denote the number of path-active vertices in some
  level-$(r-1)$ subproblem $(G,P)$, and let $\eta'$ be a random variable
  denoting the number of those vertices that are path active at
  level-$r$.  Then $E[\eta'] < (3/4)\eta$.
\end{lemma}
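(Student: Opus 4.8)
The plan is to express $E[\eta']$ as a convex combination of three conditional expectations according to whether the ``effective'' pivot of the flattened-tree node --- the first pivot it draws that is related to $P$ --- is a bridge, an ancestor, or a descendent of $P$, apply \lemref{asymmetry} in the ancestor and descendent cases (and the base case of \lemref{rewrite} in the bridge case), and then check that the resulting quadratic expression never exceeds $\tfrac34\eta^2$.

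First I would dispense with the flattening and reduce to a single \foo call. A flattened-tree node is processed by a sequence of \foo calls in which each unrelated pivot discards $V_B,V_S,V_P$ and recurses on $(G[V_R],P)$ with $P$ unchanged; since no relationships are ever created inside an induced subgraph, such a step can only \emph{remove} path-active vertices. Hence, conditioning on the graph $G'$ present just before the first related pivot is drawn, the set $R'=\Bridge{G'}{P}\cup\Anc{G'}{P}\cup\Desc{G'}{P}$ of path-related vertices of $(G',P)$ satisfies $R'\subseteq$ (the original $\eta$ path-active vertices), so $|R'|\le\eta$, and the vertices the node contributes at level $r$ are exactly the survivors among $R'$. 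Thus it suffices to prove: for any $(G',P)$, a single call \foo{$G'$} whose pivot $x$ is drawn uniformly from $R'$ leaves, in expectation, fewer than $\tfrac34|R'|$ of the vertices of $R'$ path-active. (The sequence of calls is finite since every unrelated pivot removes at least itself and $P$'s vertices, being bridges, are never discarded.)

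For this single-call claim, write $b,\alpha,\delta$ for $|\Bridge{G'}{P}|,|\Anc{G'}{P}|,|\Desc{G'}{P}|$, so $|R'|=b+\alpha+\delta=:\eta$, and split the surviving count as $\eta'=\eta'_B+\eta'_A+\eta'_D$ by the original type of each survivor. If $x$ is a bridge (probability $b/\eta$), \lemref{rewrite}(1) gives no path-relevant children, so $\eta'=0$. If $x$ is an ancestor (probability $\alpha/\eta$), \lemref{asymmetry} gives $E[\eta'_A\mid x\in\Anc{G'}{P}]<\alpha/2$; moreover, since the children have pairwise-disjoint vertex sets, each original path-active vertex survives in at most one child, so $\eta'_B\le b$ and $\eta'_D\le\delta$, giving $E[\eta'\mid x\in\Anc{G'}{P}]<\alpha/2+\delta+b$, and symmetrically $E[\eta'\mid x\in\Desc{G'}{P}]<\delta/2+\alpha+b$. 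Combining the three cases,
\[
E[\eta'] < \frac{\alpha}{\eta}\Bigl(\frac{\alpha}{2}+\delta+b\Bigr) + \frac{\delta}{\eta}\Bigl(\frac{\delta}{2}+\alpha+b\Bigr);
\]
multiplying by $\eta$ and subtracting from $\tfrac34\eta^2=\tfrac34(b+\alpha+\delta)^2$ leaves the nonnegative remainder $\tfrac14(\alpha-\delta)^2+\tfrac34 b^2+\tfrac12 b(\alpha+\delta)$, so $E[\eta']<\tfrac34\eta$. (If $\alpha=\delta=0$ then $x$ is always a bridge, $\eta'=0$, and $\eta=b\ge1$, so the bound is immediate; otherwise the strict inequality of \lemref{asymmetry} is invoked with positive probability, which makes the combined bound strict.)

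The arithmetic is routine; the step I expect to need the most care is the bookkeeping around the flattened tree --- pinning down that ``the $\eta$ path-active vertices in the level-$(r-1)$ subproblem'' is exactly $|\Bridge{}{}\cup\Anc{}{}\cup\Desc{}{}|$ at the node's entry, that the intervening unrelated pivots genuinely only shrink this set while leaving $P$ intact, and that the effective pivot is then uniform over the path-related vertices of the reached graph. Once that reduction is clean, the asymmetry of \lemref{asymmetry} together with the worst case $b=0$, $\alpha=\delta$ (half the path-active vertices shrinking by half) does all the work, and the computation above shows that $3/4$ is exactly the boundary and is never exceeded.
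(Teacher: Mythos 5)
Your proposal is correct and follows essentially the same route as the paper's proof: discard unrelated pivots (they only shrink the active set), condition on whether the first path-related pivot is a bridge, ancestor, or descendent, apply \lemref{asymmetry} in the latter two cases, and finish with routine algebra. The only cosmetic difference is that you verify $3/4$ by exhibiting the nonnegative remainder $\tfrac14(\alpha-\delta)^2+\tfrac34\beta^2+\tfrac12\beta(\alpha+\delta)$, whereas the paper bounds the same expression via a difference-of-squares and AM--GM step.
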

\begin{proof}
  Let $\alpha$, $\beta$, and $\delta$ denote the number of ancestors,
  bridges, and descendents, respectively, of path $P$ in $G$, with
  $\alpha+\beta+\delta = \eta$.  Selecting a pivot $x$ that is
  unrelated to $P$ can only decreases the number of path-related
  vertices, decreasing $\eta$ even further, so these choices can be
  ignored.  Consider the first $x$ that is related to $P$. If, for
  example, that $x$ is a path ancestor, \lemref{asymmetry} states
  $E[\eta' | x\in\Anc{G}{P}] < \alpha/2 + \beta + \delta$.  If $x$ is
  a bridge, then $\eta' = 0$ because there are no path-relevant
  subproblems.  Adding up all three cases and scaling by their
  probabilities, we have
  \begin{align*}
    E[\eta'] &= \left(\frac{\alpha}{\eta}\right) E[\eta' | x \in \Anc{G}{P}] +
    \left(\frac{\delta}{\eta}\right) \cdot E[\eta' | x \in
               \Desc{G}{P}]  \\ 
    &< \left(\frac{\alpha}{\eta}\right)(\alpha/2+\beta+\delta)  +
  \left(\frac{\delta}{\eta}\right)(\alpha+ \beta+\delta/2) \tag*{(\lemref{asymmetry})}\\
    &= \frac{(\alpha+\delta)(\alpha/2+\beta+\delta/2)}{\eta} +
      \frac{\alpha\delta}{\eta} \\
    &= \frac{(\eta-\beta)(\eta+\beta)}{2\eta} +
      \frac{(\sqrt{\alpha\delta})^2}{\eta} \tag*{($\eta=\alpha+\beta+\delta$)}\\
    &\leq \frac{\eta^2}{2\eta} + \frac{((\alpha+\delta)/2)^2}{\eta} \\
    &\leq (3/4)\eta \ . \\[-3em]
  \end{align*}
\end{proof}

For subproblem $s = (G,P)$, define $\phi(s)$ to be the number of
path-active vertices in $s$.  Define
$\Phi(I_r) = \sum_{s \in I_r} \phi(s)$, where $I_r$ is the collection
of subproblems at level-$r$ in the flattened tree.  Then we have the
following.  Applying \thmref{karp} then gives the main lemma.

\begin{corollary}\corlabel{levelreduction}
  Given any collection $I_{r-1}$ of subproblems,
  $E[\Phi(I_r) | I_{r-1}] \leq (3/4)\Phi(I_{r-1})$.
\end{corollary}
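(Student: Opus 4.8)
The plan is to obtain \corref{levelreduction} directly from \lemref{nodereduction} by summing over the constituent subproblems and invoking linearity of expectation; essentially all of the real work has already been done in \lemref{nodereduction} (and, underneath it, \lemref{asymmetry}). First I would fix the conditioning, writing $I_{r-1} = \set{s_1,\ldots,s_j}$ with $s_i = (G_i,P_i)$, and record the structural fact that these subproblems have pairwise disjoint vertex sets, since each call of \foo recurses only on the induced subgraphs $G[V_S],G[V_P],G[V_R]$, which partition a subset of $V$. Consequently the level-$r$ path-relevant subproblems split into groups, one per $s_i$: the children of $s_i$ in the flattened path-relevant tree (a vertex of $G_i$ can reappear only in a descendant of $s_i$, and such a descendant lies at level $r$ exactly when it is a child). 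Thus $\Phi(I_r) = \sum_{i=1}^{j} \eta_i'$, where $\eta_i'$ is the number of vertices that are path active in $s_i$ and remain path active at level $r$ --- precisely the quantity $\eta'$ of \lemref{nodereduction} applied to $s_i$ (here using that passing to an induced subgraph creates no new precedence relations, so a vertex that is path active in a child of $s_i$, with its associated subpath of $P_i$, is already path active in $s_i$).

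Next I would apply \lemref{nodereduction} to each $s_i$: because $\eta_i'$ depends only on $s_i$ and on the random pivot choices made within its subtree, $E[\eta_i' \mid I_{r-1}] = E[\eta_i' \mid s_i] \leq (3/4)\,\phi(s_i)$. Two minor points each deserve a sentence. First, \lemref{nodereduction} is stated with a strict inequality, which fails (vacuously) when $\phi(s_i)=0$; but then $\eta_i'=0$ deterministically, so the non-strict bound still holds, and similarly a leaf of the flattened tree (a base case of \lemref{rewrite}) has no children and contributes $0$. Second, a node of the flattened tree represents a maximal run of pivot draws that discards unrelated pivots before landing on a path-related one; this is exactly the regime handled inside the proof of \lemref{nodereduction} (unrelated pivots only shrink the path-active set), so no extra argument is needed.

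Finally, summing and using linearity of expectation gives
\[
  E[\Phi(I_r)\mid I_{r-1}] \;=\; \sum_{i=1}^{j} E[\eta_i'\mid I_{r-1}] \;\leq\; \tfrac{3}{4}\sum_{i=1}^{j} \phi(s_i) \;=\; \tfrac{3}{4}\,\Phi(I_{r-1}),
\]
with no independence across subproblems required, only linearity. I do not anticipate a genuine obstacle here: the only things that need care are the disjointness of the level-$r$ subproblems (so path-active vertices are neither double counted nor lost), the $\phi(s_i)=0$ and base-case degeneracies, and confirming that ``path active at the next level'' in \lemref{nodereduction} coincides with ``lies in some level-$r$ subproblem and is related to its path'' as used here.
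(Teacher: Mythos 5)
Your proposal is correct and follows essentially the same route as the paper: apply \lemref{nodereduction} to each level-$(r-1)$ subproblem and conclude by linearity of expectation (the paper's proof is exactly this, stated in two sentences). The extra care you take about disjointness of subproblems, degenerate cases with $\phi(s_i)=0$, and leaves of the flattened tree is sound but not a different argument.
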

\begin{proof}
  \lemref{nodereduction} states that for each $s \in I_r$, we have
  $E[\phi(s_1) + \phi(s_2)] \leq (3/4)\phi(s)$, where $s_1$ and $s_2$
  are random variables for the at most two path-relevant subproblems
  of $s$.  The claim follows by linearity of expectation over all~$s$.
\end{proof}

\begin{lemma}\lemlabel{weakdiam}
  Let $\hat{G}=(\hat{V},\hat{E})$ be a directed graph, and consider any
  vertices $u,v\in V$ such that there exists a directed path from $u$
  to $v$ in $\hat{G}$.  Let $S$ be the shortcuts produced by an execution of
  \algref{seq1} and let $n=\card{\hat{V}}$.  Then with probability at least
  $1/2$, there exists a directed path from $u$ to $v$ in
  $G_S = (\hat{V},\hat{E}\cup S)$ consisting of $O(n^{1/\lg(8/3)})$ arcs.
\end{lemma}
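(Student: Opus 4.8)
The plan is to combine the two structural facts already assembled: \lemref{tree}, which bounds the length of the surviving $u$-to-$v$ path at level $r$ by $2^r + 2^{r-1} + \sum_i \length{P_i}$, and \corref{levelreduction}, which says the potential $\Phi(I_r)$ (total number of path-active vertices across level-$r$ subproblems) shrinks by a factor $3/4$ in expectation each round. The initial potential is $\Phi(I_0) \leq n$, since at most all $n$ vertices can be path active. I apply \thmref{karp} with $p = 3/4$: choosing parameters $k$ and $w$, the theorem gives $\prob{\Phi(I_{k+w+2}) > (3/4)^k n} \leq (3/4)^w$. To get probability at least $1/2$, I take $w$ to be a suitable constant (say $w = 3$, so $(3/4)^3 < 1/2$); then with probability at least $1/2$ we have $\Phi(I_{k+5}) \leq (3/4)^k n$.

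Next I choose $k$ so that the two competing terms in the length bound balance. Set $r = k + 5$. Since $\Phi(I_r) < 1$ forces all path-active counts to be $0$, and in particular forces every $\length{P_i} = 0$ (an edge of a surviving subpath would contribute its two endpoints to the path-active count — more carefully, a nonempty subpath $P_i$ contains at least one vertex related to its path, so $\phi(s) \geq 1$), the residual sum $\sum_i \length{P_i}$ vanishes once $(3/4)^k n < 1$, i.e., once $k \geq \log_{4/3} n$. But I want the dominant term $2^r = 2^{k+5}$ to stay at $O(n^{1/\lg(8/3)})$, so I actually want to stop earlier. The right choice is to balance $2^k$ against $(3/4)^k n$: these are equal when $2^k = (3/4)^k n$, i.e., $(8/3)^k = n$, i.e., $k = \log_{8/3} n = (1/\lg(8/3))\lg n$. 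With this $k$, both $2^{k+5} = O(2^k) = O(n^{1/\lg(8/3)})$ and $(3/4)^k n = 2^k = O(n^{1/\lg(8/3)})$. Since $\Phi(I_r) \leq (3/4)^k n$ bounds $\sum_i \length{P_i}$ from above (each nonempty $P_i$ has $\length{P_i} \leq \phi(s_i) - 1 \leq \phi(s_i)$, summing gives $\sum_i \length{P_i} \leq \Phi(I_r)$... actually $\length{P_i}$ could exceed $\phi(s_i)$ if the path has unrelated vertices, so I must instead note $\length{P_i} + 1 \leq$ number of vertices of $P_i$, and at least... ), plugging into \lemref{tree} yields total length $2^{k+5} + 2^{k+4} + (3/4)^k n = O(n^{1/\lg(8/3)})$, which is the claim.

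The main obstacle I anticipate is the subtle point that $\sum_i \length{P_i}$ need not be bounded by $\Phi(I_r) = \sum_i \phi(s_i)$: the potential $\phi(s_i)$ counts only \emph{path-active} vertices (those related to $P_i$ in $G_i$), whereas $\length{P_i}$ counts arcs of $P_i$, and a subpath could in principle be long while containing few vertices that are related to it in the current subgraph. However, every vertex \emph{on} $P_i$ is trivially related to $P_i$ (each $v_j$ on the path is a bridge of the path), so in fact $\length{P_i} = (\text{number of vertices of } P_i) - 1 \leq \phi(s_i) - 1 < \phi(s_i)$, and hence $\sum_i \length{P_i} < \Phi(I_r)$ after all. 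Verifying this "path vertices are path active" observation carefully — and that the base-case subproblems (already shortcutted to two hops) are correctly handled by \lemref{tree}'s leaf accounting rather than by the potential — is the one place where the argument requires care; once it is in hand, the calculation above closes the proof.
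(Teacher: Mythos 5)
Your proposal is correct and follows essentially the same route as the paper's own proof: apply \thmref{karp} with \corref{levelreduction} (constant slack $w$ so the failure probability is below $1/2$), bound the residual $\sum_i \length{P_i}$ by $\Phi(I_{r+5})$ via the observation that every vertex on an active subpath is a bridge and hence path active, and balance $O(2^r)$ against $(3/4)^r n$ at $r = \log_{8/3} n$. The point you flagged as delicate is exactly the paper's one-line justification, so no gap remains.
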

\begin{proof}
  Choose an arbitrary simple path $\hat{P}$ from $u$ to $v$ in $\hat{G}$.
  At most every vertex is path active, so $\Phi(I_0) \leq n$.  By
  \thmref{karp} with \corref{levelreduction}, $\prob{\Phi(I_{r+5}) >
    (3/4)^r n} < 1/2$.  Observe that $\Phi(I_{r+5})$ is at least the number of
  bridge nodes that are still active in round $r+5$, and each node on an
  active subpath is a bridge node.  Thus, by \lemref{tree}, running
  the algorithm to level $r+5$ is enough to yield a
  shortcutted path
  length of at most $O(2^r) + \Phi(I_{r+5}) \leq O(2^r) + (3/4)^r n$
  with probability at least $1/2$.    Setting both terms equal and
  solving for $r$ gives $r=\log_{8/3}n$.  Thus, with probability at
  least $1/2$, the shortcutted path has length $O(2^r) =
  O(2^{\log_{8/3}n}) = O(n^{1/\lg(8/3)})$.
\end{proof}

\subsection{A Tighter Path-Length Bound (\lemref{seqmain})}\seclabel{seqbetter}

This section tightens the path-length bound to $O(n^{2/3}\log^{4/3} n)$,
thereby proving \lemref{seqmain}.  

The main difference versus \secref{seqsimple} is a better potential
function associated with subproblems.  The $3/4$ bound reduction in
the number of path-active vertices, as stated in
\lemref{nodereduction}, is indeed tight in the worst case.  But the
worst case only occurs when the number of ancestors is equal to the
number of descendents.  When there is imbalance between the two, the
reduction is better.  Consider, for example, the extreme that there
are no descendents --- then the number of path active vertices reduces
by $1/2$ according to \lemref{asymmetry}.  

This section uses the following potential function for a subproblem
$s=(G,P)$: 
\[ \phi(s) = \phi_1(s)C_\phi + \phi_2(s) \ , \hspace{.5cm}
\phi_1(s) = \sqrt{(\alpha+\beta)(\delta+\beta)} \ , \hspace{.5cm}
\phi_2(s) = \eta = (\alpha+\beta+\delta) \ , \]
where $C_\phi > 1$ is a parameter to be set later,
$\alpha = \card{\Anc{G}{P}}$, $\beta=\card{\Bridge{G}{P}}$, and
$\delta = \card{\Desc{G}{P}}$.  The main idea of $\phi_1$ is to
capture imbalance by the geometric mean of the number of ancestors and
descendents, but bridges are included in both counts because bridges
can eventually become ancestors or descendents in induced subgraphs.
This imbalance term is more important, and hence weighted by
$C_\phi$.  The second term $\phi_2$ is added to ensure the following
lemma:
\begin{lemma}\lemlabel{phibounds}
  $\eta \leq \phi(s) \leq (C_\phi+1)\eta $, for $\eta =
  \card{\Anc{G}{P}} + \card{\Bridge{G}{P}} + \card{\Desc{G}{P}}$.
\end{lemma}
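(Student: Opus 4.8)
The plan is to treat both inequalities by elementary means, since $\phi(s)=C_\phi\,\phi_1(s)+\phi_2(s)$ with $\phi_2(s)=\eta$ and $C_\phi>1$.

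First I would establish the lower bound. All of $\alpha,\beta,\delta$ are cardinalities, hence nonnegative, so $\phi_1(s)=\sqrt{(\alpha+\beta)(\delta+\beta)}\geq 0$ and $C_\phi>0$. Therefore $\phi(s)=C_\phi\,\phi_1(s)+\eta\geq \eta$, which is the left inequality.

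For the upper bound, it suffices to show $\phi_1(s)\leq \eta$: then $\phi(s)=C_\phi\,\phi_1(s)+\eta\leq C_\phi\,\eta+\eta=(C_\phi+1)\eta$. To bound $\phi_1(s)$, apply the AM--GM inequality to the two nonnegative quantities $\alpha+\beta$ and $\delta+\beta$:
\[
\phi_1(s)=\sqrt{(\alpha+\beta)(\delta+\beta)}\;\leq\;\frac{(\alpha+\beta)+(\delta+\beta)}{2}=\frac{\alpha+\delta}{2}+\beta\;\leq\;\alpha+\beta+\delta=\eta,
\]
where the last step uses $\frac{\alpha+\delta}{2}\leq \alpha+\delta$ (again by nonnegativity). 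Combining the two displays completes the proof.

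There is no real obstacle here: the only substantive ingredient is the AM--GM step bounding the geometric mean $\phi_1$ by $\eta$, and everything else is nonnegativity of the cardinalities. I would present the argument in two short sentences in the paper, one per inequality.
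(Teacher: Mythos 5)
Your proof is correct and follows essentially the same route as the paper: both arguments reduce to showing $\phi_1(s)\leq\eta$ and then conclude $\phi(s)\leq C_\phi\eta+\eta$. The only difference is cosmetic --- the paper justifies $\phi_1(s)\leq\eta$ by noting $\phi_1$ is maximized when $\beta=\eta$, whereas you make the same bound explicit via AM--GM, which is a perfectly fine (arguably cleaner) justification.
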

\begin{proof}
  The first inequality is trivial: $\phi(s) \geq \phi_2(s) = \eta$.
  For the second, $\phi_1(s)$ is maximized when $\beta = \eta$, giving
  a total of $\phi(s) \leq \phi_1(s)C_\phi + \phi_2(s) = \eta C_\phi + \eta$.
\end{proof}

The next step is to prove a bound analogous to \lemref{nodereduction}
but for $\phi_1(s)$.   

\begin{lemma}\lemlabel{phione}
  Consider any path-relevant subproblem $s=(G,P)$.  Let $\alpha'$,
  $\beta'$, and $\delta'$ be random variables denoting the total
  number of path ancestors, bridges, and descendents,
  respectively, in any child path-relevant subproblems.  Let
  $\phi_1' = \sqrt{(\alpha'+\beta')(\delta'+\beta')}$.  Then
  $E[\phi_1'] \leq \phi_1(s) / \sqrt{2}$.
\end{lemma}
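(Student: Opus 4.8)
The plan is to condition on which of the three cases of \lemref{rewrite} the related pivot $x$ falls into (bridge, ancestor, or descendent), just as in \lemref{nodereduction}, and track how $\phi_1$ behaves in each. If $x$ is a bridge there are no path-relevant children, so $\phi_1'=0$ and that case only helps. By the ancestor/descendent symmetry it suffices to analyze the ancestor case carefully and then combine the two symmetric bounds. So fix the event $x\in\Anc{G}{P}$. When $x$ is an ancestor, \lemref{rewrite} splits $P$ into $P_1$ (in $G[V_R]$) and $P_2$ (in $G[V_S]$), giving at most two children. The key structural facts I will need: (i) every vertex of $\Desc{G}{P}$ that survives into a child stays a descendent (no new relationships are created in induced subgraphs, as already noted in the proof of \lemref{asymmetry}), and moreover a path descendent of $P$ is a path descendent of whichever of $P_1,P_2$ it is related to — so $\delta' \le \delta$, and in fact the descendents split between the two children; (ii) bridges of $P$ can only become ancestors, bridges, or descendents of the subpaths, so the ``ancestor-or-bridge'' count $\alpha+\beta$ and the ``descendent-or-bridge'' count $\delta+\beta$ each behave monotonically in a controlled way; (iii) the crucial gain: by \lemref{asymmetry} applied to the ancestor set (with bridges folded in — I expect a variant of the antisymmetry argument to show the ``ancestor-or-bridge'' mass that stays ``ancestor-or-bridge'' in a child drops by a factor $2$ in expectation), the quantity $\alpha'+\beta'$ summed over children is at most $(\alpha+\beta)/2$ in expectation, while $\delta'+\beta' \le \delta+\beta$ deterministically (summed over children, since descendents/bridges are merely partitioned among the children, never duplicated — each vertex lands in at most one child).

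Given those two bounds — $E[\sum \alpha'+\beta'] \le (\alpha+\beta)/2$ and $\sum(\delta'+\beta') \le \delta+\beta$ — I would finish with concavity. Since $P_1$ and $P_2$ are vertex-disjoint, $\phi_1' = \sum_{\text{children } c}\sqrt{(\alpha_c'+\beta_c')(\delta_c'+\beta_c')}$, and by Cauchy–Schwarz (or concavity of $\sqrt{\cdot}$ with the sum-of-geometric-means inequality $\sqrt{ab}+\sqrt{cd}\le\sqrt{(a+c)(b+d)}$) this is at most $\sqrt{\big(\sum(\alpha_c'+\beta_c')\big)\big(\sum(\delta_c'+\beta_c')\big)}$. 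Taking expectations and using the square root's concavity with Jensen, $E[\phi_1'] \le \sqrt{E[\sum(\alpha_c'+\beta_c')]\cdot (\delta+\beta)} \le \sqrt{\tfrac12(\alpha+\beta)(\delta+\beta)} = \phi_1(s)/\sqrt{2}$, as claimed. The descendent case is identical with the roles of $\alpha+\beta$ and $\delta+\beta$ swapped, and the bridge case gives $0$; averaging over the conditioning events only improves the bound, so unconditionally $E[\phi_1'] \le \phi_1(s)/\sqrt2$.

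The main obstacle I anticipate is step (iii) — getting the clean factor-$2$ expected contraction of the ``ancestor-or-bridge'' mass when an ancestor pivot is chosen. \lemref{asymmetry} as stated is only about $\Anc{G}{P}$ itself, not $\Anc{G}{P}\cup\Bridge{G}{P}$, and the pivot is drawn uniformly from $\Anc{G}{P}$ only. I will need to re-run the ``preserves'' antisymmetry argument with the preserved set enlarged to include bridges that become ancestors-or-bridges of a child: I must check that the antisymmetry still holds on the relevant pairs (pivot $u\in\Anc{G}{P}$ versus a vertex $v$ that is an ancestor or bridge of $P$), handling the new sub-case where $v$ is a bridge of $P$ — here $v$ lies on some later subpath's bridge set, and the relation $u\preceq v_k$ for the earliest such $v_k$, combined with $v$'s two-sided reachability into $P$, should force $u\preceq v$ and hence place $v$ in the $V_S^{(u)}$ side consistently, breaking symmetry the same way. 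A secondary subtlety is making sure the ``at most one child per surviving vertex'' claim is airtight — since $V_R,V_S,V_P$ are disjoint, a vertex is in at most one child subgraph, so there is genuinely no duplication and $\sum(\delta_c'+\beta_c')\le\delta+\beta$ holds term-by-term, not just in expectation. Everything after that is the routine concavity bookkeeping sketched above.
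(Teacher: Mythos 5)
There is a genuine gap at your step (iii), and it is fatal to the way you close the argument. The claimed factor-$2$ contraction of the ``ancestor-or-bridge'' mass, $E[\sum(\alpha'+\beta')]\leq(\alpha+\beta)/2$ when the pivot is an ancestor, is false, and no enlargement of the antisymmetry argument can rescue it: when $x$ is an ancestor of $P$, the path itself survives (split into $P_1$ and $P_2$), so every path vertex --- and every path vertex is a bridge --- remains a bridge of one of the subpaths. Concretely, take $G$ to be a directed path $v_0\to v_1\to\cdots\to v_\ell$ plus one extra vertex $u$ with the single arc $u\to v_0$, and $P=\ang{v_0,\ldots,v_\ell}$. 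Then $\alpha=1$, $\beta=\ell+1$, $\delta=0$; conditioned on the pivot being an ancestor we must have $x=u$, the unique child is $(G[V_S],P)$, and $\alpha'+\beta'=\ell+1>(\alpha+\beta)/2$. The correct statement --- and the one the paper uses --- is only $E[\alpha'+\beta']\leq\alpha/2+\beta$: \lemref{asymmetry} halves the ancestors, while the bridge contribution is merely non-increasing (bridges may convert to ancestors, but the sum $\alpha'+\beta'$ cannot grow).

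Once you substitute the correct bound, your finishing move breaks as well. Conditioned on an ancestor pivot you now only get $E[\phi_1']\leq\sqrt{(\alpha/2+\beta)(\delta+\beta)}$, which can exceed $\phi_1(s)/\sqrt{2}$ (e.g., when $\alpha$ is small relative to $\beta$, as in the example above, where conditionally $\phi_1'=\ell+1>\phi_1(s)/\sqrt 2$), so the claim ``averaging over the conditioning events only improves the bound'' is no longer available. The lemma is saved by the weighting: the ancestor case occurs only with probability $\alpha/\eta$, the descendent case with probability $\delta/\eta$, and the bridge case (probability $\beta/\eta$) contributes $0$. The paper's proof carries out exactly this weighted combination, writing $E[\phi_1']=(\alpha/\eta)E[\phi_1'\mid x\in\Anc{G}{P}]+(\delta/\eta)E[\phi_1'\mid x\in\Desc{G}{P}]$, applying Jensen and $E[\alpha'+\beta']\leq\alpha/2+\beta$ (and symmetrically for descendents), and then using $\sqrt{1+y}\leq 1+y/2$ to show the weighted sum collapses to $\phi_1(s)/\sqrt 2$. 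Your remaining ingredients --- the subadditivity $\sqrt{ab}+\sqrt{cd}\leq\sqrt{(a+c)(b+d)}$ over the two children (the paper's \lemref{phisplit}), the monotone bound $\sum(\delta'+\beta')\leq\delta+\beta$, and Jensen --- are all fine and match the paper; what is missing is that the proof must exploit the case probabilities rather than a per-case bound.
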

\begin{proof}
  First, observe that any pivots $x$ sampled that are not related to
  the path can only reduce $\phi_1$ either by inactivating an
  ancestor, descendent, or bridge, or by converting a bridge to an
  ancestor or descendent.  Thus, the proof focuses on the final choice
  of $x$.  Let $\alpha$, $\beta$, and $\delta$ denote the number of
  ancestors, bridges, and descendents, respectively, of path $P$ just
  prior to the final choice of $x$ at this point.  Thus
  $\sqrt{(\alpha+\beta)(\delta+\beta)} \leq \phi_1(s)$.  Let
  $\eta = \alpha+\beta+\delta$.

  Note that $\beta \geq 1$ because a path-relevant subproblem must
  have a nonempty path and hence at least one bridge.  The implication
  is that all of the divisors below are nonzero.

  The proof focuses on the sums $\alpha'+\beta'$ or $\delta'+\beta'$.
  The value $\alpha'$ can increase by changing bridges to ancestors,
  but the sum $\alpha'+\beta'$ cannot.  Applying \lemref{asymmetry} in
  the current notation, $E[\alpha' + \beta'] \leq \alpha/2 + \beta$.
  
  The remainder of the proof is analogous to proof of
  \lemref{nodereduction}, with $\phi_1'=0$ if $x$ is a bridge.
  \begin{align*}
    E\left[\phi_1'\right]
    &= \left(\frac{\alpha}{\eta}\right) E[\phi_1' | x\in
      \Anc{G}{P}] + \left(\frac{\delta}{\eta}\right) E[\phi_1'
      | x\in\Desc{G}{P}] \\
    &\leq 
      \left(\frac{\alpha}{\eta}\right)
      E\left[\sqrt{(\alpha'+\beta')(\delta+\beta)}\right] +
      \left(\frac{\delta}{\eta}\right)E\left[\sqrt{(\alpha+\beta)(\delta'+\beta')}\right]
    \\
    &\leq 
      \left(\frac{\alpha}{\eta}\right)
      \sqrt{(E[\alpha'+\beta'])(\delta+\beta)} +
      \left(\frac{\delta}{\eta}\right)\sqrt{(\alpha+\beta)(E[\delta'+\beta'])}
      \tag*{(by Jensen's inequality)}
    \\
    &< 
      \left(\frac{\alpha}{\eta}\right)
      \sqrt{(\alpha/2+\beta)(\delta+\beta)} +
      \left(\frac{\delta}{\eta}\right)\sqrt{(\alpha+\beta)(\delta/2+\beta)}  \tag*{(\lemref{asymmetry})}
    \\
    &=     
      \left(\frac{\alpha}{\eta}\right)
        \sqrt{(1/2)(1+\frac{\beta}{\alpha+\beta})(\alpha+\beta)(\delta+\beta)}
        + \left(\frac{\delta}{\eta}\right)\sqrt{(1/2)(1+\frac{\beta}{\delta+\beta})(\alpha+\beta)(\delta+\beta)}\\
    &= \left(\frac{\phi_1(s)}{\sqrt{2}}\right) \, \frac{\alpha\sqrt{1+\frac{\beta}{\alpha+\beta}}
      + \delta\sqrt{1+\frac{\beta}{\delta+\beta}}}{\eta}\\
    & \leq \left(\frac{\phi_1(s)}{\sqrt{2}}\right) \,
      \frac{\alpha\left(1+\frac{\beta}{2(\alpha+\beta)}\right) +
      \delta\left(1+\frac{\beta}{2(\delta+\beta)}\right)}{\eta} 
      \tag*{(because $\sqrt{1+y} \leq 1+y/2$ for $y\geq 0$)}\\
    & \leq \left(\frac{\phi_1(s)}{\sqrt{2}}\right) \, \frac{(\alpha +
      \beta/2) + (\delta+\beta/2)}{\eta}\\
    & = \phi_1(s) / \sqrt{2} \ . \\[-3em]
  \end{align*}
\end{proof}

Before extending the bound to the full function $\phi$, the following
lemma says that partitioning ancestors, bridges, and descendents
arbitrarily across subproblems does not increase the potential.  This
statement is obvious for $\phi_2$, so the lemma focuses on $\phi_1$. 

\begin{lemma}\lemlabel{phisplit}
  Consider any integers
  $\alpha_1,\alpha_2,\beta_1,\beta_2,\delta_1,\delta_2 \geq 0$ with
  such that $\alpha_i + \delta_i > 0 \implies \beta_i > 0$.  Let
  $\alpha = \alpha_1+\alpha_2$, $\beta=\beta_1+\beta_2$, and
  $\delta=\delta_1+\delta_2$.  Then
  $\sqrt{(\alpha_1+\beta_1)(\delta_1+\beta_1)} +
  \sqrt{(\alpha_2+\beta_2)(\delta_2+\beta_2)} \leq
  \sqrt{(\alpha+\beta)(\delta+\beta)}$.
\end{lemma}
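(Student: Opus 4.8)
The plan is to strip away the combinatorial dressing and recognize the statement as (super-additivity of the geometric mean, equivalently) a one-line consequence of Cauchy--Schwarz. The key observation is that a bridge plays a symmetric role: it is grouped with the ancestor count in one factor and with the descendent count in the other. Concretely, I would set $a_i := \alpha_i+\beta_i$ and $d_i := \delta_i+\beta_i$ for $i\in\{1,2\}$; these are nonnegative (indeed integers). Summing, $a_1+a_2 = (\alpha_1+\alpha_2)+(\beta_1+\beta_2) = \alpha+\beta$ and likewise $d_1+d_2 = \delta+\beta$, since each $\beta_i$ is counted exactly once in $a_i$ and once in $d_i$. Thus the inequality to be proved is precisely $\sqrt{a_1 d_1} + \sqrt{a_2 d_2} \le \sqrt{(a_1+a_2)(d_1+d_2)}$.

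Next I would apply the Cauchy--Schwarz inequality to the vectors $(\sqrt{a_1},\sqrt{a_2})$ and $(\sqrt{d_1},\sqrt{d_2})$, obtaining $\bigl(\sqrt{a_1 d_1}+\sqrt{a_2 d_2}\bigr)^2 \le (a_1+a_2)(d_1+d_2)$; taking nonnegative square roots yields the claim. If one prefers to avoid naming Cauchy--Schwarz, squaring the target inequality reduces it to $2\sqrt{a_1 d_1 a_2 d_2} \le a_1 d_2 + a_2 d_1$, which is AM--GM applied to the quantities $a_1 d_2$ and $a_2 d_1$. Either route is entirely routine once the change of variables is in place.

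There is essentially no obstacle here: the only thing one has to spot is the reparametrization that collapses the $\beta_i$ terms into both factors. I would also note that the side hypothesis $\alpha_i+\delta_i>0 \implies \beta_i>0$ is not used in this argument --- it is recorded only because it holds for the subproblems to which the lemma is later applied (a nonempty associated path forces at least one bridge) --- and that the inequality is valid for all nonnegative $\alpha_i,\beta_i,\delta_i$.
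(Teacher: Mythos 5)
Your proof is correct and rests on the same reparametrization the paper uses ($y_i=\alpha_i+\beta_i$, $z_i=\delta_i+\beta_i$, with the $\beta_i$ absorbed into both factors); the paper then proves the resulting inequality by normalizing to $\epsilon_y,\epsilon_z\in[0,1]$ and maximizing, whereas you invoke Cauchy--Schwarz (or AM--GM) directly. Your route is a touch cleaner since it needs no division and hence no case split on $\beta_i=0$, and you are right that the hypothesis $\alpha_i+\delta_i>0\implies\beta_i>0$ is not needed for the inequality itself --- the paper only uses it to make its normalization step legitimate.
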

\begin{proof}
  If $\beta_1=0$ or $\beta_2=0$, the claim is trivial.  (By
  assumption, $\beta_1=0$ for example implies that $\alpha_1=0$ and
  $\delta_1=0$.)  Suppose instead that neither is zero.

  Let $y = \alpha+\beta$ and $y_i = \alpha_i+\beta_i$.  Similarly let
  $z = \delta+\beta$ and $z_i = \delta_i+\beta_i$.  Both $y>0$ and
  $z>0$ by assumption on $\beta>0$. Let $\epsilon_y = y_1/y$ and
  $\epsilon_z = z_1/z$. The proof focuses on a more general split of
  $y = y_1+y_2$ and $z=z_1+z_2$.  

  It suffices to show that for all
  $0\leq \epsilon_y,\epsilon_z \leq 1$ that
  $\sqrt{\epsilon_y y\cdot \epsilon_z z} + \sqrt{(1-\epsilon_y)y \cdot
    (1-\epsilon_z)z} \leq \sqrt{yz}$,
  or equivalently (dividing both sides by $\sqrt{yz}$) that
  $\sqrt{\epsilon_y\epsilon_z} + \sqrt{(1-\epsilon_y)(1-\epsilon_z)}
  \leq 1$.
  Fix any $\epsilon_y$, treating $\sqrt{\epsilon_y}$ and
  $\sqrt{1-\epsilon_y}$ as constants.  The expression is maximized at
  $\epsilon_z = \epsilon_y$, so
  $\sqrt{\epsilon_y\epsilon_z} + \sqrt{(1-\epsilon_y)(1-\epsilon_z)}
  \leq \epsilon_y + (1-\epsilon_y) = 1$, which completes the proof.
\end{proof}

Finally, the following lemma considers the full potential $\phi$.
\begin{lemma}\lemlabel{tightnode}
  Consider any path-relevant subproblem $s=(G,P)$, and let $s_1$ and
  $s_2$ be random variables denoting its child subproblems in the
  flattened path-relevant tree (with $\phi(s_i) = 0$ if the child does
  not exist).  Then $E[\phi(s_1) + \phi(s_2)] \leq \phi(s)(1/\sqrt{2}+2/\sqrt{C_\phi})$.
\end{lemma}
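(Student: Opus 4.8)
The plan is to split $\phi$ along its two components using linearity of expectation, and bound each piece with the lemmas already in hand. Write $\phi(s_1)+\phi(s_2) = C_\phi\bigl(\phi_1(s_1)+\phi_1(s_2)\bigr) + \bigl(\phi_2(s_1)+\phi_2(s_2)\bigr)$ and treat the $\phi_1$-term and the $\phi_2$-term separately, then reassemble.

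For the $\phi_1$-term, let $\alpha',\beta',\delta'$ denote the aggregate numbers of path ancestors, bridges, and descendents across the child subproblems $s_1,s_2$ (a nonexistent child contributing zeros). Since every path-relevant subproblem has a nonempty subpath, each existing child has $\beta_i\ge 1$, so \lemref{phisplit} applies and gives $\phi_1(s_1)+\phi_1(s_2)\le\sqrt{(\alpha'+\beta')(\delta'+\beta')}=\phi_1'$. Then \lemref{phione} yields $E[\phi_1(s_1)+\phi_1(s_2)]\le E[\phi_1']\le\phi_1(s)/\sqrt 2$.

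For the $\phi_2$-term, note $\phi_2(s_i)=\alpha_i+\beta_i+\delta_i$, so $\phi_2(s_1)+\phi_2(s_2)=\alpha'+\beta'+\delta'=\eta'$ in the notation of \lemref{nodereduction}. The bare statement $E[\eta']<(3/4)\eta$ of \lemref{nodereduction} is too lossy here (since $C_\phi$ is eventually set to $\Theta(\log^2 n)$, the term $2/\sqrt{C_\phi}$ is not enough slack to absorb $3/4-1/\sqrt 2$). Instead I would reuse the \emph{intermediate} inequality from that proof, $E[\eta']<\frac{(\eta-\beta)(\eta+\beta)}{2\eta}+\frac{\alpha\delta}{\eta}\le\frac{\eta}{2}+\frac{\alpha\delta}{\eta}$, where $\alpha,\beta,\delta$ are the counts at the final related pivot (only smaller than those on $s$, and all the bounds below respect this). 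Bounding the cross term by $\alpha\delta\le(\alpha+\beta)(\delta+\beta)=\phi_1(s)^2$ and using $\phi_1(s)\le\frac{(\alpha+\beta)+(\delta+\beta)}{2}=\frac{\eta+\beta}{2}\le\eta$ (AM--GM), we get $\alpha\delta/\eta\le\phi_1(s)^2/\eta\le\phi_1(s)$, hence $E[\eta']\le\eta/2+\phi_1(s)$.

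Combining, $E[\phi(s_1)+\phi(s_2)]\le C_\phi\phi_1(s)/\sqrt 2+\eta/2+\phi_1(s)$. Using $\eta/2\le\eta/\sqrt 2$ and $\phi_1(s)\le 2\sqrt{C_\phi}\,\phi_1(s)$ (valid since $C_\phi>1$), this is at most $(C_\phi\phi_1(s)+\eta)/\sqrt 2+2\sqrt{C_\phi}\,\phi_1(s)\le(C_\phi\phi_1(s)+\eta)/\sqrt 2+2\sqrt{C_\phi}\,\phi_1(s)+2\eta/\sqrt{C_\phi}=\phi(s)\bigl(1/\sqrt 2+2/\sqrt{C_\phi}\bigr)$. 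The main obstacle is precisely the $\phi_2$-term: the off-the-shelf $3/4$ reduction is insufficient, so the crux is extracting the sharper $\eta/2+\alpha\delta/\eta$ estimate and observing that the imbalance cross term $\alpha\delta$ is dominated by $\phi_1(s)^2$ — which is exactly where the geometric-mean design of $\phi_1$ earns its keep. A secondary point requiring care is flattened-tree bookkeeping (a node may draw several unrelated pivots before a related one), but, as in the proofs of \lemref{phione} and \lemref{nodereduction}, unrelated pivots only shrink $\alpha$, $\delta$, $\alpha+\beta$, and $\delta+\beta$, so every inequality above still holds against $\phi_1(s)$ and $\eta$ defined on $s$ itself.
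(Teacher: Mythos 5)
Your proof is correct, and while it uses the same skeleton as the paper for the $\phi_1$ term (split via \lemref{phisplit}, then \lemref{phione} to get $C_\phi\,\phi_1(s)/\sqrt{2}$), it handles the $\phi_2$ term by a genuinely different route. The paper splits into two cases according to the imbalance between $\alpha$ and $\delta$: when $\delta\leq\alpha/C_\phi$ it reruns the \lemref{asymmetry}-style calculation to show $E[\eta']<\phi_2(s)/\sqrt{2}+2\phi(s)/C_\phi$, and when $\delta>\alpha/C_\phi$ it argues that $\phi_1$ so dominates that $\phi_2(s)\leq 2\phi(s)/\sqrt{C_\phi}$ and no reduction of $\phi_2$ is needed at all (this second case is what produces the $2/\sqrt{C_\phi}$ in the statement). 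You instead extract the intermediate estimate $E[\eta']\leq\eta/2+\alpha\delta/\eta$ from the proof of \lemref{nodereduction} and observe that the cross term is controlled by the geometric-mean potential itself: $\alpha\delta/\eta\leq\phi_1^2/\eta\leq\phi_1(s)$, which the weight $C_\phi$ then absorbs. This avoids the case analysis entirely and in fact yields the sharper bound $E[\phi(s_1)+\phi(s_2)]\leq\phi(s)\bigl(1/\sqrt{2}+1/C_\phi\bigr)$, of which the stated inequality is a weakening; nothing downstream changes, since the parameter is set to $C_\phi=\Theta(\log^2 n)$ either way, but your version makes the role of $\phi_1$ as the ``imbalance absorber'' more transparent. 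One small nit: unrelated pivots do not only shrink $\alpha$ and $\delta$ individually (a bridge can become an ancestor or descendent, increasing $\alpha$ or $\delta$); what is monotone is $\eta$, $\alpha+\beta$, and $\delta+\beta$, and since those are the only quantities your inequalities actually use (exactly as in the proof of \lemref{phione}), the argument stands.
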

\begin{proof}
  Let $\alpha$, $\beta$, and $\delta$ denote the number of ancestors,
  bridges, and descendents, respectively, of path $P$ initially.  Let
  $\alpha'$, $\beta'$, and $\delta'$ be random variables denoting the
  total number of path-active ancestors, bridges, and descendents
  after partitioning around~$x$.  Let $\eta = \alpha+\beta+\delta$ and
  $\eta'=\alpha'+\beta'+\delta'$.  Observe that $\eta = \phi_2(s)$ and
  $\eta' = \phi_2(s_1) + \phi_2(s_2)$.  
  
  \lemref{phione} already bounds the impact of the partitioning on
  $\phi_1$, so the goal here is to consider the contribution of
  $\phi_2$ to the total.  There are two cases depending on the degree
  of imbalance between $\alpha$ and $\delta$.  Assume without loss of
  generality that $\alpha \geq \delta$.

  \textbf{Case 1:} $\delta \leq \alpha/C_\phi$. In this case, the imbalance
  causes $\phi_2$ to decrease significantly.  We have
  \begin{align*}
    E[\eta'] &= (\alpha/\eta) E[\eta' | x \in \Anc{G}{P}] +
                 (\delta/\eta) E[\eta' | x \in \Desc{G}{P}] \\
               &\leq (\alpha/\eta)\cdot (E[\alpha'+\beta']+\delta)  + (\delta/\eta) \cdot (\eta)
    \\
               &\leq (\alpha/\eta) \cdot (\alpha/2 + \beta +
                 \delta) + \delta  \tag*{(\lemref{asymmetry})} \\ 
               &\leq \frac{\alpha(\alpha+\beta) + \alpha\beta}{2\eta} +
                 2\delta\\
               &\leq \frac{\eta^2}{2\eta} +
                 \frac{\alpha\beta}{2\eta} + \frac{2\alpha}{C_\phi}
                 \tag*{($\eta \geq \alpha+\beta$)}\\
               &\leq \eta/2 + \frac{((\alpha+\beta)/2)^2}{2\eta} +
                 \frac{2\eta}{C_\phi} \\
               &\leq \eta/2 + \eta/8 + 2\eta/C_\phi \\
               &< \phi_2(s)/\sqrt{2} + 2\phi(s)/C_\phi \ .
\end{align*}
Adding the contribution from $\phi_1$ and $\phi_2$ gives
\begin{align*}
  E[\phi(s_1)+\phi(s_2)] &= E[\phi_1(s_1) + \phi_1(s_2)]\cdot C_\phi +
                          E[\phi_2(s_1) + \phi_2(s_1)] \\
  &\leq E[\sqrt{(\alpha'+\beta')(\delta'+\beta')}]\cdot C_\phi + E[\eta']
    \tag*{(\lemref{phisplit})}\\
  &\leq \phi_1(s)/\sqrt{2}\cdot C_\phi + E[\eta'] \tag*{(\lemref{phione})}\\
  &\leq \phi_1(s)/\sqrt{2}\cdot C_\phi + \phi_2(s)/\sqrt{2} +
    2\phi(s)/C_\phi \tag*{(reduction to $\eta'$ above)} \\
  &= \phi(s)/\sqrt{2} + 2\phi(s)/C_\phi\\
  &= \phi(s)(1/\sqrt{2}+2/C_\phi) \ .
\end{align*}

\textbf{Case 2:} $\delta > \alpha/C_\phi$.  In this case, $\phi_1(s)$
dominates by so much that it does not matter whether $\phi_2(s)$
decreases at all.  Specifically,
$\phi_1(s) = \sqrt{(\alpha+\beta)(\delta+\beta)} >
\sqrt{(\alpha+\beta)(\alpha+\beta)/C_\phi} =
(\alpha+\beta)/\sqrt{C_\phi} \geq (\phi_2(s)/2)/\sqrt{C_\phi}$,
where the last step follows because
$\alpha\geq \delta \implies \alpha+\beta \geq \eta/2$.  We therefore
have $\phi(s) \geq \phi_1(s)C_\phi \geq \phi_2(s)(\sqrt{C_\phi}/2)$,
implying $\phi_2(s) \leq 2\phi(s)/\sqrt{C_\phi}$.  The reduction to
$\phi_2$ is thus irrelevant.  Putting everything together,
$E[\phi(s_1)+\phi(s_2)] \leq \phi_1(s)/\sqrt{2} \cdot C_\phi +
\phi_2(s) \leq \phi(s)/\sqrt{2} + 2\phi(s)/\sqrt{C_\phi} = \phi(s)
(1/\sqrt{2} + 2/\sqrt{C_\phi})$.

The worse of the two cases is the second, yielding $E[\phi'] \leq
\phi(s) (1/\sqrt{2} + 2/\sqrt{C_\phi})$. 
\end{proof}

As before, define $\Phi(I_r) = \sum_{s\in I_r} \phi(s)$, where $I_r$
is the collection of subproblems at level-$r$ in the flattened tree.
Choose $C_\phi = 8\lg^2 n$.   Linearity of expectation yields the
following: 

\begin{corollary}\corlabel{tightlevel}
  Choose $C_\phi = 8\lg^2 n$, where $n$ is the initial number of
  vertices in the input graph $\hat{G}$.  Then given any collection
  $I_{r-1}$ of subproblems,
  $E[\Phi(I_r) | I_{r-1}] \leq \frac{\Phi(I_{r-1})}{\sqrt{2}} (1+1/\lg
  n)$.\qed
\end{corollary}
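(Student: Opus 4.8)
The plan is to derive this directly from \lemref{tightnode} by summing over subproblems and then substituting the chosen value of $C_\phi$. First I would observe that, by the definition of the flattened path-relevant tree, the collection $I_r$ of level-$r$ subproblems is exactly the (multiset) union, over all $s \in I_{r-1}$, of the at most two child subproblems of $s$; here I would invoke \lemref{rewrite} to justify that each node has at most two path-relevant children and that the random partitioning at each level-$(r-1)$ node determines precisely those children. Consequently $\Phi(I_r) = \sum_{s\in I_{r-1}}\bigl(\phi(s_1)+\phi(s_2)\bigr)$, where $s_1,s_2$ are the (possibly empty, with $\phi=0$) children of $s$.

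Next I would take conditional expectations given $I_{r-1}$ and apply linearity of expectation: $E[\Phi(I_r)\mid I_{r-1}] = \sum_{s\in I_{r-1}} E[\phi(s_1)+\phi(s_2)\mid I_{r-1}]$. For each fixed $s$, \lemref{tightnode} gives $E[\phi(s_1)+\phi(s_2)] \leq \phi(s)\bigl(1/\sqrt{2}+2/\sqrt{C_\phi}\bigr)$, and this bound holds regardless of the rest of $I_{r-1}$ since the random choices generating the children of $s$ are independent of the choices at the other level-$(r-1)$ nodes (or one can simply note the bound is a pointwise bound on the conditional expectation). Summing over $s\in I_{r-1}$ and using $\Phi(I_{r-1}) = \sum_{s}\phi(s)$ yields $E[\Phi(I_r)\mid I_{r-1}] \leq \Phi(I_{r-1})\bigl(1/\sqrt{2}+2/\sqrt{C_\phi}\bigr)$.

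Finally I would substitute $C_\phi = 8\lg^2 n$, so that $2/\sqrt{C_\phi} = 2/(2\sqrt{2}\,\lg n) = 1/(\sqrt{2}\,\lg n)$, and hence $1/\sqrt{2}+2/\sqrt{C_\phi} = \frac{1}{\sqrt{2}}\bigl(1+1/\lg n\bigr)$, which gives the claimed bound.

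There is essentially no hard step here: the statement is a routine aggregation of the per-node estimate \lemref{tightnode} via linearity of expectation, plus an arithmetic simplification. The only point requiring a moment of care is the bookkeeping that the children of the level-$(r-1)$ subproblems are exactly the level-$r$ subproblems in the flattened tree, together with making the conditioning precise — but both are immediate from the construction in \secref{seqover} and \lemref{rewrite}.
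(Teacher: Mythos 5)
Your proposal is correct and matches the paper's argument, which is exactly the one-line observation that \lemref{tightnode} applied to each $s \in I_{r-1}$ plus linearity of expectation gives $E[\Phi(I_r)\mid I_{r-1}] \leq \Phi(I_{r-1})(1/\sqrt{2}+2/\sqrt{C_\phi})$, and $C_\phi = 8\lg^2 n$ makes $2/\sqrt{C_\phi} = 1/(\sqrt{2}\lg n)$. The paper leaves this unproved beyond the remark ``linearity of expectation yields the following,'' so your slightly more careful bookkeeping about children in the flattened tree and the conditioning is a harmless elaboration of the same route.
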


\paragraph{Proof \lemref{seqmain}.} 
The proof is analogous to \lemref{weakdiam}.  There are two key
differences: the initial potential is higher, at
$\Phi(I_0)\leq
(C_\phi+1) n \leq 9\lg^2 n$
according to \lemref{phibounds}, and the reduction of active vertices
with each round is slightly worse, at $(1/\sqrt{2}) (1+1/\lg n)$ from
\corref{tightlevel}.  Assuming $n\geq 16$ so that
$(1/\sqrt{2})^6(1+1/\lg n)^6<1/2$, \thmref{karp} implies
$\prob{\Phi(I_{r+8}) > (1/\sqrt{2})^r(1+1/\lg n)^r (9 n\lg^2 n)} <
1/2$.
As before, \lemref{phibounds} states that $\Phi(I_{r+8})$ is an upper
bound on the number of active vertices and hence also the total length
of all remaining subpaths. Thus, by \lemref{tree}, running the
algorithm to level $r+8$ is enough to yield a shortcutted path of
length at most
$O(2^r) + \Phi(I_{r+8}) \leq O(2^r) + O((1/\sqrt{2})^r(1+1/\lg n)^r
n\lg^2 n)$
with probability at least $1/2$.  For $r = O(\lg n)$, this reduces to
$O(2^r) + \Phi(I_{r+8}) \leq O(2^r) + O((1/\sqrt{2})^r n\lg^2 n)$.
Choosing $r=(2/3) (\lg n + 2\lg\lg n) + \Theta(1)$ balances the terms
and yields a path of
length $O(n^{2/3}\log^{4/3} n)$.

\begin{algorithm}[t]
\small
\caption{Modified sequential algorithm for shortcutting}\alglabel{seq2}
\SetKwFunction{footwo}{SeqSC2}
\Indm\footwo{$G=(V,E)$}\\
\Indp
  \nl\lIf{the recursion depth is $\lg n$}{\KwRet{$\emptyset$}}
  \nl $S \gets \emptyset$\\
  \nl \While{$V \neq \emptyset$}{
  \nl   select a vertex $x\in V$ uniformly at random\\
  \nl   $\Succ{}{} \gets \Succ{G}{x}$ \\
  \nl   $\Pred{}{} \gets \Pred{G}{x}$ \\
  \nl   $S \gets S \cup \set{(x,v)| v\in \Succ{}{}} \cup
        \set{(u,x)|u\in\Pred{}{}}$ \tcp*{add shortcuts to/from vertices\\
      having paths from/to $x$, resp.\ }
  \nl   $V_B \gets \Succ{}{} \cap \Pred{}{}$ ; \hspace{1em}
        $V_S \gets \Succ{}{} \backslash V_B$ ; \hspace{1em}
        $V_P \gets \Pred{}{} \backslash V_B$ ; \hspace{1em}
        $V_R \gets V \backslash (V_B \cup V_S \cup V_P)$\\ 
  \nl  $S \gets S \cup \text{\footwo{$G[V_S]$}} \cup \text{\footwo{$G[V_P]$}}$\\
  \nl  $G \gets G[V_R]$
  }
  \nl \KwRet{$S$}
\end{algorithm}

\subsection{Runtime and Number of Shortcuts}\seclabel{seqfull}

This section completes the proof of \thmref{seqfull} by analyzing the
running time and number of shortcuts added.  As stated, however, the
running time of \algref{seq1} is not worst case, so it does not meet
the promise of a Monte Carlo algorithm.  

This section instead analyzes \algref{seq2}.  \algref{seq2} is
obtained from \algref{seq1} by replacing one of the recursive calls
(specifically \foo{$G[V_R]$}) with a loop.  There is also a new base
case after $\lg n$ levels of recursion to make the bounds worst case,
where (as always) $n$ here refers to the number of vertices in the
original graph $\hat{G}$. Aside from this one change, \algref{seq1}
and \algref{seq2} are equivalent.

The following lemma indicates that the main path-length lemmas
(\lemreftwo{weakdiam}{seqmain}) still hold even with the truncated
execution.  More precisely, proof of those lemmas only relies on the
execution reaching a depth much less than $\lg n$ in the
flattened path-relevant tree.

\begin{lemma}
  Consider an execution of \algref{seq1} and the corresponding
  flattened path-relevant tree.  When mapped to an execution of
  \algref{seq2} with the same random choices, the first $\lg n-1$
  levels of the flattened tree all have recursion depth $<\lg n$ in \algref{seq2}.
\end{lemma}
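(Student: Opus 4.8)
The plan is to set up a tight correspondence between the recursion tree of \algref{seq1} and the recursion structure of \algref{seq2}, and then to observe that the flattening operation on the path-relevant tree collapses exactly those recursion chains that \algref{seq2}'s \textbf{while} loop absorbs without increasing recursion depth. The key quantity will be, for a subproblem $s$ produced by \algref{seq1}, the number $d(s)$ of edges on the root-to-$s$ path in \algref{seq1}'s recursion tree that recurse on a $V_S$- or $V_P$-subgraph (rather than on $V_R$); I will argue that $d(s)$ is exactly the recursion depth at which \algref{seq2} processes $s$, and that $d(s)\le r$ whenever $s$ lies in a level-$r$ node of the flattened path-relevant tree.

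First I would spell out the correspondence. \algref{seq2} is \algref{seq1} with the tail-recursive call on $G[V_R]$ unrolled into the loop: under the coupling that feeds both algorithms the same sequence of pivot choices, a maximal chain $G\to G[V_R]\to (G[V_R])[V_R]\to\cdots$ of $V_R$-recursions in \algref{seq1} is carried out by successive iterations of a single invocation of \algref{seq2}, whereas each recursive call on a $V_S$- or $V_P$-subgraph becomes a genuine recursive invocation and so increments the recursion depth by exactly one. Since the new base case at depth $\lg n$ never fires at smaller depth, the two executions perform identical operations at all depths below $\lg n$, which yields recursion depth $=d(s)$ for every subproblem $s$.

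Next I would bound $d(s)$ using \lemref{rewrite}. Flattening merges with its unique child each case-2 node (pivot unrelated to the path), and for such a node the single path-relevant child lies in $G[V_R]$; hence a flattened-tree node is a segment consisting of a chain of case-2 subproblems terminated by one ``branching'' subproblem (case 1, 3, or 4), and every edge internal to such a segment is a $V_R$-edge, so $d(\cdot)$ is constant along a segment. By cases 3 and 4, a branching subproblem has at most one $V_S$- or $V_P$-edge and at most one $V_R$-edge to path-relevant children, and each such child heads a segment one level deeper in the flattened tree; thus the $d$-value of a level-$r$ segment head exceeds that of its parent segment head by at most one, and since the root segment starts at $(\hat G,\hat P)$ with $d=0$, induction on $r$ gives $d(s)\le r$ for every $s$ in a level-$r$ node. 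Combining the two parts, every path-relevant subproblem in the first $\lg n - 1$ levels is processed by \algref{seq2} at recursion depth at most $\lg n - 2 < \lg n$.

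I expect the only real obstacle to be the bookkeeping in the correspondence: carefully matching the flattening's contractions with \algref{seq2}'s loop iterations, and pinning down the recursion-depth convention so the off-by-one comes out right. No idea beyond \lemref{rewrite} seems to be needed, and since the path-length arguments of \lemreftwo{weakdiam}{seqmain} only ever examine levels up to $\tfrac{2}{3}\lg n + O(\lg\lg n)$, which is well below $\lg n - 1$ for large $n$, this lemma indeed lets those arguments transfer to the truncated \algref{seq2} verbatim.
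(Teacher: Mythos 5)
Your proposal is correct and is essentially the paper's argument, just spelled out in more detail: the paper's proof is the two-sentence observation that the flattened tree merges only \emph{some} of the calls corresponding to $G[V_R]$, whereas \algref{seq2} merges \emph{all} of them, so the recursion depth in \algref{seq2} can only be smaller than the flattened-tree level. Your quantity $d(s)$ (counting only $V_S$/$V_P$ descents) and the induction over flattened levels make that comparison explicit, but no new idea is involved.
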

\begin{proof}
  The flattened tree only merges some of the calls corresponding to
  $G[V_R]$.  \algref{seq2} merges all such nodes, which can only
  reduce the depth of nodes further.
\end{proof}

The next lemmas bound the number of shortcuts and running time. 
\begin{lemma}\lemlabel{seqshortcuts}
  Consider a graph $\hat{G}=(\hat{V},\hat{E})$, and let $n =
  \card{\hat{V}}$.  Each execution of \algref{seq2} creates $O(n\log n)$
  shortcuts.
\end{lemma}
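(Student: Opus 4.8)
The plan is to charge the shortcuts produced in each iteration of the \textbf{while} loop of \algref{seq2} to the vertices that iteration permanently removes from $V$, and then to sum the resulting bound over the recursion levels, of which there are only $\lg n$ because of the new base case on line~1.

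First I would bound the shortcuts from one iteration. Run on the current graph $G=(V,E)$ with pivot $x$, line~7 adds $\set{(x,v)\mid v\in\Succ{}{}}\cup\set{(u,x)\mid u\in\Pred{}{}}$, hence at most $\card{\Succ{}{}}+\card{\Pred{}{}}$ new shortcuts. Since $\Succ{}{}=V_B\cup V_S$ and $\Pred{}{}=V_B\cup V_P$ with $V_B,V_S,V_P$ pairwise disjoint, this is $2\card{V_B}+\card{V_S}+\card{V_P}\le 2\card{V_B\cup V_S\cup V_P}$; equivalently, each vertex that the iteration either discards (those in $V_B$) or hands to a recursive call (those in $V_S\cup V_P$) is charged at most two shortcuts.

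Next I would aggregate within a single invocation and then across recursion levels. Each iteration replaces $V$ by $V_R=V\setminus(V_B\cup V_S\cup V_P)$ and the loop runs until $V=\emptyset$, so over the whole invocation the removed sets partition the invocation's vertex set; thus an invocation on $n_v$ vertices creates at most $2n_v$ shortcuts \emph{directly} (not counting its recursive calls). Moreover, the sets $V_S,V_P$ passed to recursive calls over all iterations are pairwise disjoint subsets of these $n_v$ vertices, so the children of an invocation collectively operate on at most $n_v$ vertices. By induction on depth, the total number of vertices over all invocations at any fixed recursion depth $d$ is therefore at most $n=\card{\hat V}$, so the shortcuts created at depth $d$ total at most $2n$. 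Line~1 halts the recursion once the depth reaches $\lg n$, so only depths $0,\dots,\lg n-1$ contribute, giving a grand total of at most $2n\lg n=O(n\log n)$.

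The step needing the most care is the disjointness bookkeeping: verifying that $V_B,V_S,V_P$ partition the vertices a single iteration removes (immediate from the definitions in \algref{seq2}) and, more importantly, that the recursive subproblems spawned over an entire invocation carve its vertex set into disjoint pieces, which is exactly what makes the ``total vertices per level $\le n$'' invariant hold. The depth-$\lg n$ truncation is what upgrades this into the claimed worst-case bound; without it, bounding the shortcut count would require a separate argument that the recursion tree is shallow.
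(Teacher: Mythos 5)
Your proposal is correct and follows essentially the same argument as the paper: charge each shortcut to its non-pivot endpoint, which lies in $V_B\cup V_S\cup V_P$ and is removed from the current graph, note that subproblems at any recursion depth are vertex-disjoint so each level contributes $O(n)$ shortcuts, and multiply by the $\lg n$ depth cutoff. Your accounting is slightly more careful about the factor of $2$ for vertices in $V_B$, but the structure of the proof is the same.
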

\begin{proof}
  Consider a call to \footwo{$G$} on $G=(V,E)$. Each shortcut added
  removes a vertex: if, e.g., $(x,v)$ is created, then either $x \in
  V_B$ or $x \in V_S$, both of which sets are removed from $G$ at the end
  of the iteration.  Thus, there can be at most $\card{V}$ arcs
  added.  

  There are potentially many recursive subproblems, but by the
  same argument they are all disjoint subgraphs.  Thus, the total
  number of arcs added at each level of recursion is $O(n)$.  There
  are $O(\lg n)$ levels by construction, which completes the proof.  
\end{proof}

\begin{lemma}\lemlabel{seqruntime}
  Consider a graph $\hat{G}=(\hat{V},\hat{E})$, and let
  $n=\card{\hat{V}}$ and $m=\card{\hat{E}}$.  \algref{seq2} can be
  implemented to run in $O(m\log n)$ time.
\end{lemma}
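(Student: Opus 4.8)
The plan is to show that each call to \footwo{$G$} can be charged $O((\card{E_G})\log n)$ time where $E_G$ is the arc set of the subgraph passed to that call, and then argue that the subgraphs handled across all recursive calls at any fixed recursion depth are arc-disjoint subgraphs of $\hat{G}$, so each level costs $O(m\log n)$ total; since there are $O(\log n)$ levels (by the base case at depth $\lg n$), the overall bound is $O(m\log^2 n)$ --- wait, but the lemma claims $O(m\log n)$, so the charging must be tighter. The right accounting is: the work done \emph{at a single node} of the recursion (i.e.\ one invocation of \footwo, excluding recursive calls but \emph{including} all iterations of its \textbf{while} loop) is $O(\card{E_G})$ where $E_G$ is that node's arc set, \emph{not} $O(\card{E_G}\log n)$. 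The $\log n$ factor then comes only from summing the disjoint per-level costs over the $O(\log n)$ levels.

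So the key steps, in order: First, I would establish the \emph{disjointness structure}: within a single call to \footwo{$G$}, the vertex sets $V_B, V_S, V_P, V_R$ partition $V$ (as in \algref{seq1}), the recursive calls go to the induced subgraphs on $V_S$ and $V_P$ and the loop continues on $G[V_R]$, so across one call's entire execution the subgraphs handed off (to recursion or to the next loop iteration) are vertex-disjoint, hence arc-disjoint, and are all subgraphs of the original $G$ minus whatever was consumed. Consequently, at each fixed depth $d$ of the recursion tree, the subgraphs being processed are pairwise arc-disjoint subgraphs of $\hat{G}$, so the arc sets sum to at most $m$. Second, I would bound the cost of one node: the \textbf{while} loop iterates some number of times, and in iteration $i$ we pick $x$, compute $\Succ{G}{x}$ and $\Pred{G}{x}$ by forward and backward BFS/DFS, add shortcuts, and split off $V_S^{(i)}, V_P^{(i)}$, then set $G \gets G[V_R^{(i)}]$. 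The forward search in iteration $i$ touches only arcs incident to $\Succ{}{} = V_B^{(i)}\cup V_S^{(i)}$, and \emph{all} those vertices (and their incident arcs) are removed from the working graph at the end of the iteration; similarly the backward search touches arcs incident to $V_B^{(i)}\cup V_P^{(i)}$, also removed. So the arcs explored by the two searches in iteration $i$ are charged to vertices/arcs that disappear, except possibly arcs with one endpoint in $V_R^{(i)}$ --- and those we must handle carefully. Using an adjacency-list representation where we can "peel off" vertices, the standard trick is that an arc is traversed in a search at most $O(1)$ times before at least one of its endpoints is deleted, except we need to also account for arcs inspected-but-not-followed (because the head was already deleted); the cost of inspecting such a dead arc is charged to the deletion of its head. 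This gives $O(\card{E_G})$ total work across all loop iterations of the node, plus the shortcut insertions, which by \lemref{seqshortcuts}'s per-call argument number $O(\card{V})=O(\card{E_G})$ at this node.

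Third, I would sum: total time $= \sum_{\text{depths } d < \lg n} \sum_{\text{nodes at depth } d} O(\card{E_{G}}) = \sum_{d<\lg n} O(m) = O(m\log n)$, using arc-disjointness within each level and the depth-$\lg n$ truncation. I would also remark on the data-structural details: maintaining each subgraph as an array of adjacency lists, performing searches that physically remove visited vertices and splice their arcs out, and passing the two new subgraphs $G[V_S], G[V_P]$ to recursion by actually extracting their vertex lists and arc lists in time proportional to their size --- this extraction is again $O(\card{E_G})$ per node and fits the same accounting.

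\textbf{The main obstacle} I anticipate is the arc-inspection bookkeeping in the searches. When we run BFS/DFS from $x$ in the working graph $G[V_R^{(i-1)}]$, the search should only cost in proportion to the arcs it actually uses, but a naive search over stored adjacency lists may repeatedly inspect arcs leading to already-deleted vertices. The clean fix is to ensure that whenever a vertex is deleted (either because it went into $V_B, V_S$, or $V_P$ and left the working graph), we spend $O(\deg)$ time then to mark/remove its incident arcs, so that later searches never waste time on it; equivalently, charge each arc inspection to the removal event of one of its endpoints. Since every arc's endpoints are eventually removed (by the depth-$\lg n$ base case at the latest, and usually much sooner), and an arc in a given subgraph is "active" only while both endpoints are in the working graph, each arc incurs $O(1)$ search-inspection cost per node it belongs to --- and it belongs to at most one node per level, for $O(\log n)$ levels total, giving $O(m\log n)$. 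Making this charging airtight (especially the interaction between the forward search, the backward search, and the vertices that survive into $V_R$) is the part that needs care; everything else is routine.
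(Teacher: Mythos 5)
Your proposal is correct and follows essentially the same route as the paper: charge each invocation of \footwo{} (including all its while-loop iterations) $O(\card{V}+\card{E})$ time via the observation that each arc is explored at most once per search direction before its endpoints leave the working graph, use vertex-disjointness (hence arc-disjointness) of the subproblems at each recursion depth to get $O(m)$ per level, and multiply by the $O(\log n)$ depth bound from the truncation. The only detail the paper spells out that you gloss over is how to draw the uniformly random pivots cheaply (permute $V$ once up front and skip already-removed vertices, costing $O(\card{V})$ per call), and your passing claim $O(\card{V})=O(\card{E_G})$ for a single subgraph should really be absorbed into the per-level sum using the standing assumption $m\geq n/2$ --- both minor points that do not affect the argument.
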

\begin{proof}
  Proof is similar to \lemref{seqshortcuts}, getting $O(m)$ total
  time at each level of recursion, assuming that the call \footwo{$G$}
  can be made to run in $O(\card{V}+\card{E})$ time.  Given a sample
  $x\in V$, it is straightforward to implement each search, and build
  the induced subgraphs, to run in time $O(a)$ where $a$ is the number
  of arcs explored.  Each arc is only explored by one search in each
  direction, so the total number of arcs visited is $O(\card{E})$.
  Finally, sampling vertices can be achieved by randomly permuting the
  vertices up front, iterating over that list, and checking whether
  the vertex has already been visited by a search.  This takes a total
  of $O(\card{V})$ time
\end{proof}

\paragraph{Proof of \thmref{seqfull}.}  The full algorithm consists of
$\Theta(\log n)$ independent runs of \algref{seq2}.  For each related
pair $u\preceq v$, each run has probability $\geq 1/2$ of reducing the
distance between those vertices to $O(n^{2/3}\log^{4/3}n)$ by
\lemref{seqmain}. Thus, a Chernoff bound across $\Theta(\log n)$ runs gives a
high-probability result, i.e., failure probability at most $1/n^{c+2}$
for any constant~$c$.  There are only $n^2$ related pairs of
vertices, so a union bound across all of them gives a failure 
probability of at most $1/n^c$.  If there are no failures,
then the stated diameter is achieved.   The running time and number of
shortcuts are obtained by multiplying the bounds from
\lemreftwo{seqruntime}{seqshortcuts} by the $\Theta(\log n)$ runs.  \qed

\section{An Algorithm with Distance-Limited  Searches}\seclabel{parallel}
\newcommand{\dmax}{\accentset{\smallfrown\vspace{-.4ex}}{d}}
\newcommand{\dmin}{\bar{d}}
\newcommand{\malpha}{\bar{\alpha}}
\newcommand{\mbeta}{\bar{\beta}}
\newcommand{\mdelta}{\bar{\delta}}
\newcommand{\meta}{\bar{\eta}}
\newcommand{\mphi}{\bar{\phi}}

This section presents a modified algorithm that is more amenable to
being parallelized.  For now, this algorithm can be viewed as a
sequential algorithm --- discussion of the parallel implementation is
deferred to \secref{details}.  But the main ideas are guided by
certain sequential bottlenecks.  As in \secref{sequential},
$\hat{G} = (\hat{V},\hat{E})$ and $n=\card{\hat{V}}$ are used only to
refer to the original graph.

There are two main obstacles to parallelizing \algref{seq2}, but the
first is more serious.  Finding the set $\Pred{G}{x}$ or $\Succ{G}{x}$
entails a graph search, which can have linear span in a high-diameter
graph.  The solution for this problem is to modify the algorithm to
use a \defn{$D$-limited BFS}, returning only the vertices within $D$
hops of the source $x$, but doing so introduces some other
difficulties.  This section thus focuses on modifying the algorithm to
work with distance-limited searches for appropriate distance~$D$.  

The second obstacle is best exhibited by the loop in \algref{seq2}.
If there are no arcs in the graph, for example, the loop requires
$\Omega(n)$ iterations.  The solution is to perform multiple pivots in
parallel, but in a controlled way that does not sacrifice much
performance.  

The full algorithm is given in pseudocode as \algref{par}.
\secref{limited} walks through the ideas incrementally, guided by
rough intuitions behind the analysis.  The key performance lemma,
analogous to \lemref{seqmain}, is the following:

\begin{lemma}\lemlabel{parmain}
  Let $\hat{G}=(\hat{V},\hat{E})$ be a directed graph, let
  $n=\card{\hat{V}}$, let $m=\card{\hat{E}}$, and assume without loss
  of generality that $m\geq n/2$.  

  Consider any directed path $\hat{P}$ from $u$ to $v$ with
  $\length{\hat{P}} \leq D$, for $D=\Theta(n^{2/3}\log^{4/3}n)$.  Let
  $S$ be the shortcuts produced by an execution of \algref{par} on
  $\hat{G}$ starting with $h=\lg n$.  Then with probability at least
  $1/2$: (1) there exists a path from $u$ to $v$ in
  $G_S = (\hat{V},\hat{E}\cup S)$ with length at most $D/2$, (2) the
  number of shortcuts produced is $\card{S} = O(n\log^2 n)$, and (3) the
  total number of vertices and arcs visited by searches is
  $O(m\log^2 n)$.  Moreover, the maximum distance used for any search is
  $O(n^{2/3}\log^{14}n)$;
\end{lemma}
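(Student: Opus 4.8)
The plan is to re-run, over an execution of \algref{par}, the three strands of the sequential analysis of \secref{sequential} --- path length (\secreftwo{seqsimple}{seqbetter}), number of shortcuts (\lemref{seqshortcuts}), and search cost (\lemref{seqruntime}) --- while additionally tracking the radii of the distance-limited searches. As before, the path-length guarantee (1) is the substantive claim and (2), (3), and the search-radius bound are bookkeeping. So I would fix a simple path $\hat{P}$ with $\length{\hat{P}}\le D$ up front, form the flattened path-relevant subproblem tree it induces, keep the potential $\Phi(I_r)=\sum_{s\in I_r}\phi(s)$ of \secref{seqbetter}, and use \thmref{karp} (via an analogue of \corref{tightlevel}) to argue that after $O(\log n)$ levels the total length of the surviving subpaths is $O(n^{2/3}\log^{4/3}n)$, which is at most $D/2$ once the constant hidden in $D$ is chosen large enough. \lemref{tree} then furnishes the $u$-to-$v$ path of length $\le D/2$ in $G_S$.

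The work lies in showing that the two modifications in \algref{par} --- distance-limited searches, and batched (``parallel'') pivots with the recursion truncated at depth $h=\lg n$ --- do not break this argument. The depth truncation is harmless for the reason given in the (unlabeled) lemma just before \lemref{seqshortcuts}: the proof of \lemref{seqmain} inspects the flattened tree only to depth $O(\log n)\ll\lg n$. The batching I would handle by observing that a batch of pivots, applied to the current unvisited subgraph, can be charged pivot-by-pivot within the batch, so it has at most the same per-path effect as the corresponding sequence of single pivots analyzed in \lemref{rewrite}; it contributes only an extra $O(\log n)$-type factor to the counting arguments, not to the path length. The real difficulty is the distance-limited searches: a $D'$-limited BFS from $x$ returns only subsets $\tilde R^{+}(x)\subseteq\Succ{G}{x}$ and $\tilde R^{-}(x)\subseteq\Pred{G}{x}$, so the partition $V_B,V_S,V_P,V_R$, the case analysis of \lemref{rewrite}, and --- most delicately --- the antisymmetry of the ``preserves'' relation in \lemref{asymmetry} are not automatically valid. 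I would set $D'=\Theta(n^{2/3}\log^{14}n)$ and prove a \emph{faithfulness} statement: in every path-relevant subproblem $(G,P)$ that arises, every reachability relationship among $P$ and its path-active vertices on which the analysis depends is witnessed by a path of length $\le D'$, using that $\length{P}\le\length{\hat P}\le D$, the structure of the recursion, and the fact that already-added shortcuts collapse the relevant certificates. Given faithfulness, \lemref{rewrite}, \lemref{asymmetry}, \lemref{nodereduction}, \lemref{phione}, \lemref{phisplit}, and \lemref{tightnode} carry over unchanged, and the proof of \lemref{seqmain} yields (1) verbatim.

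Claims (2), (3), and the search-radius bound follow the sequential templates. For (2), as in \lemref{seqshortcuts} each created shortcut is charged to a vertex deleted from the current subgraph in the same iteration, the subgraphs handled at one recursion level are disjoint, and there are $O(\log n)$ levels; the batched pivoting adds at most another $O(\log n)$ factor, for $\card{S}=O(n\log^{2}n)$. For (3), as in \lemref{seqruntime} the forward and backward $D'$-limited searches from distinct pivots at a fixed recursion level explore (near-)disjoint arc sets, so the searches at a level cost $O(m)$ arc-visits up to the $O(\log n)$ batch overhead; over $O(\log n)$ levels this is $O(m\log^{2}n)$. For the search radius, $D'=\Theta(n^{2/3}\log^{14}n)$ is hard-wired into \algref{par}, so no search ever exceeds it; the one thing to verify is that this same $D'$ is large enough for the faithfulness claim of the previous paragraph --- which is precisely where the $\log^{14}n$ budget is consumed.

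I expect the main obstacle to be exactly that faithfulness argument, and within it, preserving the antisymmetry of \lemref{asymmetry} --- the linchpin of the whole progress analysis --- when the computed reach sets are only subsets of the true ones. This is what forces both the hypothesis $\length{\hat P}\le D$ (so the tracked path has short internal certificates to begin with) and the inflated radius $D'$ (so that the polylogarithmically many recursion levels and the batched pivots cannot stretch a relevant certificate past the search horizon), and it is why \secref{limited} develops \algref{par} in stages rather than stating it outright.
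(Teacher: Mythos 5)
There is a genuine gap, and it sits exactly where you located the difficulty: the ``faithfulness'' statement you propose cannot be made true by inflating the search radius to any $D'=\Otilde{n^{2/3}}$. In a high-diameter graph, a vertex can be an ancestor of the tracked path only via paths of length $\gg D'$, so the distance-limited searches simply do not see the reachability relations your argument needs, and no choice of polylogarithmic inflation repairs this. The paper makes this point explicitly: with $D$-limited searches, \lemref{rewrite} fails outright --- e.g.\ a path in which $x\preceq_{dD}v_{2k}$ but $x\not\preceq_{dD}v_{2k+1}$ gets shattered into $\Theta(\ell)$ pieces, with no later chance to shortcut them. The fix is not analytical but algorithmic: \algref{par} extends each core search by an extra $D$ hops to collect \emph{fringe vertices}, which are duplicated into the recursive subproblems so that a path of length $\le D$ stays contiguous in a child (\lemref{parrewrite}); this is why the hypothesis $\length{\hat{P}}\le D$ is there. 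Fringe duplication in turn threatens the potential argument (path-active vertices can now appear in several subproblems), which the paper controls by choosing the search distance $d$ at random from a window of $N_L=\Otilde{1}$ layers --- so only an $O(\log n/N_L)$ fraction lands on a fringe in expectation (\lemref{fringe}) --- and by a monotonically decreasing distance schedule (the offset $hN_kN_L-iN_L$), so that progress measured at one distance is not undone at a larger distance later. None of the sequential lemmas then ``carry over unchanged'': path-activity and the potential are redefined with respect to $\dmax D$-limited relations, \lemref{asymmetry} is reproved only for fully related pivots (\lemref{parasymmetry}), partly-but-not-fully related vertices need the separate algebraic lemmas \lemreftwo{goodfunc}{goodfuncworks}, fringe contributions need \lemref{phifringe} and \lemref{partightbound}, and because $\Phi$ can now \emph{increase}, \thmref{karp} is only applicable after introducing the reweighted potential and the fringe-failure truncation $\Phi'$ (\lemref{fringefail}).

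Your treatment of the batched pivots and of claims (2)--(3) is also too optimistic. Batching does affect the path length: each extra path-related pivot in a batch can split the path again (\lemref{parrewrite}, case 4), so the flattened tree is no longer binary; the paper has to take $\epsilon_\pi=O(1/\log^3 n)$ so that the expected number of extra path-related pivots per node is $O(\epsilon_\pi\log n)$ (\corref{pivotbound}) and then prove the tree still has $O(2^r)$ nodes with probability $7/8$ (\lemref{partree}) --- a constant $\epsilon_\pi$ with a generic ``extra $O(\log n)$ factor'' would not do, and the flattening itself must be defined on ``never related'' rather than ``not related at the chosen distance'' to avoid conditioning that breaks the fringe distribution. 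For (2) and (3), the sequential charging arguments fail because the level-$h$ subgraphs are no longer disjoint (fringe vertices and their arcs are duplicated) and shortcuts are added to fringe vertices that are \emph{not} removed from the graph; the paper instead bounds the number of searches reaching each vertex by $O(\log n)$ w.h.p.\ (\lemref{numpivots}, \corref{numvisits}) and shows via \lemref{fringe} and Markov that the total subproblem size grows only by a $(1+O(1/\log n))$ factor per level, which is also why (2) and (3) hold only with constant probability rather than deterministically, and why the lemma's overall success probability is $1/2$ after a union bound with \lemref{keypar}. Finally, the $O(n^{2/3}\log^{14}n)$ radius is not hard-wired; it is the product $hN_kN_L\cdot D$ of the chosen parameters.
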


Using multiple runs of \algref{par} (see \secref{diamalg}) yields the
following:

\begin{theorem}\thmlabel{parfull}
  There exists a randomized algorithm that takes as input a directed
  graph $\hat{G}=(\hat{V},\hat{E})$ and uses distance-limited searches
  with the following guarantees.  Let $n=\card{\hat{V}}$,
  $m=\card{\hat{E}}$, and without loss of generality $m\geq n/2$.
  Then (1) the maximum distance used for any search is
  $O(n^{2/3}\log^{14}n)$; (2) the algorithm produces a
  size-$O(n\log^4n)$ set $S^*$ of shortcuts; (3) the total number of
  vertices and arcs visited by searches is $O(m\log^4n + n\log^8n)$,
  and the searches dominate the overall number of primitive operations
  performed; and (4) with high probability, the diameter of $G_{S^*}$
  is $O(n^{2/3}\log^{4/3}n)$,
\end{theorem}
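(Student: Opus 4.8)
The plan is to obtain \thmref{parfull} from \lemref{parmain} much as \thmref{seqfull} was obtained from \lemref{seqmain}, but with an extra layer of iteration to make up for the fact that \lemref{parmain} only \emph{halves} a path, and only does so for paths already of length at most $D = \Theta(n^{2/3}\log^{4/3}n)$. Concretely, the algorithm of \secref{diamalg} runs in $\Theta(\log n)$ \emph{phases}; phase $j$ performs $\Theta(\log n)$ independent executions of \algref{par} (with the distance limit from \lemref{parmain} and $h = \lg n$) on the current graph $G^{(j-1)}$, which consists of $\hat G$ together with every shortcut accumulated during phases $1,\dots,j-1$, and then sets $G^{(j)}$ to be $G^{(j-1)}$ augmented by all shortcuts produced in phase $j$; the output $S^*$ is the union over all phases. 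I would first fix constants so that $\Theta(\log^2 n)$ runs are performed in total, and note that parts (1)--(3) of \thmref{parfull} follow by summing the per-run bounds of \lemref{parmain}, leaving the diameter claim (4) as the substantive part.

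For (4), the key per-phase claim is: conditioned on any fixed $G^{(j-1)}$, with probability at least $1 - n^{-c-2}$ every pair $(a,b)$ with $\mathrm{dist}_{G^{(j-1)}}(a,b) \le D$ satisfies $\mathrm{dist}_{G^{(j)}}(a,b) \le D/2$. This follows from \lemref{parmain}(1): for a fixed such pair each of the $\Theta(\log n)$ runs independently produces an $(a,b)$-path of length at most $D/2$ with probability at least $1/2$ (and adding shortcuts only shortens distances), so they all fail with probability $2^{-\Theta(\log n)} \le n^{-c-4}$, and a union bound over the $\le n^2$ pairs finishes; call this event $\mathcal E_j$. I would then apply the segment-chaining argument: on $\mathcal E_j$, a shortest path of length $L$ in $G^{(j-1)}$ is cut into $\lceil L/D\rceil$ consecutive subpaths of length $\le D$, each subpath's endpoints move to $G^{(j)}$-distance $\le D/2$ by $\mathcal E_j$, and concatenating gives $\mathrm{dist}_{G^{(j)}} \le \lceil L/D\rceil\cdot D/2 \le L/2 + D/2$; hence $\mathrm{diam}(G^{(j)}) \le \mathrm{diam}(G^{(j-1)})/2 + D$ on $\mathcal E_j$. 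Since $\mathrm{diam}(\hat G) < n$, unrolling this over $J = \lceil\lg(n/D)\rceil = \Theta(\log n)$ phases gives $\mathrm{diam}(G^{(J)}) < n/2^{J} + 2D = O(D) = O(n^{2/3}\log^{4/3}n)$; a union bound over the $J$ events $\mathcal E_j$ yields overall failure probability $\le n^{-c-1} \le n^{-c}$, proving (4) with high probability.

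For the cost, each run uses searches of depth $O(n^{2/3}\log^{14}n)$ by the last clause of \lemref{parmain}, a deterministic parameter unaffected by the phasing, which is (1). Each run produces $O(n\log^2 n)$ shortcuts (\lemref{parmain}(2); a run exceeding this budget can be aborted with its shortcuts discarded, making the bound worst case, just as the recursion-depth truncation does in \algref{seq2}), so $|S^*| = O(n\log^4 n)$ over the $\Theta(\log^2 n)$ runs, which is (2), and the current graph always has at most $m + O(n\log^4 n)$ arcs. Feeding $m' \le m + O(n\log^4 n)$ into \lemref{parmain}(3) gives a per-run visited count of $O(m\log^2 n + n\log^6 n)$, hence $O(m\log^4 n + n\log^8 n)$ in total, which is (3); and as in \lemref{seqruntime} and \secref{details}, every non-search primitive operation in \algref{par} is charged to an incident arc or, via $m \ge n/2$, to a vertex, so the searches dominate.

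I expect the main obstacle to be the bootstrapping. Because \lemref{parmain} delivers only a constant-factor improvement and only for paths already of length $\le D$, no single run --- and indeed no single phase --- can reduce a $\Theta(n)$-diameter graph to diameter $O(D)$; the two-level iteration together with the segment-chaining argument is therefore essential, and it forces reapplying \lemref{parmain} to graphs already augmented by earlier shortcuts. Verifying that this is legitimate is easy --- \lemref{parmain} holds for an arbitrary digraph and shortcuts leave the reachability relation unchanged --- but it comes at a price: the per-run visited bound degrades as the arc count grows, which is exactly what turns the $\log^2 n$ factors of \lemref{parmain} into the $\log^4 n$ and $\log^8 n$ factors of \thmref{parfull}. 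A secondary point is making the shortcut- and visit-count bounds robust enough (worst case, or with failure probability small enough to survive a union bound over all $\Theta(\log^2 n)$ runs) so that they can appear in the unconditional statement of the theorem; this is handled by the truncation already built into \algref{par} through its $h = \lg n$ base case, mirroring the passage from \algref{seq1} to \algref{seq2}.
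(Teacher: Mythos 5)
Your proposal is correct and follows essentially the same route as the paper: run \algref{pardiam} ($\Theta(\log n)$ phases of $\Theta(\log n)$ independent runs of \algref{par}), invoke \lemref{parmain} per run with a Chernoff bound across the runs of a phase and a union bound over pairs/subpaths, subdivide any path longer than $D$ into length-$\le D$ segments so the diameter roughly halves each phase, and obtain (1)--(3) by summing the per-run bounds over the $\Theta(\log^2 n)$ runs on graphs of size $m+O(n\log^4 n)$. Your per-phase event $\mathcal{E}_j$ and the recursion $\mathrm{diam}\le \mathrm{diam}/2+D$ are just a slightly more explicit packaging of the paper's induction $\Delta_i \le D\cdot\max\{n/(D2^i),1\}$, and your arc-visit accounting matches the theorem's stated bound.
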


\paragraph{Updated notation.} If there exists a path with length at
most $d$ from $u$ to $v$, then $u\preceq_d v$.  If $u \preceq_d v$ or
$v\preceq_d u$, then $u$ and $v$ are \defn{$d$-related}.  All of the
notations and definitions in \secref{prelim} that depend on $\preceq$
(i.e., successors, predecesors, ancestors, descendents, bridges) are
augmented with the term ``$d$-limited'' and a subscript $d$ to
indicate that the $\preceq$ in the definition should be replaced by
$\preceq_d$.  For example, $\SuccD{G}{x}{d} = \set{v | x\preceq_d v}$
denotes the \defn{$d$-limited successors} of $x$.  

\begin{algorithm}[t]
\small
\caption{Shortcutting algorithm with distance-limited searches.}\alglabel{par}
\SetKwFunction{foopar}{ParSC}
\Indm\foopar{$G=(V,E)$, $h$}\\
\Indp
  \tcc{The value $h$ indicates how many more levels of recursion to
    perform. $\epsilon_\pi$, $N_k$, $N_L$, and $D$  are all global parameters
    (independent of subproblem)  set later.}
  \nl\lIf{$h=0$}{\KwRet{$\emptyset$}}
  \nl $S \gets \emptyset$\\
  \nl randomly permute $V$, giving vertex sequence
  $X=x_1,x_2,\ldots,x_{\card{V}}$. Mark  each $x_j$ live\\
  \nl split $X$ into subsequences $X_1,X_1,\ldots,X_{2k}$, with
  $\card{X_i} = \card{X_{k-i+1}}=\floor{(1+\epsilon_\pi)^i}$ for $i<k$\\
  \hspace{2.7in}and $\card{X_k} = \card{X_{k+1}} \leq \floor{(1+\epsilon_\pi)^k}$\\
  \nl \For{$i \gets 1$ \KwTo $2k$}{
  \nl   choose random $d \in \set{1,2,\ldots N_L-1}$\\
  \nl   $d \gets d + hN_kN_L - iN_L$ \tcp*{add the distance offset}
  \nl   \ForEach{live $x_j \in X_i$}{
  \nl     $R^-_j \gets \PredD{G}{x_j}{dD}$ ; \hspace{2cm}     $R^+_j
  \gets \SuccD{G}{x_j}{dD}$ \tcp*{core vertices}
  \nl     $F^-_j \gets \PredD{G}{x_j}{(d+1)D} \backslash R_j^-$ ; \hspace{.7cm} $F^+_j \gets
  \SuccD{G}{x_j}{(d+1)D} \backslash R_j^+$ \tcp*{fringe vertices}
  \nl     $S \gets S \cup \set{(x_j,v)| v\in R_j^+\cup F_j^+} \cup
          \set{(u,x_j)|u\in R_j^- \cup F_j^-}$ \tcp*{add shortcuts}
  \nl     append a tag of $j$ to all vertices in $R^+_j \cup R^-_j$
}
  \nl   \ForEach{live $x_j \in X_i$}{
  \nl     remove from $R_j^+$, $R_j^-$, $F_j^+$, and $F_j^-$ vertices with a tag $<
  j$  \tcp*{first core search wins}
  \nl     $V_{B,j} \gets R_j^+ \cap R_j^-$ ; \hspace{2em}
          $V_{S,j} \gets R_j^+ \backslash V_{B,j}$ ; \hspace{2em}
          $V_{P,j} \gets R_j^- \backslash V_{B,j}$ \\
  \nl     $S \gets S \cup \text{\foopar{$G[V_{S,j} \cup F_j^+], h-1$}} \cup
  \text{\foopar{$G[V_{P,j}\cup F_j^-], h-1$}}$ \tcp*{include fringe}}
  \nl  mark all vertices in $\bigcup_j (R^+_j \cup R_j^-)$ as
  dead in $X$\\
  \nl  $V_R \gets V \backslash \bigcup_j (R^+_j \cup R_j^-)$\\
  \nl  $G \gets G[V_R]$
  }
  \nl \KwRet{$S$}
\end{algorithm}

\subsection{The Algorithm}\seclabel{limited}

The main goal is to replace the searches $\Succ{G}{x}$ in
\algreftwo{seq1}{seq2} with $D$-limited searches, for
$D=\Otilde{n^{2/3}}$.  The good news is that \lemref{asymmetry} still
holds when restricted to pivots drawn from $D$-limited
ancestors.  The bad news is that \lemref{rewrite} does not hold.  For
concreteness, consider a path $\ang{v_0,v_1,\ldots,v_\ell}$.  It is
possible, for example, that $x\preceq_D v_{2k}$ but
$x\not\preceq_D v_{2k+1}$ for all~$0 \leq k < \ell/2$.  Thus all the
even vertices would be in $V_S$ and all the odd ones would be in
$V_R$, splitting the path into $\Theta(\ell)$ pieces with no potential
to shortcut them later.  In contrast, when the search is \emph{not}
$D$-limited, $x\preceq v_i$ implies $x\preceq v_k$ for all $k\geq i$,
so $V_S$ contains a single contiguous subpath.
 
The solution is to extend the search a little further and duplicate
vertices.  That is, start with a distance of $dD$, for some
$d=\Otilde{1}$.  Any vertices reached this way are called \defn{core
  vertices}, and they are treated similarly to reached vertices in
\algref{seq1}.  Then extend the search a little farther: to a distance
of $(d+1)D$.  Vertices discovered in the extended search are called
\defn{fringe vertices}, denoted by $F^+$ and $F^-$ in the code.
Fringe vertices $F^+$ and incident arcs are duplicated (similarly for
$F^-$), belonging to both $G[V_R]$ and $G[V_S\cup F^+]$.

The addition of fringe vertices fixes the path-splitting problem,
giving an analog of \lemref{rewrite}, at least for paths of length
$\ell \leq D$.  Consider again the bad example where
$x \preceq_{dD} v_{2k}$ but $x\not\preceq_{dD} v_{2k+1}$.  All of the
even vertices are core vertices, but now all of the odd vertices are
fringe vertices.  Thus, the entire path is indeed contained in the
subgraph $G[V_S \cup F^+]$.  We still reason about the path being
split across subproblems, but fringe vertices on the path can be
treated as belonging to whichever subproblem is better.

Unfortunately, duplicating fringe vertices introduces another problem
--- path-related fringe vertices can be active in multiple
subproblems, thereby destroying the progress bound on $\Phi$.  In the
worst case, almost all of the active vertices could be fringe
vertices, and the total number of active vertices could thus increase
drastically after partitioning around pivot~$x$.

The solution is to select $d$ (for search distance $dD$) randomly from
the range $d \in \set{1,2,\ldots,N_L-1}$, for some $N_L = \Otilde{1}$
to be chosen later.\footnote{Read $N_L$ as ``number of layers''.}  Any
vertices in the fringe for distance $dD$ are in the core for
distances $d'D$, $d' > d$.  Thus, on average, only an
$O(1/N_L)$ fraction of vertices are on the fringe.  For large enough
$N_L$, the addition of these fringe vertices does not impact $\phi(s)$
much.

It is also important that the distances searched never increases.
This is because any progress towards the number of active vertices is
with respect to a particular search distance~$dD$.  The algorithm
therefore selects from $N_L$ distance options, but offset by some
value to reflect future decreases.  With each choice of pivot(s), the
offset decreases by at least $N_L$.  More precisely, as in
\algref{seq2}, the main subroutine consists of a sequence of
iterations, where some pivots are chosen in each iteration.  The total
number of iterations is bounded by some value $N_k$, meaning that the
full range of distances effectively owned by a single call has size
$N_kN_L$. Each time the recursion depth increases, the offset
decreases accordingly by $N_kN_L$.  We thus use a starting offset of
$hN_kN_L$, where $h$ is the number of levels of recursion to perform.
As long as $h=\Otilde{1}$, $N_k=\Otilde{1}$, and $N_L=\Otilde{1}$, the
maximum distance searched is $\Otilde{D} = \Otilde{n^{2/3}}$ as
desired.

\subsubsection*{Searches from Multiple Pivots}

In addition to being more parallelizable, searching from multiple
pivots is also necessary to keep
$N_k = \Theta(\log_{1+\epsilon_\pi} n)=\Theta(\log n / \epsilon_\pi)$
low, where $0<\epsilon_\pi\leq 1$ is chosen later.

A single recursive call \foopar consists of a sequence of iterations,
like \algref{seq2} flattening the recursion of $G[V_R]$.  Each
iteration proceeds as follows.  First, sample a set $\set{x_j}$ of
pivots and perform independent searches from each of them, determining
both the $dD$-limited core ($R_j^+$ and $R^-_j$) and the
$(d+1)D$-limited fringe ($F_j^+$ and $F_j^-$) of each pivot.  Add
shortcuts to and from all reached vertices.  To roughly simulate the
effect of selecting one pivot at a time, if a vertex is part of
$x_j$'s core, then it is removed from the core and fringe sets for any
$x_j'$ with $j'>j$.  Next, calculate the sets $V_{S,j}$ and $V_{P,j}$
as in \algreftwo{seq1}{seq2} and launch the recursive subproblems
$G[V_{P,j} \cup F_j^-]$ and $G[V_{S,j} \cup F_j^+]$.  Finally, remove
all core vertices from the graph and start the next iteration.

\algref{par} uses the following process to control the pivot sampling.
Randomly permute all of the vertices at the start of the call,
creating a sequence $x_1,x_2,\ldots$ of pivots to consider.  All
pivots are initially \emph{live}; the live pivots are those still in
the graph.  In each iteration, select the next group of pivots from
the sequence, where the size of the group is discussed below.  Perform
searches from each live pivot, and ignore the dead ones.  When a core
vertex is removed from the graph, the vertex is also marked dead in
the pivot sequence.

\paragraph{Number of pivots.} 
The number of pivots (live or dead) selected in each iteration is
controlled by the parameter $0 < \epsilon_\pi \leq 1$.  For the first
$\Theta(1/\epsilon_\pi)$ iterations, only one pivot is used.  In
subsequent iterations, the number of pivots increases geometrically by
roughly $(1+\epsilon_\pi)$.  Were the only goal to keep the number of
times a vertex is reached in a search to $O(\log n)$, setting
$\epsilon_\pi=1$ and following the geometric increase would be
sufficient.  To bound the number of times a path can split in a single
iteration, however, it is important to achieve a tighter bound.  There
are $2k$ iterations total, where $k$ is chosen to be large enough to
include all vertices according to the following group sizes.  The
first $k$ iterations follow a geometric increase, and the next $k$
iterations follow a geometric decreases.  More precisely, the number
of pivots considered in both iteration $i$ and $2k-i+1$ is
$\floor{(1+\epsilon_\pi)^i}$, but iterations $2k$ and $2k+1$ can be
smaller.

\begin{algorithm}[t]
\small
\caption{Diameter reduction with distance-limited searches.}\alglabel{pardiam}
\SetKwFunction{pardiam}{ParDiam}
\Indm\pardiam{$\hat{G}=(\hat{V},\hat{E})$}\\\Indp
\nl \For{$i\gets 1$ \KwTo $\Theta(\log n)$}{
\nl   \ForEach{$j \in \set{1,2,\ldots,\Theta(\log n)}$} {
\nl      $S_j \gets \text{\foopar{$G',\lg n$}}$, aborting if number
of shortcuts or work exceeds \lemref{parmain}}
\nl   $\hat{E} \gets \hat{E} \cup \left(\bigcup_j S_j\right)$ \tcp*{add more arcs to $\hat{G}$}
    }
\nl \KwRet $\hat{G}$
\end{algorithm}

\subsection{Notation and Shorthand}

It is often convenient to refer to iterations of the loop in
\algref{par}.  During iteration $i$, quite a bit happens: some pivots
are processed, some searches are performed, some induced subgraphs are
built, etc, and the claims throughout refer to those objects.
Defining every term concretely in every lemma statement or proof gets
tedious and unwieldy. Instead, this paper adopts some notational
conventions consistent with the pseudocode in \algref{par}, using the
variables to implicitly adopt the meaning of the code.

Concretely, for iteration~$i$ on graph $G=(V,E)$, the following
notations are used with the same meaning as the pseudocode: $h$,
$X_i$ meaining the pivot sequence, and $d$ meaning the random distance
chosen.  Moreover, for each $x_j\in X_i$, whenever notations $R_j^+$, $R_j^-$,
$F_j^+$, $F_j^-$, $V_{S,j}$, or $V_{P,j}$ appear, they should also be
interpreted to have the meaning laid out in the pseudocode. 

\paragraph{Min and max distances.}
In each iteration~$i$, the algorithm chooses a random distances in
some size-$(N_L-1)$ range, but at an offset that depends on the
iteration.  The values $\dmin{}$ and $\dmax{}$ denote the bounds of
the range, i.e., drawing random
$d \in \set{\dmin{}, \dmin{}+1,\ldots,\dmax{}-1}$.  Here
$\dmin{} = hN_kN_L-iN_L+1$, and $\dmax{} = hN_kN_L-(i-1)N_L$.  The
minimum possible search distance for a core search is $\dmin{}D$.  The
maximum possible search distance for a fringe search is $\dmax{}D$.
Note that $\dmin{}$ and $\dmax{}$ both rely on the current iteration
$i$ and recursion height $h$, which is consistent with the general
notational shorthand.  These min and max distances are useful for
classifying vertex relationships as follows:

\begin{definition}\deflabel{dist}
  Consider any iteration $i$ of \algref{par}.
  \begin{closeitemize}
  \item Vertices $u$ and $v$ are \defn{never related} if
    $u\not\preceq_{\dmax D} v$ and $v \not\preceq_{\dmax D} u$.  
  \item Vertices $u$ and $v$ are \defn{partly related} if
    $u\preceq_{\dmax D} v$ or $v\preceq_{\dmax D} u$. 
  \item Vertices $u$ and $v$ are \defn{fully related} if $u
    \preceq_{\dmin D} v$ or $v\preceq_{\dmin D} u$.   If $u$ and $v$
    are fully related, then they are also partly related.
  \end{closeitemize}
  When comparing a vertex $v$ and a path $P$, the same terms apply in
  the natural way.  For example, if $v$ is fully related with any
  vertex in $P$, then $v$ and $P$ are fully related.
\end{definition}

\subsection{Full Diameter-Reduction Algorithm and Proof
  of~\thmref{parfull}}\seclabel{diamalg}

Like the algorithm in \secref{sequential}, to achieve diameter
reduction with high probability requires multiple passes of
\algref{par}.  But now more passes are necessary. The full algorithm,
shown in \algref{pardiam}, is as follows.  Perform $\Theta(\log n)$
iterations.  In each iteration, perform $\Theta(\log n)$ independent
executions of \algref{par} on the current graph.  Add to the graph all
of the shortcuts produced thus far, and continue to the next iteration
on the updated graph. 

The main reason for the extra passes of \algref{par} is that, due to
the $\Otilde{D}$-limited searches, the analysis only considers paths
of length $D$.  The distance $D$ is chosen to be large enough so that
each iteration is enough to reduce the length of the path to $D/2$,
with high probability, but a longer path needs to be subdivided.

\vspace{1em}
\noindent\emph{Proof of \thmref{parfull}, assuming \lemref{parmain}}. Consider any two vertices
$u\prec v \in V$.  Let $\Delta_i$ denote the length of the shortest
path from $u$ to $v$ in the graph after iteration $i$ of the outer
loop of \algref{pardiam}.  The main claim is that with high
probability, $\Delta_i \leq D \cdot \max{n/(D2^i), 1}$.  Thus, when
\algref{pardiam} returns, the diameter bound is met. 

The proof is by induction on $i$.  For $i=0$, the length of the
shortest path is at most $n$, so $\Delta_0 \leq n =D \cdot n/(D2^0)$.
For the inductive step (going from iteration $i$ to $i+1$), consider
the shortest path $P$ from $u$ to $v$ in the current graph. If
$\length{P} \leq D$, then the path is already short enough.
Otherwise, subdivide the path into at most $(n/(D2^i))$ subpaths, each
of subpaths, each of length at most $D$.  Consider each subpath.  By
\lemref{parmain}, a single execution of \algref{par} shortens the
subpath's length to $D/2$ with constant probability.  Thus, using a
Chernoff bound, $\Theta(\log n)$ runs shorten the subpath to $D/2$
with high probability.  Taking a union bound over all subpaths gives
high probability that all subpaths are shortened.  Concatenating the
subpaths yields a path of length $(D/2)\cdot
n/(D2^{i})=n/(D2^{i+1})$.  

The search distance follows directly from \lemref{parmain}.  The
number of shortcuts follows from \lemref{parmain} by multiplying by
the number of $\Theta(\log^2 n)$ runs.  As for the bound on total
number of arcs visited, observe that the graph size is at most
$\hat{E} + O(n\log^4n)$ at the end.  Thus, by \lemref{parmain}, each
run of \algref{par} visits $O( (m\log^4n)\log^2 n) = O(m\log^6n)$
arcs.   Multiplying by $\Theta(\lg n)$ runs completes the proof.\qed

\subsection{Bounds on Number of Vertices Searched}

This section bounds the number of times each vertex can be searched in
each iteration of the main loop in \algref{par}.  The main lemma,
stated next, is used to prove two corollaries.  The first corollary
says that, with high probability, a vertex is not searched more than
$O(\log n)$ times, where $n=\card{\hat{V}}$.  The second corollary
gives a tighter bound, but only in expectation.

\begin{lemma}\lemlabel{numpivots}
  Consider any iteration $i$ of the loop in the call
  \foopar{$G=(V,E),h$}.  Let $y = \sum_{i'=1}^{i-1} \card{X_i}$ be the
  number of pivots processed before iteration $i$ begins.  Let
  $G_i = (V_i,E_i)$ be the remaining subgraph at the start of the
  iteration and let $\tau = \card{V}\ln{n}/y$ for any
  $n\geq 2$.\footnote{The $n$ here can be anything, but the intent is
    $n=\card{\hat{V}}$, the number of vertices in the full graph.}
  Then for every $v\in V$ and constant $c\geq 2$:

  With probability at least $1-1/n^{c-1}$: either
  $v\not\in V_i$, or $\card{\PredD{G_i}{v}{\dmax{}D}} \leq c\tau$ and
  $\card{\SuccD{G_i}{v}{\dmax{}D}} \leq c\tau$.
\end{lemma}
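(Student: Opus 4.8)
The plan is to think of the permutation of $V$ as being revealed in order, and to condition on the prefix $x_1,\ldots,x_y$ of pivots processed before iteration $i$. The key observation is that $V_i$, the surviving subgraph, is determined entirely by that prefix (together with the random distances chosen in iterations $1,\ldots,i-1$, which I condition on as well): a vertex survives to iteration $i$ iff it was never a core vertex of any pivot among $x_1,\ldots,x_y$. So fix any vertex $v$ with $v\in V_i$, and let $B = \PredD{G_i}{v}{\dmax D}$ be its $\dmax D$-limited backward reach \emph{within the surviving subgraph} $G_i$. The goal is to show $\card{B}$ is not much larger than $\card{V}\ln n / y$, and symmetrically for the forward reach.

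The main step is a ``none of these got hit'' argument. Because $v$ survived, in particular no vertex of $B$ was ever chosen as a pivot among $x_1,\ldots,x_y$ whose \emph{core search} reached $v$ — but more simply, I claim no vertex of $B$ at all was among $x_1,\ldots,x_y$. Indeed, if some $w\in B$ were chosen as a pivot $x_{j'}$ with $j'\le y$, then since $w\preceq_{\dmax D} v$ in $G_i$, and $G_i$ is a subgraph of the graph present when $x_{j'}$ was processed, the path from $w$ to $v$ existed then too; one checks (using that $\dmax D$ for iteration $j'$ is at least $\dmax D$ for iteration $i$, which holds because the distance offset only decreases across iterations) that $v$ would have been within the fringe distance of $x_{j'}$, hence removed from the graph or at least tagged, contradicting $v\in V_i$. (This is exactly where the ``distances never increase'' design is used; I would state the precise inequality $\dmax{}(i') D \geq \dmax{}(i) D$ for $i' < i$ and cite the definitions of $\dmin{},\dmax{}$.) Therefore the first $y$ pivots all avoid the set $B$. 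Now here is the subtlety: $B$ itself depends on the prefix, so I cannot simply say a fixed set of size $s$ is avoided with probability $(1-s/\card{V})^y$. The standard fix is a union bound over the \emph{target} set: for each candidate set, or more cleanly, I argue that for the specific $v$, the event ``$v\in V_i$ and $\card{B} > c\tau$'' is contained in the event ``there exists a set $B'$ of $v$'s potential backward-reach vertices of size $> c\tau$ such that none of $x_1,\ldots,x_y$ lies in $B'$.'' Since $v$'s backward reach in any subgraph is a subset of its backward reach in the original $G$, and is monotone under vertex deletion in the sense needed, it suffices to bound $\Pr[\text{none of the first } y \text{ pivots lands in } B^* \mid B^* \text{ is the realized } B]$. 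Handling this dependence is the one real obstacle; the clean way is: reveal $x_1,\ldots,x_y$; these determine whether $v\in V_i$ and, if so, determine $B$; on that event, all of $x_1,\ldots,x_y$ lie in $V\setminus B$; so $\Pr[v\in V_i,\ \card B=s] \le$ (number of $s$-subsets that could be the reach) $\times$ (probability a uniform ordered $y$-subset avoids a fixed $s$-set). But the number of possible reach-sets is at most... — rather than count them, I would instead bound directly: $\Pr[v \in V_i \text{ and } \card B > c\tau] \le \Pr[x_1,\ldots,x_y \text{ all miss } B \mid \card B > c\tau]\le (1 - c\tau/\card V)^y \le e^{-c\tau y/\card V} = e^{-c\ln n} = n^{-c}$, where the conditional-probability step is justified because conditioning on $\{v\in V_i\}$ only restricts $x_1,\ldots,x_y$ to lie outside $B$, which can only \emph{decrease} the chance of landing in the even-larger set — i.e., I use that the event $\{v\in V_i\}\cap\{\card B>c\tau\}$ already \emph{implies} all $y$ pivots avoid a $(>c\tau)$-sized set, so its probability is at most the max over $(c\tau)$-sized sets $B^*$ of $\Pr[x_1,\ldots,x_y\text{ all avoid }B^*] = \binom{\card V - c\tau}{y}/\binom{\card V}{y} \le (1-c\tau/\card V)^y \le n^{-c}$.

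Then $\tau = \card V \ln n / y$ gives the exponent $-c\ln n$, so the bad event for the backward reach has probability at most $n^{-c}$; the same argument applied to $\SuccD{G_i}{v}{\dmax D}$ gives another $n^{-c}$; a union bound over the two directions yields failure probability at most $2n^{-c} \le n^{-(c-1)}$ for $c\ge 2$ and $n\ge 2$. This is the claimed bound. I expect the write-up to be short once the conditioning subtlety in the displayed inequality is stated carefully; the ``distances never increase across iterations/levels'' fact, used to argue no backward-reach vertex was a prior pivot, should be highlighted since it is the structural reason the algorithm offsets distances by $hN_kN_L - iN_L$.
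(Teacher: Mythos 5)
Your structural observations are the right ones, and they match the paper: since the distance offsets only decrease, every core search performed before iteration $i$ uses a distance exceeding the iteration-$i$ value of $\dmax{}D$, so if any vertex of $B=\PredD{G_i}{v}{\dmax{}D}$ had appeared among $x_1,\ldots,x_y$ it would have been live at that moment (it is still alive at iteration $i$) and its core search would have placed $v$ in its core, contradicting $v\in V_i$; and since vertices are only deleted and no arcs added, the set of live $\dmax{}D$-limited predecessors of $v$ at any earlier step contains $B$. The gap is in the probabilistic step that you yourself flag. From ``the bad event implies the prefix avoids the realized set $B$, which has size $>c\tau$'' you cannot conclude a bound of $\max_{B^*}\prob{x_1,\ldots,x_y \text{ avoid } B^*}$: the realized $B$ is a function of the prefix, and the existential event ``some set of size $>c\tau$ is avoided by the prefix'' is trivially true whenever $y\le \card{V}-c\tau$, so the containment gives nothing unless you union-bound over exponentially many candidate sets, which is far too lossy. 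Likewise the displayed chain $\prob{v\in V_i,\ \card{B}>c\tau}\le \prob{x_1,\ldots,x_y \text{ miss } B \mid \card{B}>c\tau}\le(1-c\tau/\card{V})^y$ is not justified: once you condition on an event defined in terms of the prefix, the pivots are no longer uniform, and the probability of avoiding a prefix-dependent set can be $1$ (e.g., take $B$ to be a set disjoint from the prefix).

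The fix --- and what the paper actually does --- is to run the argument sequentially rather than after the fact. For each $j\le y$, let $B_j$ be the set of live $\dmax{}D$-limited predecessors of $v$ in the graph at the moment $x_j$ is drawn; $B_j$ is determined by $x_1,\ldots,x_{j-1}$ and the earlier distance choices, $B\subseteq B_j$ by the monotonicity you already noted, and on the bad event $\card{B_j}>c\tau$ for every $j$. Conditioned on the history, $x_j$ is uniform over the at most $\card{V}$ not-yet-used vertices, and all of $B_j$ is among them (any previously chosen pivot is already dead), so $\prob{x_j\in B_j \mid \text{history}}\ge c\tau/\card{V}=c\ln n/y$, while the bad event forces $x_j\notin B_j$ for every $j$. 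Multiplying the conditional probabilities over the $y$ steps gives $(1-c\ln n/y)^y\le n^{-c}$; the case $y<c\ln n$ is handled separately as vacuous, since then $c\tau\ge\card{V}$ (this also keeps the base above nonnegative). Your endgame --- the symmetric argument for successors and the union bound $2/n^c\le 1/n^{c-1}$ --- is exactly the paper's.
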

\begin{proof}
  All searches before iteration~$i$ have distance larger than $\dmax{}D$.
  Moreover, arcs are not added to the graph on each iteration, so the
  number of $\dmax{}D$-limited predecessors of $v$ can only decrease or stay
  the same with each iteration.  Thus, if $v$ is to end with $>c\tau$
  live predecessors, it must have $>c\tau$ live predecessors the
  entire time.  The remainder of the proof bounds the probability of
  that event occurring.  The argument for successors is symmetric.

  For $y < c\ln n$, the claim is vacuous, so consider instead that
  $y \geq c\ln n$.  Let $x_1,x_2,\ldots,x_y$ denote the sequence of
  pivots chosen before iteration $i$ begins.  While the number of
  $\dmax{}D$-limited predecessors is above threshold, there are at most
  $\card{V}$ choices of $x_j$ so we have $\prob{\text{$x_j$ live and }
    x_j \preceq_{\dmax D} v} \geq c\tau/\card{V} = c\ln n / y$.
  Thus, for $y \geq c\ln n$, $\prod_{j=1}^y \prob{\text{$x_j$ dead or
    } x_j \not\preceq_{\dmax{}D} v} \leq (1-c\ln n/y)^y \leq 1/e^{c\ln n} =
  1/n^c$.  Taking the union bound across the two failure events (predecessors
  and successors) gives failure probability $2/n^c < 1/n^{c-1}$ for
  $n\geq 2$. 
\end{proof}

\begin{corollary}\corlabel{numvisits}
  Choose any $\epsilon_\pi \leq 1$.  Consider any iteration $i$ and
  let $X_i$ be the random set of pivots selected.  Then with high
  probability with respect to $n$, no vertex is $\dmax{} D$-related to
  more than $O(\log n)$ live pivots in~$X_i$.
\end{corollary}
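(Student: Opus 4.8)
The plan is to combine \lemref{numpivots} --- which bounds the \emph{total} number of live vertices that are $\dmax{}D$-related to a fixed vertex $v$ in the graph $G_i$ searched during iteration $i$ --- with the fact that the pivot block $X_i$ is, conditioned on everything that happened before iteration $i$, a uniformly random block of the vertex permutation, so that only an $O(\card{X_i}/\card{V})$ fraction of those related vertices actually become pivots in $X_i$. Concretely, fix a vertex $v$ and a constant $c$, and condition on the execution of \algref{par} through iteration $i-1$ of the current call. The structural point is that the state at the start of iteration $i$ --- the surviving graph $G_i=(V_i,E_i)$, whether $v\in V_i$, and hence the set $A$ of live vertices $\dmax{}D$-related to $v$ in $G_i$ --- depends only on the first $y=\sum_{i'<i}\card{X_{i'}}$ positions of the permutation and the random search distances used in iterations $1,\dots,i-1$, since the recursive calls do not touch the main loop and vertex deletions are driven only by searches from pivots already processed. \lemref{numpivots} then gives: except with probability $1/n^{c-1}$, either $v\notin V_i$ (so no live pivot is $\dmax{}D$-related to $v$ in $G_i$ and the bound holds vacuously) or $\card{A}\le 2c\tau = 2c\,\card{V}\ln n/y$.

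Next I would use the randomness of $X_i$. Every pivot occupying one of the first $y$ positions is a core vertex of some earlier search and has therefore been removed from the graph, so all surviving vertices --- in particular all of $A$ and all of $X_i$ --- lie among the $\card{V}-y$ vertices at positions $>y$; conditioned on the first $y$ positions (and the distances), those remaining vertices appear in uniformly random order, so $X_i$ is a uniformly random size-$\card{X_i}$ subset of them while $A$ is a fixed subset. Hence $\card{A\cap X_i}$ is hypergeometric with mean $\card{A}\,\card{X_i}/(\card{V}-y)\le 2c\ln n\cdot \frac{\card{V}\,\card{X_i}}{y\,(\card{V}-y)}$, and the geometric block-size schedule forces this to be $O(\log n)$: we have $\card{X_i}\le\card{V}-y$ because the blocks $X_i,\dots,X_{2k}$ partition the remaining vertices, and $\card{X_i}=O(y)$ because $\card{X_{i-1}}=\Theta(\card{X_i})$ (handling floors, the geometrically decreasing second half, and the trivial first $\Theta(1/\epsilon_\pi)$ iterations where $\card{X_i}=1$ separately), and a one-line case split on whether $y\le\card{V}/2$ then bounds the mean by $O(\log n)$ with no dependence on $\epsilon_\pi$. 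A Chernoff bound for negatively associated indicators upgrades this to $\card{A\cap X_i}=O(\log n)$ except with probability $1/n^{c}$, and a union bound over the $\le n$ choices of $v$, absorbing the $1/n^{c-1}$ term from \lemref{numpivots}, gives the high-probability statement once $c$ is taken large enough.

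I expect the crux to be the conditioning in the second step: one must argue that, although $A$ is a complicated, deletion-driven function of the first $y$ permutation positions, it is independent of which surviving vertices are selected into $X_i$, so that $\card{A\cap X_i}$ genuinely is hypergeometric. Once that independence is nailed down, the hypergeometric mean computation, the Chernoff bound, and the union bound are all routine.
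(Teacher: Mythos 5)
Your proposal is correct and follows essentially the same route as the paper's proof: condition on \lemref{numpivots} to bound the number of live $\dmax{}D$-related vertices by $O(\card{V}\log n/y)$, use the fact that $X_i$ is a random block of the remaining vertices together with $\card{X_i}=O(y)$ and a case split on $y$ to bound the expected number of related live pivots by $O(\log n)$, then apply a Chernoff bound and a union bound over vertices. Your explicit hypergeometric/negative-association framing and the careful statement of what the conditioning $\sigma$-algebra is are just a more formal rendering of the paper's per-pivot probability argument (``at least $\card{V}/2$ options to draw from''), not a different approach.
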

\begin{proof}
  Use $G=(V,E)$ to refer to the graph at the beginning of the call,
  before the first iteration of the loop.  Let $y$ be the number of
  pivots considered before the iteration in question, and let $v$ be a
  vertex to analyze.  By choice of $\epsilon_\pi$,
  $\card{X_i} \leq 3y$.  (It can only be this large due to roundoff.)
  
  \lemref{numpivots} has failure probability $1/n^{c-1}$, where
  we can choose whatever constant $c$ we want.  We can add the failure
  probabilities at the end by a union bound, so suppose for now that
  \lemref{numpivots} applies.  Then $v$ has $O(\card{V}\log n / y)$
  $\dmax{}D$-limited predecessors and successors.  

  There are two cases.    If $y > \card{V}/8$, then
  $O(\card{V}\log n /y ) = O(\log n)$.   Searching from \emph{every}
  remaining vertex would thus only result in $O(\log n)$ searches
  reaching $v$.  To complete the proof, take a union bound across all
  $v$ to get at most a $n^{c-2}$ failure probability. 

  If $y \leq \card{V}/8$, then $y + \card{X_i} \leq \card{V}/2$.  For
  every pivot position $x_j \in X_i$, there are thus at least
  $\card{V}/2$ options to draw from.  So we have $\prob{\text{$x_j$
      live and } x_j \preceq_{dD} v} = O(\card{V}\log n / (y\card{V}))
  = O(\log n / y) = O(\log n / \card{X_i})$,
  where the last step follows from $\card{X_i} \leq 3y$.  Applying a
  Chernoff bound across all $\card{X_i}$ pivots, the number of
  searches that reach $v$ is $O(\log n)$ with high probability.  To
  complete the proof, take a union bound across all $v$, adding the
  failure probabilities from the Chernoff bound and \lemref{numpivots}.
\end{proof}

\begin{corollary}\corlabel{pivotbound}
  Consider any iteration $i$ and vertex $v$, and suppose that
  $\epsilon_\pi \geq 1/n$ for $n=\card{\hat{V}}$.  As long as 
  $\card{X_i} > 1$, then the expected number of times that vertex $v$
  is visited by searches is $O(\epsilon_\pi \log n)$.
\end{corollary}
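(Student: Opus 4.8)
The plan is to derive this from \lemref{numpivots} together with a counting argument over the random permutation used in \algref{par}. Fix $v$ and the iteration $i$, let $G_i=(V_i,E_i)$ be the graph at the start of iteration $i$, and let $y=\sum_{i'<i}\card{X_{i'}}$, so the pivots of $X_i$ are the $(y+1)$-st through $(y+\card{X_i})$-st entries of the random permutation. Within iteration $i$ every search — core and fringe, forward and backward — is performed on the same graph $G_i$ (the graph is shrunk only at the end of the iteration), each live pivot $x_j\in X_i$ performs $O(1)$ searches, and the largest distance used is $\dmax{}D$. Consequently $v$ is visited by iteration-$i$ searches only if $v\in V_i$, and only by pivots in $Z:=\PredD{G_i}{v}{\dmax{}D}\cup\SuccD{G_i}{v}{\dmax{}D}$, each contributing $O(1)$ visits; so the number of visits to $v$ in iteration $i$ is $O(\card{Z\cap X_i})$.

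I would then bound $E[\card{Z\cap X_i}]$. Condition on the first $y$ entries of the permutation and the distances drawn in the preceding iterations, which determine $G_i$, $V_i$, and $Z$; conditionally, $X_i$ is a uniformly random $\card{X_i}$-subset of the $\card V-y$ unrevealed vertices $V\setminus\{x_1,\dots,x_y\}$, and $Z\subseteq V_i$ lies entirely in this unrevealed set because the first $y$ pivots are all dead by iteration $i$. Hence $E[\card{Z\cap X_i}\mid\text{prefix}]=\card Z\cdot\card{X_i}/(\card V-y)$. Applying \lemref{numpivots} with $c=3$: with probability at least $1-1/n^{2}$ either $v\notin V_i$ (so $Z=\emptyset$), or $\card Z\le\card{\PredD{G_i}{v}{\dmax{}D}}+\card{\SuccD{G_i}{v}{\dmax{}D}}\le 2c\tau$ for $\tau=\card V\ln n/y$; in the complementary event the crude bound $\card{Z\cap X_i}\le\card{X_i}\le n$ together with $\epsilon_\pi\ge1/n$ contributes at most $n\cdot n^{-2}=1/n\le\epsilon_\pi$. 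It remains to show $2c\tau\cdot\card{X_i}/(\card V-y)=O(\epsilon_\pi\log n)$; using $\tfrac1y+\tfrac1{\card V-y}\le\tfrac{2}{\min\{y,\,\card V-y\}}$, this quantity equals $2c\ln n\cdot\tfrac{\card V}{y(\card V-y)}\cdot\card{X_i}\le 4c\ln n\cdot\tfrac{\card{X_i}}{\min\{y,\,\card V-y\}}$.

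The combinatorial core is the claim that $\card{X_i}/\min\{y,\card V-y\}=O(\epsilon_\pi)$ whenever $\card{X_i}>1$. Since the group sizes are symmetric ($\card{X_{i'}}=\card{X_{2k-i'+1}}$) and nondecreasing on $\{1,\dots,k\}$, $\min\{y,\card V-y\}$ is the prefix sum $\sum_{i'<i}\card{X_{i'}}$ in the increasing phase $i\le k$, and is $\card V-y=\sum_{i'>i}\card{X_{i'}}$ — again a prefix sum of the same sequence, by the reflection $i'\mapsto 2k-i'+1$ — in the decreasing phase; in either case the ratio has the form $\floor{(1+\epsilon_\pi)^{t}}\big/\sum_{j<t}\floor{(1+\epsilon_\pi)^{j}}$ for an index $t$ with $\floor{(1+\epsilon_\pi)^t}>1$. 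A routine geometric-series estimate — $\floor{(1+\epsilon_\pi)^t}>1$ forces $t\ge\log_{1+\epsilon_\pi}2=\Omega(1/\epsilon_\pi)$, so $\sum_{j<t}\floor{(1+\epsilon_\pi)^j}=\Omega\!\big((1+\epsilon_\pi)^t/\epsilon_\pi\big)=\Omega(\card{X_i}/\epsilon_\pi)$ — gives the bound, with the finitely many smallest group sizes and the corner $\epsilon_\pi=1,t=1$ handled directly: there $\card{X_i}=O(1)$, while $\card{X_i}>1$ already forces $\epsilon_\pi=\Omega(1/\log n)$, so $O(1)=O(\epsilon_\pi\log n)$. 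Combining everything yields $E[\card{Z\cap X_i}]=O(\epsilon_\pi\log n)$, hence the expected number of visits is $O(\epsilon_\pi\log n)$ as well. I expect the main obstacle to be the conditioning step of the second paragraph — verifying that after conditioning on the algorithm's state at the start of iteration $i$, the pivots of $X_i$ form a uniform random subset of the unrevealed vertices with $Z$ contained among them — along with keeping the geometric bookkeeping of the last sentence honest near the turnaround index $i\approx k$ and the smallest group sizes.
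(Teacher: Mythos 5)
Your proof takes essentially the same route as the paper's: apply \lemref{numpivots} (absorbing its failure event as an additive $1/n\leq\epsilon_\pi$), use the fact that, conditioned on the permutation prefix, the pivots of $X_i$ are uniform over the $\card{V}-y$ unrevealed vertices, and reduce to the combinatorial fact that $\card{X_i}>1$ implies $\card{X_i}=O(\epsilon_\pi\cdot\min\{y,\card{V}-y\})$; the paper phrases this last fact as ``$\card{X_i}$ is an $O(\epsilon_\pi)$-fraction of the remaining vertices and $\card{X_i}=O(\epsilon_\pi y)$'' and splits into the cases $y>\card{V}/8$ and $y\leq\card{V}/8$ rather than using your $\min$ formulation, but the substance is identical (and your geometric-series verification spells out what the paper only asserts). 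The single blemish is the justification in your corner case: ``$\card{X_i}>1$ forces $\epsilon_\pi=\Omega(1/\log n)$'' is false in general (with, say, $\epsilon_\pi=n^{-1/2}$ groups of size $2$ do occur), but the slip is harmless --- the corners where your geometric estimate truly degenerates (only $O(1)$ iterations preceding a group of size $\geq 2$, e.g.\ $i=1$ or $i=2k$) can only arise when $\epsilon_\pi=\Omega(1)$, where the trivial $O(\card{X_i})=O(1)$ bound already gives $O(\epsilon_\pi\log n)$, while for small $\epsilon_\pi$ a size-$2$ or size-$3$ group is preceded by $\Omega(1/\epsilon_\pi)$ singleton groups, so your main estimate covers that regime; that sentence should simply be restated accordingly.
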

\begin{proof}
  Let $G=(V,E)$ denote the graph before the first iteration of the
  loop.  Suppose that \lemref{numpivots} applies, so $v$ has
  $O(\card{V}\log n/y)$ $dD$-limited predecessors and successors,
  where $y$ is the number of pivots processed before this iteration.
  The failure event can only increase the expectation by an additive
  $\prob{\text{failure}} \cdot \card{V} \leq (1/n^c) n = 1/n =
  O(\epsilon_\pi)$.

  The main observation is that when $\card{X_i} > 1$, $i$ is not in
  one of the first or last $\Theta(1/\epsilon_\pi)$ iterations, and
  the number of vertices $\card{X_i}$ processed in this iteration is
  at most an $O(\epsilon_\pi)$-fraction of the number of remaining
  vertices.  Similarly, $\card{X_i} = O(\epsilon_\pi y)$.

  If $y > \card{V}/8$, then $O(\card{V}\log n/y) = O(\log n)$, meaning
  that $v$ has only $O(\log n)$ $dD$-limited predecessors or ancestors
  remaining.  This iteration thus processes only $O(\epsilon_\pi \log
  n)$ of them in expectation.

  If $y \leq \card{V}/8$, let $Y_j$ be an indicator random variable
  for the event that $x_j \preceq_{dD} v$.  Since there are
  $\Omega(\card{V})$ pivots to choose from,
  $\prob{Y_j} = O(\card{V}\log n / (y\card{V})) = O(\log n / y)$.  It
  follows that
  $E[\sum_{x_j\in X_i} Y_j] = \card{X_i} \cdot O(\log n / y) =
  O(\epsilon_\pi \log n)$.
\end{proof}

The next lemma bounds the number of fringe vertices and arcs explored
in a single iteration~$i$.  Note that a particular vertex may be a
fringe vertex for multiple searches.  The lemma counts the total
number of times that each vertex appears on the fringe.  Similarly,
arcs may be explored by multiple fringe searches instance of the
vertex, but just once per search.  (An arc $(u,v)$ is explored by
$x_j$'s fringe search if either $u$ or $v$ is a fringe vertex.)  

\begin{lemma}\lemlabel{fringe}
  Consider an iteration $i$ and $N_L\geq 2$.  Let $V'$ be any subset
  of the vertices remaining in the graph, and let $E'$ be any subset
  of the arcs remaining in the graph.  Then the total number of fringe
  vertices also in $V'$ is $O(\card{V'}\log n/N_L)$ in expectation
  over choice of distance.  Similarly, the total number of arcs
  explored by fringe searches is also in $E'$ is
  $O(\card{E'}\log n/N_L)$ in expectation.
\end{lemma}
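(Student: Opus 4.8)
The plan is to exploit a simple structural fact about the random distance: as $d$ ranges over its feasible values, the fringe sets are disjoint ``shells,'' so a fixed (pivot, vertex) pair can be a fringe incidence for at most one choice of $d$. Fix iteration $i$ together with everything determined before its distance is drawn --- the current subgraph, the pivot sequence $X_i$, and which pivots are still live --- so that the only remaining randomness is $d$, drawn uniformly from a range of size $N_L-1$ with the fixed offset, inducing core radius $dD$ and fringe radius $(d+1)D$.

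The first step is a per-pair bound. For a live pivot $x_j\in X_i$ and a vertex $v$, we have $v\in F_j^+$ precisely when the $x_j$-to-$v$ distance lies in the interval $(dD,(d+1)D]$; these intervals are pairwise disjoint over the $N_L-1$ admissible values of $d$, so at most one value of $d$ places $v$ on $x_j$'s forward fringe. Hence $\prob{v\in F_j^+}\le 1/(N_L-1)=O(1/N_L)$ (using $N_L\ge2$), and this probability is $0$ unless $x_j$ is $\dmax{}D$-related to $v$; the same holds for $F_j^-$. For an arc $(u,v)$, the stated convention is that it is explored by $x_j$'s fringe search only if one of its endpoints is a fringe vertex of $x_j$, so $\prob{(u,v)\text{ explored by }x_j\text{'s fringe}}\le \prob{u\in F_j^+\cup F_j^-}+\prob{v\in F_j^+\cup F_j^-}\le 4/(N_L-1)$, again $0$ unless $x_j$ is $\dmax{}D$-related to $u$ or to $v$.

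The second step is summation. By linearity of expectation, the expected number of fringe incidences with fringe vertex in $V'$ is at most $\sum_{v\in V'}\sum_{x_j}\prob{v\in F_j^+\cup F_j^-}\le \tfrac{2}{N_L-1}\sum_{v\in V'}N_v$, where $N_v$ is the number of live pivots in $X_i$ that are $\dmax{}D$-related to $v$. By \corref{numvisits}, with high probability over the permutation $N_v=O(\log n)$ for all $v$ simultaneously, which gives the claimed $O(\card{V'}\log n/N_L)$ bound; the polynomially small failure event of \corref{numvisits} is absorbed by the crude worst-case bound on the number of incidences (at most $\card{X_i}\cdot n\le n^2$) by taking the failure exponent large enough. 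The arc bound is identical after replacing $V'$ by $E'$ and bounding, for each arc $(u,v)\in E'$, the number of relevant pivots by $N_u+N_v=O(\log n)$ via \corref{numvisits} applied to both endpoints, yielding $O(\card{E'}\log n/N_L)$.

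The main obstacle is the interface between the two sources of randomness: the shell-disjointness argument concerns only the distance $d$, but the quantity it must be multiplied by --- the number of live pivots related to a given vertex or arc endpoint --- depends on the permutation and on the outcomes of earlier iterations rather than on $d$. I would handle this by invoking \corref{numvisits} as a high-probability statement about $X_i$ that is independent of the current distance draw, taking the distance-expectation on the resulting good event and dumping the negligible complementary event into the error term; one should check that \corref{numvisits} is phrased in exactly the needed form (live pivots, $\dmax{}D$-relatedness), which it is. A minor point to verify is that ``first core search wins'' (line~14 of \algref{par}) and the removal of core vertices only shrink the fringe sets, so all the upper bounds above remain valid.
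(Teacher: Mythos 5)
Your proposal is correct and matches the paper's own argument: both bound, for each pivot--vertex pair, the probability of a fringe incidence by $1/(N_L-1)$ via the disjoint-shell observation over the random distance, then multiply by the $O(\log n)$ bound on relevant live pivots from \corref{numvisits}, absorbing its failure event with a crude worst-case bound, and handle arcs through their endpoints. The extra care you take about the order in which randomness is revealed is a slightly more explicit version of the paper's "fix any set of pivots satisfying \corref{numvisits}" step, not a different route.
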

\begin{proof}
  By \corref{numvisits}, with high probability each vertex in $V'$ is
  visited by at most $O(\log n)$ searches (fringe or core).  A failure
  event can only increase the expectation by an additive
  $\prob{\text{failure}} \cdot \card{\hat{V}}^2 \leq (1/n^c) n^2 \ll
  \log n$
  for appropriate choice of constant $c$ in the high-probability
  bound.  Thus the remainder of the proof assumes that each vertex is
  not visited too many times.

  Fix any arbitrary set of pivots that satisfies \corref{numvisits}.
  Consider any $v \in V'$ and pivot $x_j$.  Let $Y_j^v$ be an
  indicator random variable for the event that $v$ is on $x_j$'s
  fringe, where $Y_j^v = 0$ if $x_j$ and $v$ are never related.  For a
  partly related $x_j$, $v$ is only on the fringe at one distance, so
  $\prob{Y_j^v} \leq 1/(N_L-1) \leq 2/N_L$.  The total number of times
  $v$ is on the fringe is thus
  $E[\sum_{x_j \in X_i} Y_j^v] = O(\log n) \cdot (2/N_L) = O(\log
  n/N_L)$.  Summing across all $v$ gives $E[\sum_{v \in V'} \sum_{x_j
    \in X_i} Y_j^v] = O(\card{V'} \log n / N_L)$.

  The same argument applies to arcs, observing that the arc is
  explored whenever its endpoints are at the right distance. 
\end{proof}

\subsection{Setting Up the Path-Relevant Tree}

The definition of path-relevant subproblems and the path-relevant tree
differ slightly from \secref{seqover} to account for the key changes.
\algref{seq2} looks closer to \algref{par} than \algref{seq1} does, so
it is worth contrasting the changes with 
\algref{seq2}.  

Nodes in the (unflattened) path-relevant tree are analogous to those
in \secref{seqover}.  Each node corresponds to an iteration of the
for loop in \algref{par}, and associated with the node is the
path-relevant subproblem $(G,P)$ at the start of the iteration.
Analogously, each node in \secref{seqover} corresponds to an iteration
of while loop in \algref{seq2}.   

The following lemma, analogous to \lemref{rewrite}, considers the
effect of an iteration on the path-relevant subproblems.  Unlike
\lemref{rewrite}, there may be more than two path-relevant subproblems. 

\begin{lemma}\lemlabel{parrewrite}
  Let $P=\ang{v_0,\ldots,v_\ell}$ be a nonempty path in $G=(V,E)$ with
  $\ell \leq D$ and consider the effect of a single iteration of the
  for loop in \algref{par}.  Let $X_i$ be the set of pivots selected
  for this iteration, and let $d$ be the distance chosen for the core
  search.  Then the following are the outcomes:
  \begin{closeenum}
  \item (Base case.) If $X_i$ contains a live $dD$-limited bridge of
    $P$, then the shortcuts $(v_0,x)$ and $(x,v_\ell)$ are created.
    There are no path-relevant subproblems.
  \item If none of the live pivots in $X_i$ are $dD$-related to $P$,
    then $P$ is entirely contained in $G[V_R]$; the one path-relevant
    subproblem is thus $(G[V_R],P)$ --- the next iteration of the for
    loop.  
  \item Suppose that just one live pivot in $x_j \in X_i$ is
    $dD$-related to $P$ but it is not a bridge.  Then there exists a
    2-way splitting $P=P_1\concat P_2$ splitting of path $P$ such that
    either (i) $P_1$ is fully contained in $G[V_R]$ and $P_2$ is fully
    contained in $G[V_{S,j}\cup F_j^+]$, or (ii) $P_1$ is fully
    contained in $G[V_{P,j} \cup F_j^-]$ and $P_2$ is fully contained
    in $G[V_R]$.
  \item Suppose that $r$ of the live pivots in $X_i$ are $dD$-related
    to $P$ but none of them are bridges.  Then there exists an
    $(r+1)$-way splitting
    $P=P_1\concat P_2 \concat \cdots \concat P_{r+1}$ corresponding to
    $r$ consecutive applications, in pivot order, of the 2-way split
    above.  (Some of the paths $P_i$ may be empty.)  It follows that
    $(G,P)$ gives rise to at most $r+1$ path-relevant subproblems.
  \end{closeenum}
\end{lemma}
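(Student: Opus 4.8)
The plan is, for each vertex $v_m$ of $P$, to pin down which of the subgraphs $G[V_R]$, $G[V_{S,j}\cup F_j^+]$, $G[V_{P,j}\cup F_j^-]$ receives it, and then read off the four outcomes by checking that the induced assignment cuts $P$ into contiguous subpaths. Two elementary facts carry the argument. First, since $\length{P}\le D$, every prefix $\ang{v_0,\ldots,v_m}$ and suffix $\ang{v_m,\ldots,v_\ell}$ of $P$ has at most $D$ arcs, so whenever a path vertex $v_a$ satisfies $v_a\preceq_{dD}x$ one gets $v_0\preceq_{(d+1)D}x$, and symmetrically $x\preceq_{dD}v_b$ yields $x\preceq_{(d+1)D}v_\ell$. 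Second, a path vertex lands in the \emph{core} ($R_j^\pm$) of a pivot $x_j$ only if $x_j$ is $dD$-related to $P$, and pivots with no path vertex in their core tag no path vertex; hence the ``first core search wins'' pruning removes a path vertex from $x_j$'s sets only when some lower-indexed $dD$-related pivot has already claimed it as a core vertex.

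\textbf{Case 1.} Let $x=x_j$ be the lowest-indexed live $dD$-limited bridge in $X_i$, with witnesses $v_a\preceq_{dD}x\preceq_{dD}v_b$ on $P$. By the first fact, $v_0\in\PredD{G}{x}{(d+1)D}=R_j^-\cup F_j^-$ and $v_\ell\in\SuccD{G}{x}{(d+1)D}=R_j^+\cup F_j^+$, so the shortcut-insertion step, which runs before any core pruning, adds $(v_0,x)$ and $(x,v_\ell)$. The subpath $P$ is then two-hop shortcut, which is the base case in the definition of path-relevant subproblems, so no recursive call is path relevant.

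\textbf{Cases 2--4.} Assume now that no live pivot in $X_i$ is a $dD$-limited bridge of $P$. The three outcomes are, respectively, the $r=0$, $r=1$, and general-$r$ instances of one statement, proved by induction on $r$, the number of live $dD$-related pivots in $X_i$ (none of which is a bridge, by assumption), processed in increasing index order. For $r=0$, the second fact places every $v_m$ in $V_R$, so $P\subseteq G[V_R]$. For $r\ge 1$, let $x_{j_1}$ be the lowest-indexed related pivot; being $dD$-related but not a bridge, it is a $dD$-limited ancestor or descendant of $P$, say an ancestor (the descendant case follows by reversing all arcs). Let $v_k$ be the earliest path vertex with $x_{j_1}\preceq_{dD}v_k$ and cut $P=P'\concat Q$ with $P'=\ang{v_0,\ldots,v_{k-1}}$, $Q=\ang{v_k,\ldots,v_\ell}$. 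For $m\ge k$ the first fact gives $x_{j_1}\preceq_{(d+1)D}v_m$, so $v_m\in R_{j_1}^+\cup F_{j_1}^+$; $v_m\notin R_{j_1}^-$ (ancestor); and, there being no lower-indexed related pivot, $v_m$ is not pruned, so it survives into $V_{S,j_1}\cup F_{j_1}^+$, giving $Q\subseteq G[V_{S,j_1}\cup F_{j_1}^+]$ (an induced subgraph, so the arcs of $Q$ survive). For $m<k$, $v_m$ is in no core ($\notin R_{j_1}^+$ by minimality of $k$, $\notin R_{j_1}^-$ since $x_{j_1}$ is an ancestor, and in no other core by the second fact), so $x_{j_1}$ and the unrelated pivots neither claim nor tag any vertex of $P'$; therefore the remainder of the iteration treats $P'$ exactly like a fresh instance of the lemma with the at most $r-1$ pivots among $x_{j_2},\ldots,x_{j_r}$ that are $dD$-related to $P'$. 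The induction hypothesis cuts $P'$ into at most $r$ contiguous pieces; padding with empty pieces for any $x_{j_s}$ unrelated to $P'$ and appending $Q$ gives the claimed $(r+1)$-way split, each non-$V_R$ piece lying inside some $G[V_{S,j_s}\cup F_{j_s}^+]$ or $G[V_{P,j_s}\cup F_{j_s}^-]$, hence at most $r+1$ path-relevant subproblems.

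\textbf{Main obstacle.} Case 1, the base $r=0$, and the prefix/suffix estimates are routine (just the triangle inequality on path lengths). The delicate part is the inductive step: showing that stripping off $x_{j_1}$ leaves a bona fide sub-instance of the lemma requires simultaneously that (a) the ``first core search wins'' pruning never disturbs a vertex of the still-active subpath $P'$; (b) pivots that are $dD$-related to $P$ only through the already-claimed part $Q$ contribute only empty pieces, so the count stays $\le r+1$; and (c) the active piece stays a single contiguous segment through all $r$ peelings --- which is exactly why each claimed piece must be exhibited as a \emph{prefix} or a \emph{suffix} of the current segment, and is where extending searches to distance $(d+1)D$ (the fringe) is indispensable: with only $dD$-limited searches the set of path vertices captured by $x_{j_1}$ need not be contiguous along $P$, and the path could shatter as in the bad example of \secref{limited}.
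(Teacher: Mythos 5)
Your proof is correct and takes essentially the same route as the paper's: the same case analysis, using $\length{P}\le D$ to place $v_0$ and $v_\ell$ in the fringe for the bridge case, identifying the earliest $dD$-reached path vertex so the claimed piece is a contiguous suffix (or prefix) lying in $V_{S,j}\cup F_j^+$ (resp.\ $V_{P,j}\cup F_j^-$), and handling multiple related pivots by induction in permutation order. You simply spell out the ``first core search wins'' bookkeeping and the inductive peeling in more detail than the paper's terse one-line treatment of Case~4.
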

\begin{proof}
  (Case 1.) Suppose that some $dD$-related bridge $x_j$ is selected.
  Then by definition there exist vertices $v_a,v_b \in P$ such that
  $v_a\preceq_{dD} x_j \preceq_{dD} v_b$.  Since the path has length
  at most $D$, it follows that $v_0 \preceq_{(d+1)D} x_j
  \preceq_{(d+1)D} v_\ell$.  Since $v_0$ and $v_\ell$ are within the
  fringe search distance, the claimed shortcuts are added.

  (Case 2.) None of the vertices are $dD$-related to $P$.   Then none
  of the vertices in $P$ are removed from $V_R$.  

  (Case 3.) If $x_j$ is a $dD$-limited ancestor.  (The case for a
  descendent is symmetric.) Then there exists a vertex $v_j \in P$
  such that $x_j \preceq_{dD} v_j$.  The $v_j$ be the earliest such
  vertex.  Then $x_j \preceq_{(d+1)D} v_{j'}$ for all $j' \geq j$.
  Since $x_j$ is the first (and only) $dD$-related pivot in pivot
  sequence, none of those vertices are in another pivot's core.  So
  $v_j,v_{j+1},\ldots,v_\ell \in R_j^+ \cup F_j^+$.  Moreover, $x_j$
  is not a bridge, so none of them are in $R_j^-$, meaning they are in
  $V_{S,j} \cup F_j^+$.

  (Case 4.) Consider the path-related pivots in permutation order,
  applying the above case inductively.  
 \end{proof}

As in \secref{sequential}, the bulk of the analysis is with respect to
the \emph{flattened} path-relevant tree.  The question is when nodes
should be flattened.  The most natural choice --- flatten when case~2
applies --- turns out not to work.  The problem is that knowing
whether the searches reached the path reveals information about the
distance chosen, which changes the distributed over fringe searches.
Notably, \lemref{fringe} requires the full range of distance choices.  

Instead, an iteration is merged with the next iteration whenever the
pivot $x_j$ is never related with the path $P$.  A node in the
flattened tree thus consists of a sequence of iterations such that
only the last iteration selects a pivot that partly related with the
path.  A partly related vertex pivot may, depending on choice of
distance, result in the path splitting.  But the analysis
pessimistically charges for the split.

The following lemma, analogous to \lemref{tree}, bounds the length of
the path at depth-$r$ in the tree.  Since the degree of each node in
the tree is now a random variable, the bound here holds with failure
probability $\leq 1/8$. 

\begin{lemma}\lemlabel{partree}
  Consider any graph $\hat{G} = (\hat{V},\hat{E})$ and any path
  $\hat{P}$ from $u$ to $v$ with $\length{P} \leq D$.  Consider an
  execution of \algref{par} starting with $h=\lg n$, for
  $n = \card{\hat{V}}$, with parameter $\epsilon_\pi$ satisfying
  $\epsilon_\pi \leq 1/\lg^3n$.  Let $S$ be the shortcuts produced,
  and let $\set{(G_1,P_1),\ldots,(G_k,P_k)}$ denote the set of
  path-relevant subproblems at level $r \leq \lg n$ in the flattened
  path-relevant tree.  Then with probability $\geq 7/8$, there is a
  $u$-to-$v$ in $G_S = (\hat{V}, \hat{E} \cup S)$ of length at most
  $O(2^r) + \sum_{i=1}^k\length{P_i}$.
\end{lemma}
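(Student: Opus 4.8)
The plan is to mirror the proof of \lemref{tree} but with two complications: the tree has random out-degree, and the flattening rule merges iterations across a sequence of never-related pivots rather than merely unrelated ones. First I would set up the analogous inductive claim over levels $0,1,\ldots,r-1$: the set of paths associated with the level-$r$ path-relevant subproblems, together with the paths at leaves above level $r$, forms a splitting (partition) of $\hat{P}$. The inductive step applies \lemref{parrewrite} at each internal node: case~1 creates a leaf (a subpath shortcutted to $2$ hops), cases~2--4 hand off subpaths to children, and crucially the never-related pivots that were flattened into the node do not touch the path at all, so they do not affect the splitting. As before, I would then exhibit a concrete $u$-to-$v$ path: concatenate the $2$-hop shortcutted paths at the leaves above level $r$ with the full unshortcutted subpaths $P_i$ at level $r$. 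Each concatenation between consecutive pieces costs one arc, each leaf costs $2$ arcs, and each surviving $P_i$ costs $\length{P_i}$. So the total length is at most (number of leaves above level $r$)$\cdot 2$ + (number of internal nodes above level $r$) + $\sum_i \length{P_i}$.

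The new issue is bounding the number of leaves and internal nodes above level $r$, since the out-degree at a node of the flattened tree is $1$ plus the number of live $dD$-related non-bridge pivots in its final iteration (by \lemref{parrewrite} case~4), which is a random variable. The key tool is \corref{numvisits}: with high probability no vertex, and in particular no vertex on the path, is $\dmax{}D$-related to more than $O(\log n)$ live pivots in a single iteration. Since every vertex on $P_i$ is a bridge (hence a potential splitting pivot must be related to one of them) and $\length{P_i}\le D$, a pivot that splits $P_i$ is $\dmax{}D$-related to one of the $\le D+1$ vertices on the subpath. I would combine the per-vertex bound of \corref{numvisits} with a bound on how many iterations of the for loop contribute splits before level $r$ — here the constraint $\epsilon_\pi \le 1/\lg^3 n$ and $r\le \lg n$ matter, because they keep $N_k=\Theta(\log n/\epsilon_\pi)$ and hence the number of iterations per node polylogarithmic, so the total number of relevant pivots across all iterations feeding into the first $r$ levels is controllably small. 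This is the step I expect to be the main obstacle: carefully arguing that the product/sum of the per-level branching factors stays $2^{O(r)}$, i.e.\ that the branching factor is $O(1)$ \emph{on average per level} (or that the parameter choices force it to be $1+o(1)$), rather than $O(\log n)$ per level, which would blow up to $(\log n)^r$.

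My intended resolution is to not bound each node's degree individually but to bound the \emph{total} number of path-relevant subproblems created across the first $r$ levels. A split of a subpath into $r+1$ pieces is caused by $r$ distinct related pivots, and once a vertex is placed in a core set it is removed from the graph (marked dead), so a given original path vertex can be ``claimed'' as a core vertex at most once across the whole execution; the number of times it is merely $\dmax{}D$-related to a live splitting pivot without being claimed is what \corref{numvisits} controls per iteration, and summing over the $O(\log^2 n/\epsilon_\pi)$ iterations appearing in the first $\lg n$ levels gives a polylog bound per path vertex. But total extra pieces $\le$ (number of path vertices)$\times$(times each can be a split boundary) would give a polynomial bound in $n$, which is too weak — so instead I would charge each split to the bridge vertex that induces it and use that bridge vertices, once they become core, vanish. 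The cleaner accounting: the number of leaves above level $r$ plus the number of surviving pieces at level $r$ equals $1$ plus the total number of splits performed in the first $r-1$ levels; I want to show that quantity is $O(2^r)$. I would get this by noting that the flattened tree has the property that each level's nodes are associated with \emph{disjoint} subgraphs (the $V_R$-flattening and the disjointness of core sets from \lemref{parrewrite}), so a single path vertex lies in exactly one level-$r$ subproblem; then the expected branching is governed by \corref{numvisits}, and I would invoke a union bound over the $\le \lg n$ levels to get that, with probability $\ge 7/8$, every node above level $r$ has out-degree $O(1)$ — this is where I would actually need the strong $\epsilon_\pi \le 1/\lg^3 n$ assumption to beat down the $O(\log n)$ of \corref{numvisits} to a constant via the small $\card{X_i}$ relative to the number of available pivots. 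Given out-degree $O(1)$ at every relevant node, the number of leaves and internal nodes above level $r$ is $2^{O(r)}$, and since the bound is stated with an $O(2^r)$ (absorbing constants in the exponent base into the big-$O$), the path-length bound $O(2^r)+\sum_i\length{P_i}$ follows, with the $1/8$ failure probability coming entirely from the union bound over the $\le\lg n$ applications of \corref{numvisits}.
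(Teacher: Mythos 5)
Your first half (the splitting induction over levels, the explicit spliced path, and the accounting ``$2$ arcs per leaf, $1$ per concatenation, $\length{P_i}$ per surviving subpath'') matches the paper, which simply reuses the argument of \lemref{tree} and reduces everything to showing that, with probability at least $7/8$, the flattened path-relevant tree has $O(2^r)$ nodes above level $r$. The gap is in how you bound that node count. First, your closing step rests on the claim that a per-node out-degree bound of $O(1)$ gives $2^{O(r)}$ nodes and that this ``absorbs constants in the exponent base into the big-$O$'': it does not. $O(2^r)$ means $C\cdot 2^r$, whereas out-degree bounded by a constant $c>2$ gives $c^r$ nodes, which is not $O(2^r)$; and the distinction matters downstream, since \lemref{keypar} takes $r\approx(2/3)\lg n$, where $c^r=n^{(2/3)\lg c}$ would destroy the $n^{2/3}$ diameter bound. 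What is actually needed is branching factor $2(1+o(1))$ per level, not merely $O(1)$. Second, your proposed mechanism for the degree bound --- a union bound over the $\leq\lg n$ levels forcing every node to have out-degree $O(1)$ --- cannot be pushed to out-degree exactly $2$: there can be $\Theta(2^r)$ (polynomially many) nodes above level $r$, each of which independently picks an extra partly-related pivot with probability roughly $O(\epsilon_\pi\log n)=O(1/\log^2 n)$, so a union bound over nodes fails, and you never supply the argument that ``beats down'' the $O(\log n)$ of \corref{numvisits} to the required $2(1+o(1))$.

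The paper's resolution is an aggregate, per-row argument rather than a per-node one. Let $z_t$ be the number of \emph{additional} partly path-related pivots chosen at node $t$ of a given row; by \lemref{parrewrite} each such pivot contributes at most one extra child, so a row of $q$ nodes expands to at most $2q+\sum_t z_t$. \corref{pivotbound} (the expectation version, not the w.h.p.\ \corref{numvisits}) gives $E[z_t]=O(\epsilon_\pi\log n)=O(1/\log^2 n)$ using $\epsilon_\pi\leq 1/\lg^3 n$, and $E[z_t]=0$ when $\card{X_i}=1$. Markov's inequality with slack $8\lg n$ applied to the row total $\sum_t z_t$, followed by a union bound over the $r\leq\lg n$ rows, shows that with probability $\geq 7/8$ every row grows by a factor of at most $2(1+O(1/\log n))$, and $(1+O(1/\log n))^{r}=O(1)$, so the tree has $O(2^r)$ nodes. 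Your instinct to charge splits to path vertices and to use disjointness of core sets is not needed once the accounting is done row-by-row in expectation; as written, your proof does not close.
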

\begin{proof}
  This proof focuses on showing that, with probability at least $7/8$,
  the number of nodes in the flattened path-relevant tree is
  $O(2^r)$.  Otherwise the proof is the same as \lemref{tree}.

  Number the nodes in a particular row as $1,2,\ldots,q$.  Each node
  selects at least one partly path-related pivot.  Let $z_t$ be random
  variables denoting the number of additional partly-path-related
  pivots selected in node~$t$ of the tree.  By \lemref{parrewrite},
  each pivot gives rise to at most one additional child node.  So the
  number of nodes in the next row is
  $\leq \sum_{j=1}^q (2+z_t) = 2q + \sum_{t=1}^q z_t$.  If only one
  pivot is in the pivot set, then $E[z_t]=0$; otherwise,
  \corref{pivotbound} implies
  $E[z_t] = O(\epsilon_\pi \log n) = O(1/\log^2n)$ in general.  By
  Markov's inequality,
  $\prob{\sum_{t=1}^q z_t \geq 8\lg n \cdot E[\sum_{t=1}^q z_t]} \leq
  1/(8\lg n)$.
  Thus, with probability at least $1-1/(8\lg n)$, the number of nodes
  in the next row is
  $\leq 2q + 8\lg n \cdot E[\sum_{t=1}^q z_t] = 2q + 8\lg n\cdot
  O(1/\log^2n) = 2q(1+O(1/\log n))$.
  Taking a union bound across all $r\leq \lg n$ rows, the probability
  that even one row increases by more than $2(1+O(1/\log n))$ factor
  is at most $1/8$.  Assuming no such larger increase occurs, the
  total number of nodes in the tree is
  $\leq \sum_{r'=0}^r 2^{r'}(1+O(1/\log n))^{r'} = \sum_{r'=0}^r
  2^{r'} e^{O(1)} = O(2^{r})$.
\end{proof}

\subsection{Progress on Path-Active Vertices} 

This section is analogous to \secref{seqbetter}.  The goal is to argue
that after $(2/3)\lg n + o(\log n)$ rounds (or levels in the tree),
the number of active vertices drops below~$D/2$ with constant
probability. (Or rather, to select the right choice of $D$ so that
this is true.)  Fortunately, the key lemmas from \secref{seqbetter}
can still be applied, since they only rely on the asymmetry
of~\lemref{asymmetry}.  And although multiple pivots may be selected
in any iteration, the analysis only leverages the progress caused by
the first pivot in sequence order.  

There are some differences to the analysis as well. Most notably, the
additional of fringe nodes increases the number of path-active
vertices.  That increase has an effect on the expected reduction to
the potential, which must be bounded here.  It is possible for the
potential to \emph{increase} when advancing to the next round.
\thmref{karp}, however, requires that $\Phi$ never increase.  This
section updates the definition of $\Phi$ and adds a little extra
machinery.

Due to the distance-limited searches, the definition of path active is
updated with respect to being partly related.  That is, a vertex $v$
is \defn{path active} at some level $r$ in the flattened tree if (1)
$v$ is part of some path-relevant subproblem at level $r$, and (2) $v$
is partly related to the path in that subproblem.

The potential function of a subproblem is the same as in
\secref{seqbetter}, but with the updated definition of being path
active.  Concretely, for subproblem $(G,P)$,
\[ \phi(s) = \phi_1(s)C_\phi + \phi_2(s) \ , \hspace{.5cm}
\phi_1(s) = \sqrt{(\alpha+\beta)(\delta+\beta)} \ , \hspace{.5cm}
\phi_2(s) = \eta = (\alpha+\beta+\delta) \ , \]
where $\alpha = \card{\AncD{G}{P}{\dmax D}}$, $\beta =
\card{\BridgeD{G}{P}{\dmax D}}$, and $\delta =
\card{\DescD{G}{P}{\dmax D}}$.  

\paragraph{Fringe nodes.}  It is useful to track the provenance of
vertices, treating fringe vertices as new vertices.  Specifically, a
path-active vertex $v$ is said to be \defn{preserved} by the execution
of a specific node in the flattened tree if (1) $v$ is still path
active in a child node $s'$, and (2) $v$ was not added to $s'$ due to
a fringe search.  That is, if for example $s'$ operates on the graph
$G[V_{S,j} \cup F^+_j]$, and $v\in F^+_j$, then $v$ is \emph{not}
preserved here. (But the vertex $v$ could be preserved with respect to
a different subproblem.)  Given this view of fringe nodes, the
analysis can adopt the tools from \secref{sequential} to consider the
effects of the core searches.

\paragraph{How to reason about the randomness.}  
There is one key difference in the reasoning.  Think of the randomness
in the following way: first, reveal enough of the randomness in the
pivots just to reveal which pivots are partly path related and fully
path related, but not any more specific than that.  This first step is
enough to determine whether the iteration is the final iteration of
the flattened node.  Then determine the distance searched.  Finally,
resolve the specific pivot choices.  This process has the same
probabilities of any outcome as \algref{par}, but reasoning about the
randomness in this way helps.

For concreteness, here is a restatement of \lemref{asymmetry} in the
new context.  This version only states a progress argument if the
pivot selected is fully related to the path --- the reason is that a
different argument will be applied for partly related pivots.
Specifically, active vertices that are not fully related to the path
are, by definition, not related at any distance less than $\dmin D$.
All such vertices are always inactive in child subproblems, regardless
of random choices.  The only unknown is thus what happens to the
fully-related active vertices.
  
\begin{lemma}\lemlabel{parasymmetry}
  Consider any subproblem $(G,P)$.  Let $A = \AncD{G}{P}{\dmin D}$ be
  the fully-related ancestors of path $P$.  Consider any search
  distance $dD \geq \dmin D$.  Let $x_j$ be the first fully-related
  pivot selected, and suppose that $x_j$ is drawn uniformly at random
  from $A$. Let $\malpha = \card{A}$.  Let $\malpha' \leq \malpha$ denote
  the number of vertices in $A$ that are preserved. (Recall that
  preserved means with respect to core searches only.)  Then
  $E[\malpha' | x \in A] < \malpha/2$.
\end{lemma}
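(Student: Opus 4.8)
The plan is to mimic the proof of \lemref{asymmetry} almost verbatim, redefining the ``preserves'' relation appropriately for the distance-limited, fringe-aware setting, and then checking that the same antisymmetry argument goes through. Concretely, I would define a binary relation on the set $A = \AncD{G}{P}{\dmin D}$ of fully-related ancestors: say \emph{$u$ preserves $v$} if, when $x_j = u$ (the first fully-related pivot), the vertex $v$ remains path active in some child subproblem \emph{and} $v$ was not placed there by a fringe search. As in \lemref{asymmetry}, the relation is irreflexive because the pivot $x_j$ lies in $R_j^+ \cap R_j^-$ and hence in none of the recursive subgraphs $G[V_{S,j}\cup F_j^+]$ or $G[V_{P,j}\cup F_j^-]$. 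If the relation is antisymmetric, the number of ordered pairs satisfying it is at most $\binom{\malpha}{2}$, so the expected number of vertices preserved by a uniformly random $x_j \in A$ is at most $\binom{\malpha}{2}/\malpha = (\malpha-1)/2 < \malpha/2$, which is exactly the claim.

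The work is therefore entirely in re-establishing antisymmetry. Fix $u,v \in A$ with $u$ preserves $v$, write $P = \ang{v_0,\ldots,v_\ell}$, and let $v_k$ be the earliest vertex of $P$ with $u \preceq_{\dmin D} v_k$ (such a $k$ exists since $u \in A$). By \lemref{parrewrite} (Case~3, ancestor case), setting $x_j = u$ splits $P$ around $v_k$: $P_1^{(u)} = \ang{v_0,\ldots,v_{k-1}}$ lands in $G[V_R]$ and $P_2^{(u)} = \ang{v_k,\ldots,v_\ell}$ lands in $G[V_{S,j}^{(u)} \cup F_j^{+(u)}]$. Since $v$ is preserved (so \emph{not} a fringe vertex), $v$ ends up either in $V_R^{(u)}$ and partly related to $P_1^{(u)}$, or in $V_{S,j}^{(u)}$ (i.e., $v \in R_j^+$, a \emph{core} successor, hence $u \preceq_{dD} v$) and partly related to $P_2^{(u)}$. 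In the second case, since $v\in R_j^+$ it was a core successor, so $u\preceq_{dD}v$; and $v$ was preserved (not dropped), so $v\notin R_j^-$, i.e.\ $v\not\preceq_{dD}u$. Then swapping to $x = v$, the vertex $u$ is a $dD$-limited predecessor but not a successor of $v$, so $u \in V_{P}^{(v)}$, a different subgraph from any holding the relevant subpath of $P$, and $v$ does not preserve $u$. In the first case $u,v$ are $dD$-unrelated, and the argument from \lemref{asymmetry} carries over unchanged: $v$ partly related to $P_1^{(u)}$ forces $P_1^{(v)}$ to be a subpath of $P_1^{(u)}$, to which $u$ is unrelated, so $u$ is not partly related to $P_1^{(v)}$ and $v$ does not preserve $u$.

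I expect the main obstacle to be bookkeeping around the interaction of fringe vertices with the ``first core search wins'' tagging rule and the fact that multiple pivots are selected per iteration. The key simplifications I would lean on: (i) the lemma only asks about vertices preserved by \emph{core} searches, so fringe additions are definitionally excluded from $\malpha'$ and cannot inflate it; (ii) $x_j$ is the \emph{first} fully-related pivot, so no earlier pivot's core search can have removed the relevant path vertices or tagged them (earlier pivots are never related to $P$ — if merely partly related they could touch $P$ only via fringe, which does not affect tags on core sets), meaning the relevant portion of $P$ behaves exactly as in the single-pivot analysis; and (iii) the distance $dD$ is a fixed value $\ge \dmin D$ throughout the argument, so all the ``$\preceq_{dD}$'' relationships used above are consistent. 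Once these reductions are in place, the antisymmetry case analysis is a direct translation of the proof of \lemref{asymmetry} with $\preceq$ replaced by $\preceq_{dD}$, and the expectation bound follows as in that lemma.
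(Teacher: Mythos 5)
Your proposal is the paper's proof in structure: redefine the preserves relation on $A$, note irreflexivity, establish antisymmetry via the same two cases as \lemref{asymmetry} with $\preceq$ replaced by $\preceq_{dD}$, and conclude $E[\malpha']\leq\binom{\malpha}{2}/\malpha<\malpha/2$. Two spots in the write-up need repair, however. The split point is wrong as written: when $x_j=u$ the core search runs to distance $dD$, so by (the proof of) \lemref{parrewrite} the path splits around the earliest vertex $v_a$ with $u\preceq_{dD}v_a$, not around your $v_k$, the earliest vertex with $u\preceq_{\dmin D}v_k$. Since $dD\geq\dmin D$ we can have $a<k$, so your $P_1^{(u)}=\ang{v_0,\ldots,v_{k-1}}$ is too long, and in the $dD$-unrelated case your assertion that $u$ is unrelated to $P_1^{(u)}$ can then fail: $u$ may reach some $v_i$ with $a\leq i<k$ within distance $dD$. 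Defining the split at distance $dD$ (the paper's $v_a$ and $v_b$) restores the argument; membership $u\in A$ is needed only to guarantee that the split vertex exists. Relatedly, invoking Case~3 of \lemref{parrewrite} for $x_j=u$ presumes $u$ is a $dD$-limited ancestor; the paper explicitly notes that a vertex of $A$ can be a $dD$-limited bridge, in which case there are no path-relevant subproblems and it preserves nothing, so the hypothesis that $u$ preserves $v$ already excludes this case --- worth saying when you apply the case analysis, and it also covers the possibility that $v$ is a $dD$-limited bridge in the reverse direction.

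Your side claim (ii), that an earlier pivot which is partly but not fully path-related ``could touch $P$ only via fringe,'' is false once $d>\dmin$: such a pivot can be $dD$-related to the path and capture path vertices (and vertices of $A$) in its core search. This does not sink the lemma --- in the paper's use of it (\lemref{partightbound}) the pivot being analyzed is the first partly-related one, and all multi-pivot bookkeeping is deferred there, with \lemref{parasymmetry} supplying only the single-pivot antisymmetry bound --- but the justification as you state it is incorrect and should not be leaned on. Finally, in the unrelated case both you and the paper tacitly use that path-activity in a child node is judged at a partly-related threshold strictly smaller than $\dmin D$, which together with $v\in A$ rules out $v$ being active as a descendent of the child's subpath; that is what legitimately forces $P_1^{(v)}$ to be a prefix of $P_1^{(u)}$, and a careful version of your proof should make that one-line observation.
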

\begin{proof}
  The proof is similar to \lemref{asymmetry}, except that a subset of
  vertices is considered, and all relationships are with respect to
  $\preceq_{dD}$ instead of $\preceq$.  As before, the goal is to show
  that the preserves relation is antisymmetric for all $u,v \in A$. 
  
  It is possible  $u \in A$ be a $dD$-limited bridge.  (It is not a
  bridge at distance $\dmin D$, but it could be a bridge at greater
  distance.)  Bridges do not have any path-relevant subproblems, so
  they do not preserve any other vertices. 

  Consider any pair of vertices $u,v \in A$ such that $u$ preserves
  $v$.  The logic follows proof of \lemref{asymmetry} with the same
  two cases, summarized briefly here.  If $u\preceq_{dD} v$, then $v$
  cannot preserve $u$.  If $u$ and $v$ are $dD$-unrelated, then
  consider the earliest vertices $v_a$ with $u\preceq_{dD} v_a$ and $v_b$
  with $v \preceq_{dD} v_b$ on the path.  For $u$ to preserve $v$, it
  must be that $b < a$, so $v$ cannot also preserve $u$.
\end{proof}

The other lemmas in \secref{seqbetter}, e.g., \lemref{tightnode}
essentially just build algebraically off \lemref{asymmetry}, so the
analogs with respect to \lemref{parasymmetry} hold with the same
proof, but for two issues that require care when applying the lemmas
--- \lemref{parasymmetry} neither copes with fringe nodes nor with
active vertices that are partly related but not fully related.  Since
the algebra remains the same, these lemmas are not reproved.

Much of the complexity that follows arises from the specific choice of
potential function.  Using just the linear function $\phi_2$, as in
\secref{seqsimple}, would simplify many of the details.  But the bound
would be worse.

The following lemma implies that fringe nodes can be cleanly
factored-out of the $\phi$.   The number $f$ of fringe nodes is
included twice in the $\phi_1$ term, which overcharges the
contribution of fringe nodes.

\begin{lemma}\lemlabel{phifringe}
  Consider any $\alpha\geq 0$, $\beta\geq 0$, $\delta \geq 0$, and let
  $\eta = \alpha+\beta+\delta$.  Let
  $\phi_1 = \sqrt{(\alpha+\beta)(\delta+\beta)}$ and
  $\phi = \phi_1\cdot C_\phi+\eta$.  For any $f \geq 0$ and
  $C_\phi\geq 2$,
  \[ \sqrt{(\alpha+\beta+f)(\delta+\beta+f)} \cdot C_\phi + \eta + f
  \leq \phi (1+1/\sqrt{C_\phi}) + 3fC_\phi^2 \ . \] 
\end{lemma}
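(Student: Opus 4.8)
The plan is to notice that the left-hand side is exactly the potential $\phi$ recomputed after moving $f$ into the bridge count: with $\bar{\phi}_1 = \sqrt{(\alpha+\beta+f)(\delta+\beta+f)}$ and $\bar{\phi}_2 = \alpha + (\beta+f) + \delta = \eta+f$, the quantity to be bounded is precisely $\bar{\phi} := \bar{\phi}_1 C_\phi + \bar{\phi}_2$. So the statement says that treating $f$ fresh vertices as bridges inflates $\phi$ by at most a $(1+1/\sqrt{C_\phi})$ factor plus an additive $O(fC_\phi^2)$; intuitively this is possible because $f$ is double-counted inside $\phi_1$, which overcharges the fringe.

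The first step is a clean algebraic bound on $\bar{\phi}_1$. A short computation gives
\[ (\alpha+\beta+f)(\delta+\beta+f) = (\phi_1+f)^2 + f\bigl(\sqrt{\alpha+\beta}-\sqrt{\delta+\beta}\bigr)^2 , \]
using $(\alpha+\beta)+(\delta+\beta)-2\phi_1 = (\sqrt{\alpha+\beta}-\sqrt{\delta+\beta})^2$. Since $(\sqrt{x}-\sqrt{y})^2 \le \max(x,y) \le \eta$, the right side is at most $(\phi_1+f)^2 + f\eta$, so by $\sqrt{a+b}\le\sqrt{a}+\sqrt{b}$ we get $\bar{\phi}_1 \le \phi_1 + f + \sqrt{f\eta}$; when $\phi_1>0$ I would instead use the sharper $\sqrt{X^2+Y}\le X+\tfrac{Y}{2X}$ with $X=\phi_1+f$, giving $\bar{\phi}_1 \le \phi_1 + f + \tfrac{f\eta}{2(\phi_1+f)}$, which is the version that matters when $\eta$ is small. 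Scaling by $C_\phi$ and adding $\bar{\phi}_2 = \eta+f$ then yields $\bar{\phi} \le \phi + f(C_\phi+1) + C_\phi\sqrt{f\eta}$ (or the analogous bound with $\tfrac{fC_\phi\eta}{2(\phi_1+f)}$).

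It remains to show the two new contributions fit inside the slack $\phi/\sqrt{C_\phi} + 3fC_\phi^2$. The term $f(C_\phi+1) \le 2fC_\phi \le 2fC_\phi^2$ is harmless, so the crux — and the main obstacle — is the cross term $C_\phi\sqrt{f\eta}$. I would handle it by a case split on the size of $\eta$. When $\eta$ is large, say $\eta \ge fC_\phi^3$, we have $\sqrt{f\eta}\le\eta/C_\phi^{3/2}$, hence $C_\phi\sqrt{f\eta}\le\eta/\sqrt{C_\phi}\le\phi/\sqrt{C_\phi}$ since $\phi\ge\phi_2=\eta$. When $\eta$ is small, $\sqrt{f\eta}$ is itself small and the cross term gets charged to the additive $O(fC_\phi^2)$ budget; controlling the exact power of $C_\phi$ in this regime is the delicate part, and is where I would use the sharper $\tfrac{fC_\phi\eta}{2(\phi_1+f)}$ estimate together with the structural fact that for the path-relevant subproblems to which this lemma is applied $\beta\ge 1$, which forces $\phi_1=\sqrt{(\alpha+\beta)(\delta+\beta)}\ge\sqrt{\eta}$ and hence shrinks $\tfrac{fC_\phi\eta}{2(\phi_1+f)}$. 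Choosing the threshold between the two cases so that the estimates meet finishes the argument; only elementary inequalities are used (AM--GM, subadditivity of the square root, and $\sqrt{X^2+Y}\le X+Y/(2X)$), and \lemref{phisplit} is not needed for this lemma.
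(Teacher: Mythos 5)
Your opening algebra is sound and in fact mirrors the paper's own decomposition: the paper bounds $\sqrt{(y+f)(z+f)} \le \sqrt{yz}+\sqrt{f(y+z)}+f \le \phi_1+\sqrt{2fz}+f$ with $y=\alpha+\beta$, $z=\delta+\beta$, $z\ge y$, and then splits into two cases according to the size of $z$ against $f$ times a power of $C_\phi$, exactly parallel to your split on $\eta$. The genuine gap is the small-$\eta$ case, which you explicitly leave as a sketch (``controlling the exact power of $C_\phi$ in this regime is the delicate part''), and that sketch cannot be completed, because the stated inequality is false in precisely that regime. With $\phi_1$ small and $\eta\approx c\,fC_\phi^3$ the cross term is $C_\phi\sqrt{f\eta}\approx\sqrt{c}\,fC_\phi^{5/2}$ while the entire available slack is only $\phi/\sqrt{C_\phi}+3fC_\phi^2\approx c\,fC_\phi^{5/2}+3fC_\phi^2$, which is smaller for small constant $c$ and large $C_\phi$. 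Concretely, take $\alpha=\beta=0$, $\delta=2^{22}$, $f=1$, $C_\phi=256$: the left side exceeds $256\cdot 2048+2^{22}+1=4{,}718{,}593$, while the right side equals $2^{22}\cdot\tfrac{17}{16}+3\cdot 256^2=4{,}653{,}056$. Your proposed rescue --- the sharper $\sqrt{X^2+Y}\le X+Y/(2X)$ combined with ``$\beta\ge 1$ in the intended application'' --- does not close this: it imports a hypothesis the lemma does not have, and even granting it (say $\alpha=0$, $\beta=f=1$, $\eta\approx 0.04\,C_\phi^3$, so $\phi_1=\sqrt{\eta}$) the deficit is still of order $C_\phi^{5/2}$ for large $C_\phi$, which swamps the $O(fC_\phi^2)$ budget.

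In fairness, the paper's own proof stumbles at the same spot: its Case~1 assumes $2f\le z/C_\phi^{3/2}$ but then invokes $\sqrt{2zf}\le\sqrt{z^2/C_\phi^3}$, which requires the threshold $2f\le z/C_\phi^3$; making the threshold consistent causes one of the two cases to overshoot, matching the counterexample above. The statement is repaired by weakening the additive term to $3fC_\phi^3$: splitting on $2f\lessgtr z/C_\phi^3$, the large-$z$ case gives $C_\phi\sqrt{2fz}\le z/\sqrt{C_\phi}\le\phi/\sqrt{C_\phi}$ and the small-$z$ case gives $C_\phi\sqrt{2fz}\le 2fC_\phi^{5/2}\le 2fC_\phi^3$, and your decomposition proves this corrected version just as readily (your $\eta$-threshold $fC_\phi^3$ then works verbatim). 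The correction propagates only through polylogarithmic factors: the hypotheses of \lemref{partightbound} and \lemref{fringefail} need $N_L$ larger by roughly a $C_\phi=\Theta(\log^2 n)$ factor, which perturbs the final search-distance and span exponents but none of the paper's qualitative claims. So: right decomposition, correctly located crux, but the proposed finish would fail --- and recognizing that the target inequality itself needs amending is the missing step.
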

\begin{proof}
  Let $y=\alpha+\beta$ and $z=\delta+\beta$, and without loss of
  generality suppose that $y \leq z$.  Consider the $\phi_1$ term
  first, and for ease of reference, use superscript $^f$ to refer to the version
  of the terms including $f$, e.g., $\phi_1^f = \sqrt{(y+f)(z+f)}$.
  We have
  $\phi_1^f \leq \sqrt{yz} + \sqrt{yf+zf} + f \leq \phi_1 +\sqrt{2fz}
  + f$.  There are two cases.
  
  Case 1: $2f \leq z/C_\phi^{3/2}$.  Then $\sqrt{2zf} \leq
  \sqrt{z^2/C_\phi^3} = z/C_\phi^{3/2} \leq \eta/C_\phi^{3/2}$.  Putting
  everything together gives $\phi^f = \phi_1^f C_\phi + \eta + f \leq
  (\phi_1 + \eta/C_\phi^{3/2} + f)C_\phi + \eta + f = \phi +
  \eta/\sqrt{C_\phi} + f(C_\phi+1) \leq \phi(1+1/\sqrt{C_\phi}) +
  2fC_\phi$.  

  Case 2: $2f > z/C_\phi^{3/2}$.  Then
  $\sqrt{2fz} \leq \sqrt{(2f)(2fC_\phi^{3/2})} = 2fC_\phi^{3/4} <
  2fC_\phi$.
  Putting everything together gives $\phi^f = \phi_1^f C_\phi + \eta +
  f \leq (\phi_1 + 2fC_\phi + f) C_\phi + \eta + f = \phi + 2fC_\phi^2 +
  fC_\phi + f \leq \phi + 3fC_\phi^2$ for $C_\phi \geq 2$. 

  Taking the larger of the two cases for each term proves the claim.
\end{proof}

The next lemmas provide tools to address the active vertices that are
not fully related to the path.  The lemmas themselves are purely
algebraic and do not directly consider the random choices.  In the
lemma, $\phi_1$ is meant to capture the potential of the subproblem,
whereas $\mphi_1$ captures the potential just with respect to those
vertices fully related to the path.  The motivation for these lemmas
is that the potential always decreases down to $\mphi$ regardless of
what pivot is selected.  If a fully related pivot gets selected, the
potential reduces even further.  The goal is to quantify the balance
between the cases, showing that the situation is at least as good as
the situation in which all vertices are fully path related.   When
\lemref{goodfunc} is applied, $q$ shall be used to mean the result of
\lemref{tightnode}, i.e., $q = (1/\sqrt{2} + O(1/C_\phi))$.  

\begin{lemma}\lemlabel{goodfunc}
  Let $g(\alpha,\beta,\delta)$ be any function over the
  number of ancestors, bridges, and descendents.  Let $\malpha$,
  $\mbeta$, and $\mdelta$ be counts that could arise by removing
  relationships between vertices, i.e., reducing any counts or moving
  bridges to ancestors/descendents.  Let $p =
  (\malpha+\mbeta+\mdelta)/(\alpha+\beta+\delta \leq 1$. 

  Suppose $g$ satisfies
  $g(\malpha,\mbeta,\mdelta) \leq \sqrt{p}\cdot
  g(\alpha,\beta,\delta)$
  for all valid input values.  Then for any $q \geq 2/3$,
  \[ p\cdot q\cdot g(\malpha,\mbeta,\mdelta) + (1-p)\cdot g(\malpha,\mbeta,\mdelta)
  \leq q\cdot g(\alpha,\beta,\delta) \ .\]
\end{lemma}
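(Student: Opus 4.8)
The plan is to collapse the two summands on the left, invoke the hypothesis, and reduce the whole statement to a single scalar inequality in $p$ and $q$. First I would rewrite the left-hand side as
\[
  p\cdot q\cdot g(\malpha,\mbeta,\mdelta) + (1-p)\cdot g(\malpha,\mbeta,\mdelta)
  = \bigl(1 - p(1-q)\bigr)\,g(\malpha,\mbeta,\mdelta)\,.
\]
The coefficient $C := 1 - p(1-q)$ is strictly positive for every $p\in[0,1]$ and every $q\geq 2/3$: if $q\geq 1$ then $C\geq 1$, and if $2/3\leq q<1$ then $p(1-q)\leq 1-q<1$. The function $g$ is nonnegative in every setting where this lemma is used (there it is the geometric-mean potential $\phi_1$), so if $g(\alpha,\beta,\delta)=0$ the hypothesis forces $g(\malpha,\mbeta,\mdelta)\leq 0$ and the desired inequality is immediate; assume therefore $g(\alpha,\beta,\delta)>0$. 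Applying $g(\malpha,\mbeta,\mdelta)\leq\sqrt{p}\,g(\alpha,\beta,\delta)$, multiplying by $C>0$, and dividing by $g(\alpha,\beta,\delta)>0$, it suffices to prove the scalar bound $\sqrt{p}\,\bigl(1 - p(1-q)\bigr)\leq q$ for all $p\in[0,1]$ and $q\geq 2/3$.

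For the scalar bound I would substitute $t=\sqrt{p}\in[0,1]$ and set $h(t) = t - (1-q)t^{3} - q$, so that the goal becomes $h(t)\leq 0$ on $[0,1]$. Since $h(1) = 1 - (1-q) - q = 0$, it is enough to show $h$ is nondecreasing on $[0,1]$. Differentiating, $h'(t) = 1 - 3(1-q)t^{2}$, and for $t\in[0,1]$ we have $3(1-q)t^{2}\leq 3(1-q)\leq 1$, where the final inequality is exactly the hypothesis $q\geq 2/3$ (and is trivial if $q\geq 1$, since then $1-q\leq 0$). Hence $h'(t)\geq 0$ on $[0,1]$, so $h(t)\leq h(1)=0$, which is the claim.

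I do not expect a real obstacle: the argument is essentially single-variable monotonicity. The only things to handle carefully are bookkeeping --- checking $C=1-p(1-q)>0$ so that scaling the hypothesis by it preserves the inequality direction, dispatching the degenerate case $g(\alpha,\beta,\delta)=0$ on its own, and being explicit that nonnegativity of $g$ is what permits the division. The assumption $q\geq 2/3$ is used in exactly one place, namely to force $h'\geq 0$ across the whole interval; the bound is in fact tight at $p=1$ and fails just below $p=1$ for any $q<2/3$, so the threshold cannot be relaxed.
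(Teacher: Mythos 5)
Your proof is correct and follows essentially the same route as the paper: both reduce the claim to the scalar inequality $\sqrt{p}\,(pq-p+1)\leq q$ for $p\in[0,1]$, $q\geq 2/3$, and verify it by showing the left side is maximized at $p=1$, where it equals $q$. Your substitution $t=\sqrt{p}$ and explicit derivative check simply make rigorous the paper's informal "maximized at $p=1$" argument, and your handling of the degenerate case $g(\alpha,\beta,\delta)=0$ is a harmless extra precaution.
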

\begin{proof}
  Substituting in
  $g(\malpha,\mbeta,\mdelta) \leq \sqrt{p}\cdot
  g(\alpha,\beta,\delta)$
  gives
  $p\cdot q\cdot g(\malpha,\mbeta,\mdelta) + (1-p)\cdot
  g(\malpha,\mbeta,\mdelta)\leq pq\sqrt{p} \cdot
  g(\alpha,\beta,\delta) + (1-p)\sqrt{p}\cdot g(\alpha,\beta,\delta) =
  \sqrt{p}(pq - p + 1)g(\alpha,\beta,\delta)$.
  The claim follows as long as $\sqrt{p}(pq-p+1)\leq q$.  Treat $q$ as
  a constant and observe how the function of $p$ changes.  For
  $q=2/3$, the expression on the left is maximized at $p=1$, solving
  to exactly $q$.  As $q$ increases, the maximum of the function
  shifts even further to the right, meaning that the expression is
  still maximized for $p\in [0,1]$ at $p=1$.
\end{proof}

\begin{lemma}\lemlabel{goodfuncworks}
  The function $\phi$ satisfies the conditions of \lemref{goodfunc}. 
  Notably use $\phi$ to mean the potential applied to $\alpha$,
  $\beta$, and $\delta$, and $\mphi$ to mean the potential for
  $\malpha$, $\mbeta$, and $\mdelta$, where the remainder of the
  notation is the same as \lemref{goodfunc}.  Then $\mphi \leq
  \sqrt{p}\cdot \phi$.  
\end{lemma}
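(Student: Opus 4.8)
The plan is to split the target inequality $\mphi \le \sqrt{p}\,\phi$ into one inequality per summand of the potential. Write $\mphi = \mphi_1 C_\phi + \mphi_2$, where $\mphi_1 = \sqrt{(\malpha+\mbeta)(\mdelta+\mbeta)}$ and $\mphi_2 = \meta = \malpha+\mbeta+\mdelta$. It suffices to prove $\mphi_1 \le \sqrt{p}\,\phi_1$ and $\mphi_2 \le \sqrt{p}\,\phi_2$, since taking the $C_\phi$-weighted sum then gives $\mphi \le \sqrt{p}\,\phi$. The second inequality is immediate: by the definition of $p$ in \lemref{goodfunc}, $\meta = p\,\eta$, and $p \le 1$ implies $p \le \sqrt{p}$, so $\mphi_2 = p\,\eta \le \sqrt{p}\,\eta = \sqrt{p}\,\phi_2$. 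If $\meta = 0$ then $\malpha = \mbeta = \mdelta = 0$, hence $\mphi_1 = 0$ and $p = 0$, and the claim is trivial; so assume $\meta > 0$. Since each elementary move below decreases $\alpha+\beta+\delta$ by at most one, $\meta > 0$ also ensures every intermediate count stays positive, so all quotients below are well defined.

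For the first inequality, square both sides: $\mphi_1 \le \sqrt{p}\,\phi_1$ is equivalent to $(\malpha+\mbeta)(\mdelta+\mbeta) \le \frac{\meta}{\eta}(\alpha+\beta)(\delta+\beta)$, i.e.\ (multiplying through by $\eta/\meta > 0$) to
\[
  \frac{(\malpha+\mbeta)(\mdelta+\mbeta)}{\malpha+\mbeta+\mdelta}
  \;\le\;
  \frac{(\alpha+\beta)(\delta+\beta)}{\alpha+\beta+\delta}\, .
\]
So the plan is to show that the function $f(\alpha,\beta,\delta) = \frac{(\alpha+\beta)(\delta+\beta)}{\alpha+\beta+\delta}$ does not increase under any of the elementary ``remove a relationship'' operations that transform $(\alpha,\beta,\delta)$ into $(\malpha,\mbeta,\mdelta)$ --- namely, decrementing $\alpha$, decrementing $\delta$, decrementing $\beta$, or moving a bridge to the ancestor or to the descendent count --- and then compose the moves along the sequence witnessing $(\malpha,\mbeta,\mdelta)$. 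Writing $y = \alpha+\beta$, $z = \delta+\beta$, and $\eta = \alpha+\beta+\delta$ for the counts at a given instant (so $f = yz/\eta$), note the invariants $y \le \eta$ and $z \le \eta$, which always hold because $\delta,\alpha \ge 0$. Moving a bridge keeps $\eta$ fixed and decreases exactly one of $y,z$, so $f$ can only drop. Decrementing $\alpha$ sends $(\eta,y,z)$ to $(\eta-1,y-1,z)$, and $\frac{(y-1)z}{\eta-1}\le\frac{yz}{\eta}$ reduces to $y\le\eta$; decrementing $\delta$ is symmetric. Decrementing $\beta$ sends $(\eta,y,z)$ to $(\eta-1,y-1,z-1)$, and $\frac{(y-1)(z-1)}{\eta-1}\le\frac{yz}{\eta}$ reduces to $yz \le \eta(y+z-1)$.

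This last step is the only one needing an observation beyond one-line algebra, and I expect it to be the crux. Immediately before a bridge is removed one has $\beta\ge1$, so $y+z-1 = (\alpha+2\beta+\delta)-1 = \eta+(\beta-1) \ge \eta$, whence $\eta(y+z-1) \ge \eta^2 \ge yz$ using $y,z \le \eta$. Everything else is a one-line algebraic check. The only remaining bookkeeping is to confirm that the invariants $y\le\eta$, $z\le\eta$, and ``$\beta\ge1$ just before deleting a bridge'' persist along the whole sequence of operations, which they do since each is a statement about the counts at that instant; chaining the per-move inequalities then yields the displayed bound and completes the verification that $\phi$ satisfies the hypothesis of \lemref{goodfunc}.
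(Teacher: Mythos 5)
Your proof is correct, and while it shares the paper's top-level split---disposing of $\mphi_2 \leq \sqrt{p}\,\phi_2$ trivially via $\meta = p\eta$ and isolating $\mphi_1 \leq \sqrt{p}\,\phi_1$ as the real content---your argument for the $\phi_1$ part takes a genuinely different route. The paper argues extremally: it treats $p$ as a total reduction budget, asserts that the worst case maximizes $\mphi_1$ by taking all of the reduction off the larger factor first and then, in a second phase, balancing the two, and multiplies the per-phase factors $\sqrt{p_1}$ and $p_2$ to get $\sqrt{p_1p_2}$. You instead square and normalize, reducing the claim to showing that $f(\alpha,\beta,\delta) = (\alpha+\beta)(\delta+\beta)/(\alpha+\beta+\delta)$ is non-increasing under each elementary operation (decrement $\alpha$, $\beta$, or $\delta$; move a bridge to either side) and then composing along the witnessing sequence; with $y=\alpha+\beta$, $z=\delta+\beta$, the only nontrivial check is the bridge-decrement case $yz \leq \eta(y+z-1)$, which you dispatch via $y+z-1 = \eta+\beta-1 \geq \eta$ and $yz \leq \eta^2$. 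Your route buys rigor and locality: every reachable $(\malpha,\mbeta,\mdelta)$ is covered by chaining one-line checks, with no need to identify or justify an extremal configuration (the paper's ``reduce $\delta$ first, then balance'' description of the maximizer is left informal), and your bookkeeping---the $\meta=0$ degeneracy, positivity of intermediate denominators because $\eta$ is non-increasing along the sequence and ends at $\meta>0$, and the per-instant invariants $y,z \leq \eta$---is all in order. The paper's route buys brevity and makes visible where the bound is tight, namely when the entire reduction comes off one factor; your calculation recovers the same tightness as the equality case $y=\eta$ of the decrement step.
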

\begin{proof}
  Also define $\eta = \alpha+\beta+\delta$ and
  $\meta=\malpha+\mbeta+\mdelta$.  Use $\phi_1$ and $\phi_2$,
  respectively, to mean the potential applied to values $\alpha$,
  $\beta$, and $\delta$.  Similarly use $\mphi_1$ and $\mphi_2$ for
  $\malpha$, $\mbeta$, and $\mdelta$.

  Bound $\phi_1$ and $\phi_2$ separately.  The latter is trivial ---
  $\phi_2 = \eta$, so $\meta = p \eta$ implies $\mphi_2 = p \mphi_2
  \leq \sqrt{p}\mphi_2$ for $p \leq 1$.    The $\phi_1$ bound is a
  little
  harder because $\beta$ is double-counted in $\phi_1$.  

  To bound $\phi_1$, consider the following.  The fraction $p$
  dictates how much $\alpha$, $\beta$, and/or $\delta$ must be reduced
  by in total.  The worst case for the desired inequality is to
  maximize $\mphi_1$.  Without loss of generality, suppose
  $\alpha \leq \delta$, The $\mphi_1$ term is maximized when $\malpha$
  and $\mdelta$ are kept as large and as balanced as possible, so 
  $\delta$ should be reduced first.  If $p(\delta+\beta) > (\alpha+\beta)$,
  then only $\delta$ should be reduced, giving
  $\mdelta+\mbeta \leq p(\delta + \beta)$.  The potential thus becomes
  $\mphi_1 = \sqrt{(\malpha+\mbeta)(\mdelta+\mbeta)} \leq
  \sqrt{(\alpha+\beta)(p(\delta+\beta))} = \sqrt{p}\cdot \phi_1$.
  If $p$ is smaller, then consider two phases $p_1$ and $p_2$ with
  $p=p_1p_2$ and $p_1 = (\alpha+\beta)/(\delta+\beta)$.  During the
  first, the potential is maximized as above, leaving
  $\mphi_1 = \sqrt{p}\cdot \phi_1$ as above.  For the second phase,
  $\alpha=\delta$, so the best choice to keep the expression maximized
  is to balance the reductions from both counts simultaneously.  For
  this regime,
  $\mphi_1 = \sqrt{(\malpha+\mbeta)(\mdelta+\mbeta)} \leq
  \sqrt{p_2(\alpha+\beta)p_2(\delta+\beta)} = p_2 \phi_1$.
  Multiplying the two together gives a maximum value of
  $p_2\sqrt{p_1}\cdot \phi_1 \leq \sqrt{p_1p_2}\cdot \phi_1$.
\end{proof}

The next lemma pulls together all the pieces to argue that the
expected reduction on $\phi(s)$ is still almost as good as previously.

\begin{lemma}\lemlabel{partightbound}
  Consider any path-relevant subproblem $s=(G,P)$.  Let
  $s_1,s_2,\ldots$ be random variables denoting its child subproblems
  in the flattened path-relevant tree, and suppose
  $N_L \geq C_\phi^{2.5}\log n = \Omega(\log^6n)$.  Then
  $E[\phi(s_1) + \phi(s_2) + \cdots ] \leq (\phi(s)/\sqrt{2})(1 +
  O(1/\sqrt{C_\phi}))$.
\end{lemma}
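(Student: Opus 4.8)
The plan is to run the same skeleton as the proof of \lemref{tightnode}, but to add two corrections on top of the pure‑asymmetry argument: one accounting for the fringe vertices that get duplicated into child subproblems, and one accounting for path‑active vertices that are only partly (not fully) related, which count toward $\phi(s)$ but can never be related to the subpath in any child. Throughout, $\phi(s)$ is evaluated with the $\dmax D$‑thresholds, so it includes all partly‑related vertices, while $\mphi(s)$ denotes the same potential restricted to the fully‑related counts $\malpha=\card{\AncD{G}{P}{\dmin D}}$, $\mbeta=\card{\BridgeD{G}{P}{\dmin D}}$, $\mdelta=\card{\DescD{G}{P}{\dmin D}}$.

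First I would set up the staged revelation of randomness described before \lemref{parasymmetry}: reveal, for each position in the pivot sequence, only whether the pivot there is fully related, partly‑but‑not‑fully related, or never related to $P$; then reveal the search distance $d$; then reveal the concrete pivot identities. The first stage determines whether this iteration is the last of the flattened node, so I condition on that event (at least one live partly‑related pivot). Write $p=\meta/\eta$ for the fraction of path‑active (equivalently, partly‑related) vertices of $G$ that are fully related. By exchangeability of the permutation, conditioned on the stage‑1 outcome the first fully‑related pivot is uniform over $\AncD{G}{P}{\dmin D}\cup\BridgeD{G}{P}{\dmin D}\cup\DescD{G}{P}{\dmin D}$, which is exactly what \lemref{parasymmetry} requires.

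Next I would argue the unconditional reduction from $\phi(s)$ down to $\mphi(s)$: every child subproblem --- whether a recursive call (with $h-1$) or the $V_R$‑continuation of the next iteration --- uses distance thresholds strictly smaller than the current $\dmin$, because the offset decreases by at least $N_L$ per iteration and by $N_kN_L$ per level while $i\le N_k$. Hence a vertex related to $P$ only at distances in $(\dmin D,\dmax D]$ is never related to its subpath in any child and cannot be path‑active there; so (ignoring fringe) the only vertices contributing to $\sum_t\phi(s_t)$ are copies of fully‑related vertices, split across children, and \lemref{phisplit} gives $\sum_t\phi(s_t)\le\mphi(s)$ always. In the event that the first related pivot of the last iteration is fully related (probability at least $p$ after the distance reveal), \lemref{parasymmetry} drives the core‑search reduction and the algebra of \lemref{tightnode} --- which used only the factor‑of‑two asymmetry --- yields conditional expectation at most $q\cdot\mphi(s)$ with $q=1/\sqrt2+O(1/\sqrt{C_\phi})$; otherwise we still have $\le\mphi(s)$. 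Combining via \lemref{goodfunc}, whose hypothesis $\mphi\le\sqrt p\,\phi$ is precisely \lemref{goodfuncworks} and whose $q\ge 2/3$ requirement holds for large $C_\phi$, collapses $p\,q\,\mphi(s)+(1-p)\,\mphi(s)$ to $q\cdot\phi(s)$. Finally I add the fringe back: let $f$ be the total number of path‑related fringe copies pushed into children; \lemref{fringe} with $V'$ the set of path‑active vertices gives $E[f]=O(\eta\log n/N_L)$ over the choice of $d$, and \lemref{phifringe} bounds each child's potential after absorbing its $f_t$ fringe copies by $\phi^0_t(1+1/\sqrt{C_\phi})+3f_tC_\phi^2$, so summing and taking expectations,
\[
E\left[\sum_t \phi(s_t)\right] \;\le\; q\,\phi(s)(1+1/\sqrt{C_\phi}) + 3C_\phi^2\,E[f] \;=\; q\,\phi(s)(1+1/\sqrt{C_\phi}) + O\!\left(\frac{C_\phi^2\,\eta\,\log n}{N_L}\right).
\]
With $N_L\ge C_\phi^{2.5}\log n$ the error term is $O(\eta/\sqrt{C_\phi})=O(\phi(s)/\sqrt{C_\phi})$ by \lemref{phibounds}, and since $q=1/\sqrt2+O(1/\sqrt{C_\phi})$ everything folds into $(\phi(s)/\sqrt2)(1+O(1/\sqrt{C_\phi}))$.

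I expect the delicate part to be the bookkeeping hidden in the third paragraph: reconciling the three distance scales $\dmin D$, the random $dD$, and $\dmax D$ with the three notions of relatedness, and in particular checking that (i) revealing $d$ before the identities does not disturb the uniformity \lemref{parasymmetry} needs, (ii) partly‑but‑not‑fully‑related pivots that precede the first fully‑related one --- and never‑related pivots earlier in the flattened node --- cannot "use up" the ancestors or descendents of $P$ in a way that breaks the antisymmetry count (this is where monotonicity of $\phi$ under vertex removal, together with evaluating $p$ at node‑start, rescues the bound), and (iii) \lemreftwo{phisplit}{phifringe} really do compose with the $r$‑way split of \lemref{parrewrite}, i.e. the non‑fringe fully‑related survivors partition cleanly across children while the duplicated fringe contributions sum to $3fC_\phi^2$. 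The fringe and distance‑reveal parts are more mechanical; the antisymmetry‑under‑removals reconciliation is the genuine obstacle.
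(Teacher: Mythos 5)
Your proposal is correct and takes essentially the same route as the paper's proof: isolate the last (partly-related) iteration of the flattened node, reduce to the fully-related counts, run \lemref{parasymmetry} through the algebra of \lemref{tightnode} in the fully-related case and combine the cases via \lemreftwo{goodfunc}{goodfuncworks}, then reattach the multi-pivot split and duplicated fringe contributions using \lemreftwo{phisplit}{phifringe} and bound $E[f]$ with \lemref{fringe} under $N_L \geq C_\phi^{2.5}\log n$. The only difference is the cosmetic ordering of when the fringe terms are extracted, which does not affect the bound.
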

\begin{proof}
  Let~$i$ denote the iteration during which at least one pivot $x_j$
  that is partly related to the path $P$ is selected.  Before that,
  the potential can only decrease.  Let
  $\alpha = \card{\AncD{G}{P}{\dmax D}}$ denote the number of partly
  path-related vertices in the graph $G$ at the start of the
  iteration.  Define $\beta$ and $\delta$ similarly for bridges and
  descendents, respectively.  Let $\malpha = \card{\AncD{G}{P}{\dmin
      D}}$ denote the number of fully path-related ancestors at the
  start.  Similarly for $\mbeta$ and $\mdelta$. 

  If any of the pivots is a bridge, there are no path-relevant
  subproblems (\lemref{parrewrite}) and the potential is 0.

  Suppose instead that no pivot beyond the first is a bridge, which
  can only increase the potential.  Then consider the pivots and
  corresponding partition steps in order, as in \lemref{parrewrite}.
  Let $s_1$ denote the recursive subproblem generated by the first
  pivot, and let $s_{1,R}$ denote the nonrecursive subproblem
  corresponding to vertices not found by a the core searches.  The
  second pivot partitions $s_{1,R}$ into recursive problem $s_2$ and
  remainder $s_{2,R}$.  The third pivot partitions $s_{2,R}$, and so
  on.  For $r$ pivots, the subproblems are
  $s_1,s_2,\ldots,s_r,s_{r+1}$, where $s_{r+1} = s_{r,R}$.

  The goal is to bound $E[\sum_{i=1}^{r+1} \phi(s_i)]$ given that
  there is at least one partly path-related pivot.  For each of the
  subproblems, let $\malpha_i$, $\mbeta_i$, and $\mdelta_i$ denote the
  number of fully path-related vertices that are preserved, i.e., part
  of core searches.  Let $\meta_i = \malpha_1+\mbeta_i+\mdelta_i$.
  Only the fully related vertices need be considered as these are the
  only ones that can be active at the next level.  Also consider the
  result $\malpha_{1,R}$, $\mbeta_{1,R}$ and $\mdelta_{1,R}$ of the
  first search. Let $f_i$ denote the number of path-active fringe
  nodes added, and let $f = \sum_{i=1}^{r+1}f_i$.  For simplicity,
  double-count the fringe vertices, giving us the following:
  \begin{align}
    \sum_{i=1}^{r+1} \phi(s_i) &\leq \phi(s_1) +
                                 \sum_{i=2}^{r+1}(\phi_1(s_i)+\phi_2(s_i))
    \nonumber\\ 
    &\leq \phi(s_i) +
                                 \left(\sqrt{(\malpha_i+\mbeta_i+f_i)(\mdelta_i+\mbeta_i+f_i)} +
                                 (\meta_i+f_i)\right)
    \nonumber\\
    &\leq
      \sqrt{(\malpha_1+\mbeta_1+f_1)(\mdelta_1+\mbeta_1+f_1)}+(\meta_1+f_1) \nonumber\\
&\hspace{.5in} + \sqrt{\left(\sum_{i=2}^{r+1}(\malpha_i
      +\mbeta_i + f_i)\right)\left(\sum_{i=2}^{r+1}(\mdelta_i +
      \mbeta_i + f_i)\right)} 
      +\sum_{i=2}^{r+1} (\meta_i+f_i) \tag*{(\lemref{phisplit})}\nonumber\\
    &\leq
      \left(1+\frac{1}{\sqrt{C_\phi}}\right) \Bigg(\sqrt{(\malpha_1+\mbeta_1)(\mdelta_1+\mbeta_1)}+(\meta_1) \nonumber\\
&\hspace{.5in} + \sqrt{\left(\sum_{i=2}^{r+1}(\malpha_i
      +\mbeta_i)\right)\left(\sum_{i=2}^{r+1}(\mdelta_i +
      \mbeta_i)\right)} 
      +\sum_{i=2}^{r+1} \meta_i\Bigg) + 3fC_\phi^2 \tag*{(\lemref{phifringe})}\nonumber\\
    &=
      \left(1+\frac{1}{\sqrt{C_\phi}}\right) \bigg(\sqrt{(\malpha_1+\mbeta_1)(\mdelta_1+\mbeta_1)}+\meta_1 \nonumber\\
&\hspace{1in} + \sqrt{(\malpha_{1,R}
      +\mbeta_{1,R})(\mdelta_{1,R} +
      \mbeta_{1,R})} 
      +\meta_{1,R}\bigg) + 3fC_\phi^2 \label{eqn}
\end{align}
At this point, the multiple pivots and the fringe nodes have all been
extracted from the main expression.  Let
$\mphi' = \sqrt{(\malpha_1+\mbeta_1)(\mdelta_1+\mbeta_1)} + \meta_1$ and
$\mphi'' =
\sqrt{(\malpha_{1,R}+\mbeta_{1,R})(\mdelta_{1,R}+\mbeta_{1,R})} +
\meta_{1,R}$.
The sum $\mphi' + \mphi''$ looks exactly like the random variables
generated from a single partition without fringe nodes.  Note that to
this point, no expectation has yet been applied; all manipulations
thus far are just algebra on the random variables.  \lemref{tightnode}
can thus be applied as long as the expectation is performed in a way
consistent with \lemref{parasymmetry}.

\lemref{parasymmetry} applies if the pivot is fully path related.  It
does not if the pivot is only partly path related.   There are thus 
cases depending on how the first pivot $x$ is classified.  Let $p =
(\malpha+\mbeta+\mdelta)/(\alpha+\beta+\delta)$ be the fraction of
partly path-related vertices that are fully related.  Let $F_x$ be the
event that $x$ is fully related with the path.  Let $\mphi =
\sqrt{(\malpha+\mbeta)(\mdelta+\mbeta)}+\malpha+\mbeta+\mdelta$.  
Then
\begin{align*}
  E\left[\mphi' + \mphi''\right] &= \prob{F_x} \cdot
                        E\left[\mphi' +\mphi'' | F_x\right] + (1-\prob{F_x})
                         \cdot E\left[\mphi'+\mphi''| \neg F_x\right]\\
  &\leq p \cdot E\left[\mphi' +\mphi'' | F_x\right] + (1-p) \cdot
    \mphi \tag*{(\lemref{phisplit})} \\
  &\leq p\cdot \left(\frac{1}{\sqrt{2}} +
    \frac{2}{\sqrt{C_\phi}}\right) \mphi + (1-p) \cdot
    \mphi \tag*{(\lemref{tightnode})}\\
  &\leq \left(\frac{1}{\sqrt{2}} +
    \frac{2}{\sqrt{C_\phi}}\right) \phi(s) \tag*{(\lemreftwo{goodfunc}{goodfuncworks})}
\end{align*}
Substituting back into Equation~\ref{eqn}  gives
\begin{align*}
 E\left[\sum_{i=1}^{r+1} \phi(s_i)\right] &\leq
\frac{1}{\sqrt{2}}\left(1+1/\sqrt{C_\phi}\right)\left(1+2\sqrt{2}/\sqrt{C_\phi}\right)
\phi(s) + E[3fC_\phi^2] \\
  &\leq \frac{1}{\sqrt{2}} \left(1+6/\sqrt{C_\phi}\right) \phi(s) +
    3C_\phi^2 \cdot E[f] \tag*{for $C_\phi\geq 4$} \\
  &\leq \frac{1}{\sqrt{2}} \left(1+6/\sqrt{C_\phi}\right) \phi(s) +
    3C_\phi^2 \cdot O((\malpha+\mbeta+\mdelta)\log n / N_L)
    \tag*{(\lemref{fringe})} \\
  &\leq \frac{1}{\sqrt{2}} \left(1+6/\sqrt{C_\phi}\right) \phi(s) +
    3C_\phi^2 \cdot O(\phi(s)\log n / N_L)\\
  &\leq \frac{1}{\sqrt{2}} \left(1+6/\sqrt{C_\phi}\right) \phi(s) +
    \phi(s) \cdot O(1/\sqrt{C_\phi}) \tag*{for $N_L \geq
    C_\phi^{2.5}\log n$} \\
  &= \frac{1}{\sqrt{2}}\left(1+O\left(\frac{1}{\sqrt{C_\phi}}\right)\right) \phi(s)
\end{align*}
\end{proof}

\subsection{Analyzing the Layers in the Tree}

Define the total potential $\Phi$ of a level as follows:
\[ \Phi(I_r) = (1+c_\Phi/\lg n)^{\lg n-r}\sum_{s \in I_r} \phi(s) \
,\]
where $I_r$ is the collection of subproblems corresponding to level
$r$ in the flattened tree and $c_\Phi$ is a constant to be set later.

\begin{corollary}
  Suppose $C_\phi = \Theta(\lg^2 n)$ and $N_L = \Omega(\lg^6 n)$ Then
  there exists a large-enough constant $c_\Phi$ such that
  $E[\Phi(I_r) | I_{r-1}] \leq \Phi(I_{r-1}) / \sqrt{2}$.
\end{corollary}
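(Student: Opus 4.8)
The plan is to apply \lemref{partightbound} subproblem-by-subproblem, sum by linearity of expectation across level $r-1$, and then check that the geometric weight $(1+c_\Phi/\lg n)^{\lg n - r}$ in the definition of $\Phi$ absorbs the extra $(1+O(1/\sqrt{C_\phi}))$ slack that \lemref{partightbound} incurs.

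First I would verify that the stated hypotheses permit invoking \lemref{partightbound}: with $C_\phi = \Theta(\lg^2 n)$ we have $C_\phi^{2.5}\log n = \Theta(\lg^6 n)$, so $N_L = \Omega(\lg^6 n)$ satisfies the requirement $N_L \geq C_\phi^{2.5}\log n$. Hence for every path-relevant subproblem $s \in I_{r-1}$, writing $s_1, s_2, \ldots$ for its children in the flattened path-relevant tree, \lemref{partightbound} gives $E[\,\textstyle\sum_j \phi(s_j) \mid s\,] \leq (\phi(s)/\sqrt2)\,(1 + O(1/\sqrt{C_\phi}))$. Since $C_\phi = \Theta(\lg^2 n)$, we have $1/\sqrt{C_\phi} = \Theta(1/\lg n)$, so this error term is at most $c_0/\lg n$ for some constant $c_0$ independent of $n$. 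Summing over all $s \in I_{r-1}$, and using that each subproblem in $I_r$ is the child of exactly one subproblem in $I_{r-1}$, linearity of expectation yields
\[ E\Bigl[\,\textstyle\sum_{s' \in I_r}\phi(s') \;\Bigm|\; I_{r-1}\,\Bigr] \;\le\; \frac{1}{\sqrt2}\Bigl(1 + \frac{c_0}{\lg n}\Bigr)\sum_{s \in I_{r-1}}\phi(s). \]

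Next I would multiply through by the level weight. By definition $\Phi(I_r) = (1+c_\Phi/\lg n)^{\lg n - r}\sum_{s'\in I_r}\phi(s')$, so
\[ E[\Phi(I_r) \mid I_{r-1}] \;\le\; \frac{1}{\sqrt2}\Bigl(1 + \frac{c_0}{\lg n}\Bigr)\Bigl(1 + \frac{c_\Phi}{\lg n}\Bigr)^{\lg n - r}\sum_{s\in I_{r-1}}\phi(s). \]
Choosing $c_\Phi$ to be any constant with $c_\Phi \ge c_0$ gives $1 + c_0/\lg n \le 1 + c_\Phi/\lg n$, so the right-hand side is at most $\tfrac{1}{\sqrt2}(1+c_\Phi/\lg n)^{\lg n - r + 1}\sum_{s\in I_{r-1}}\phi(s) = \Phi(I_{r-1})/\sqrt2$, which is the claim.

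The delicate point is not the algebra above but ensuring \lemref{partightbound} is applied with the expectation taken in the order prescribed by \lemref{parasymmetry} --- first reveal which pivots are partly and fully path related, then the search distance, then the specific pivots --- since that ordering is exactly what makes the conditional bounds $E[\cdot\mid s]$ compose correctly when summed over a whole level. The conceptual content worth stressing is why the weight is needed at all: because fringe vertices are duplicated across child subproblems, $\sum_{s'}\phi(s')$ can genuinely exceed $\sum_s \phi(s)$, and it is precisely the factor $(1+c_\Phi/\lg n)^{\lg n - r}$, which shrinks by $(1+c_\Phi/\lg n)$ with each level, that converts ``grows by a $(1+c_0/\lg n)$ factor in expectation'' into an honest per-level contraction by $1/\sqrt2$. (To then invoke \thmref{karp} one must separately argue that $\Phi$ is non-increasing along every feasible sequence; that is handled by conditioning on the high-probability events that bound the number of fringe vertices and the degree of the tree.)
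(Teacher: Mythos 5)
Your proof is correct and follows essentially the same route as the paper: apply \lemref{partightbound} to each subproblem, sum by linearity of expectation, and choose $c_\Phi$ large enough that the one factor of $(1+c_\Phi/\lg n)$ shed by the level weight absorbs the $(1+O(1/\sqrt{C_\phi})) = 1+O(1/\lg n)$ slack. Your write-up simply makes explicit the weight bookkeeping and the verification that $N_L = \Omega(\lg^6 n)$ meets the hypothesis $N_L \geq C_\phi^{2.5}\log n$, which the paper leaves implicit.
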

\begin{proof}
  Choose $c_\Phi$ large enough so that $1+c_\Phi/\lg n$ is greater than
  the $(1+O(1/\sqrt{C_\phi})) = 1+O(1/\log n))$ term in
  \lemref{partightbound}.  The claim then follows from linearity of
  expectation over subproblems.
\end{proof} 

The real purpose of the extra $(1+c_\Phi/\lg n)^{\lg n -r}$ factor is
to offset any potential increases to the subproblem potentials~$\phi$.
With an unlucky number of active fringe vertices, it is possible that
$\Phi$ increase when going from one row to the next.  Such an
increase, called a \defn{fringe failure}, would preclude the
application of \thmref{karp},  The next lemma shows that fringe
failures are unlikely.

\begin{lemma}\lemlabel{fringefail}
  There exist constants $c_\Phi$ and $c_L$ such that: for
  $C_\phi = \Theta(\lg^2 n)$ and
  $N_L \geq c_L C_\phi^3\lg n = \Omega(\log^7n)$,
  $\prob{\Phi(I_{r+1}) > \Phi(I_r)} \leq 1/(8\lg n)$.
\end{lemma}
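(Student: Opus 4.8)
Set $\Psi_r = \sum_{s\in I_r}\phi(s)$, so that $\Phi(I_r) = (1+c_\Phi/\lg n)^{\lg n - r}\Psi_r$ and the event $\Phi(I_{r+1}) > \Phi(I_r)$ is precisely the event $\Psi_{r+1} > (1+c_\Phi/\lg n)\Psi_r$. The plan has three parts: first, establish a \emph{pointwise} (non-expectation) bound showing that $\Psi_{r+1}$ exceeds $(1+1/\sqrt{C_\phi})\Psi_r$ only by a multiple of the number $F$ of path-active vertices that fringe searches place into level-$(r+1)$ subproblems; second, bound $E[F\mid I_r]$ via \lemref{fringe}; third, choose $c_\Phi$ and $c_L$ and finish with Markov's inequality.

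For the pointwise bound I would revisit the proof of \lemref{partightbound}. Everything in that proof through Equation~\ref{eqn} --- in particular the application of \lemref{phisplit} and of \lemref{phifringe} --- is a deterministic manipulation of a fixed random outcome rather than an expectation. Moreover the quantity $\mphi'+\mphi''$ there is deterministically at most $\phi(s)$: the first pivot partitions the preserved, fully-path-related vertices into two groups, so \lemref{phisplit} bounds the sum of the two $\phi_1$-terms by the $\phi_1$ of the combined preserved counts, which is at most $\phi_1(s)$ by monotonicity of $\phi_1$ under deactivating vertices and converting bridges to ancestors or descendents (the same monotonicity invoked in \lemref{goodfuncworks}); the $\meta$ terms are additive and at most $\phi_2(s)$ for the same reason. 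Hence for every $s\in I_r$, with $s_1,s_2,\ldots$ its children,
\[ \phi(s_1)+\phi(s_2)+\cdots \;\le\; \bigl(1+1/\sqrt{C_\phi}\bigr)\phi(s) + 3C_\phi^2 f(s), \]
where $f(s)$ counts path-active vertices added to children of $s$ by fringe searches (if a bridge pivot is selected the left side is $0$ and the inequality is trivial). Summing over $s\in I_r$ gives the pointwise bound $\Psi_{r+1}\le(1+1/\sqrt{C_\phi})\Psi_r + 3C_\phi^2 F$ with $F=\sum_{s\in I_r}f(s)$.

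Now, since $C_\phi=\Theta(\lg^2 n)$ we have $1/\sqrt{C_\phi}=\Theta(1/\lg n)$, so I would take $c_\Phi$ (the same constant as in the definition of $\Phi$, enlarged if necessary) large enough that $c_\Phi/\lg n - 1/\sqrt{C_\phi}\ge c_\Phi/(2\lg n)$. Combined with the pointwise bound, the event $\Psi_{r+1}>(1+c_\Phi/\lg n)\Psi_r$ forces $3C_\phi^2 F>\tfrac{c_\Phi}{2\lg n}\Psi_r$. Since $\Psi_r\ge\sum_{s\in I_r}\phi_2(s)=\sum_{s\in I_r}\eta(s)$, and since applying \lemref{fringe} to the (unique) iteration of each flattened node $s\in I_r$ that selects a partly-path-related pivot --- with $V'$ taken to be the partly-path-related vertices of $s$ --- gives $E[f(s)]=O(\eta(s)\log n/N_L)$, we get $E[F\mid I_r]=O\bigl((\sum_s\eta(s))\log n/N_L\bigr)=O(\Psi_r\log n/N_L)$ (overlap of sibling subgraphs is harmless because we only ever sum potentials). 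Markov's inequality then yields
\[ \prob{\Phi(I_{r+1})>\Phi(I_r)\mid I_r}\;\le\;\frac{3C_\phi^2\cdot O(\Psi_r\log n/N_L)}{(c_\Phi/(2\lg n))\,\Psi_r}\;=\;O\!\left(\frac{C_\phi^2\log^2 n}{c_\Phi N_L}\right), \]
which is at most $1/(8\lg n)$ once $N_L\ge c_L C_\phi^3\lg n$ for a sufficiently large constant $c_L$ (using $C_\phi=\Theta(\lg^2 n)$ to absorb a factor $\log^2 n$ into $C_\phi\lg n$); averaging over $I_r$ removes the conditioning. The case $I_r=\emptyset$ is trivial since then $\Phi(I_r)=\Phi(I_{r+1})=0$.

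The step I expect to be the main obstacle is the pointwise bound: the published proof of \lemref{partightbound} freely interleaves algebra and expectation, so I must verify with care that the chain through Equation~\ref{eqn}, together with $\mphi'+\mphi''\le\phi(s)$, really holds for each fixed random outcome. A secondary subtlety is the bookkeeping in the application of \lemref{fringe}: fringe vertices coming from never-related pivots are never path active, so \lemref{fringe} must be applied only to the final iteration of each flattened node, and one has to check that any path-active fringe vertex of a child subproblem indeed belongs to the parent's set of partly-related vertices, which uses the design fact that search distances never increase down the recursion.
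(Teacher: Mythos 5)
Your proposal is correct and follows essentially the same route as the paper's proof: a deterministic (pointwise) observation via \lemref{phisplit} and \lemref{phifringe} that $\sum_s\phi(s)$ can only exceed a $(1+O(1/\sqrt{C_\phi}))$ factor through the additive $3C_\phi^2 f$ fringe contribution, absorbing the multiplicative overhead into the $(1+c_\Phi/\lg n)$ factor in $\Phi$, bounding $E[f]$ by \lemref{fringe}, and finishing with Markov's inequality. Your write-up merely makes explicit the pointwise bound and the per-node application of \lemref{fringe} that the paper's shorter proof leaves implicit.
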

\begin{proof}
  Even if all active vertices are preserved, \lemref{phisplit} states
  that the subproblem potentials $\sum_s \phi(s)$ cannot increase
  without the addition of fringe nodes.  The active fringe nodes
  themselves have two contributions (see \lemref{phifringe}): a
  multiplicative $(1+O(1/\sqrt{C_\phi}))$ overhead, and an additive
  $3C_\phi^2f$.  The former does not depend on the number of fringe
  nodes, so choose $c_\Phi/\lg n$ to be say twice as large as the
  $O(1/\sqrt{C_\phi})$ term.  Thus for $\Phi$ to increase would
  require that the total contribution from $f$ fringe nodes exceed
  $f \geq c_\Phi/(2\lg n) \sum_{s\in I_r}\phi(s)$.  For large enough
  $N_L$, the expected number of fringe nodes is
  $E[f] = O(\sum_{s \in I_r} \phi(s) \log n / N_L) \leq \sum_{s \in
    I_r} \phi(s) / (3C_\phi^3)$
  giving an expected potential contribution of
  $\sum_{s\in I_r} \phi(s) / C_\phi$.  For large enough $c_\Phi$ and
  $C_\phi=\Theta(\lg^2n)$, this expectation is at most
  $c_\Phi /(16\lg^2n) \sum_{s\in I_r} \phi(s)$.  Reaching the target
  threshold would require being $8\lg n$ times the expectation, which
  occurs with probability at most $1/(8\lg n)$ by Markov's inequality.
\end{proof}

To prevent $\Phi$ from increasing at all, instead define $\Phi'$ to be
equal to $\Phi$, except that it drops to 0 when a fringe failure
occurs.  It follows that (1) $\Phi'$ never increases, and (2)
$E[\Phi'(I_{r+1} | I_r] \leq E[\Phi(I_{r+1} | I_r]$.  \thmref{karp}
can now be applied.

\begin{lemma}\lemlabel{keypar}
  Let $\hat{G} = (\hat{V},\hat{E})$ be a directed graph, and let
  $n=\card{\hat{V}}$.  There exists a setting of
  $D=\Theta(n^{2/3}\log^{4/3}n)$, $C_\phi = \Theta(\log^2n)$,
  $N_L = \Theta(\log^7 n)$, and $\epsilon_\pi = O(1/\log^3n)$ such
  that the following holds.

  Consider any directed path $\hat{P}$ with $\length{\hat{P}} \leq D$ from
  $u$ to $v$.  Let $S$ be the shortcuts produced by an execution of
  \algref{par} on $\hat{G}$ with starting $h=\lg n$.  Then with
  probability at least 5/8: there exists a path from $u$ to $v$ in
  $G_S=(\hat{V}, \hat{E}\cup S)$ consisting of at most $D/2$ arcs.
\end{lemma}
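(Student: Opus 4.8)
The plan is to replay the proof of \lemref{seqmain} (equivalently \lemref{weakdiam}) using the parallel ingredients assembled in this section. First I would fix an arbitrary simple directed path $\hat{P}$ from $u$ to $v$ with $\length{\hat{P}} \le D$ (a shortest $u$-to-$v$ path works, since by hypothesis one of length $\le D$ exists), set $C_\phi = \Theta(\lg^2 n)$, $N_L = \Theta(\lg^7 n)$, and $\epsilon_\pi = \Theta(1/\lg^3 n)$ large or small enough to satisfy the hypotheses of \lemref{partree}, \lemref{fringefail}, and the preceding level-contraction corollary, and leave the constant hidden in $D = \Theta(n^{2/3}\log^{4/3}n)$ to be fixed at the end. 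I would work with the truncated potential $\Phi'$, which equals $\Phi$ until the first fringe failure and is $0$ thereafter, so that $\Phi'$ is nonincreasing; combining $\Phi'(I_r) \le \Phi(I_r)$ with the level-contraction corollary yields $E[\Phi'(I_r) \mid I_0,\ldots,I_{r-1}] \le \Phi'(I_{r-1})/\sqrt{2}$ in every case (once a fringe failure has occurred, both sides vanish). By \lemref{phibounds} the initial value is $\Phi'(I_0) = (1+c_\Phi/\lg n)^{\lg n}\phi(\hat{G},\hat{P}) \le e^{c_\Phi}(C_\phi+1)n = O(n\log^2 n)$.

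Next I would apply \thmref{karp} to $\Phi'$ with $p = 1/\sqrt{2}$ and $w = 6$: for any integer $r \ge 0$, $\prob{\Phi'(I_{r+8}) > (1/\sqrt{2})^r\,\Phi'(I_0)} \le (1/\sqrt{2})^6 = 1/8$. Two further failure events are handled by a union bound: a fringe failure occurring in one of rounds $1,\ldots,r+8$, which by \lemref{fringefail} (and $r+8 \le \lg n$) has probability at most $(r+8)/(8\lg n) \le 1/8$; and the ``tree too large'' event of \lemref{partree}, of probability at most $1/8$. On the complement of all three --- probability at least $5/8$ --- there is a $u$-to-$v$ path in $G_S$ of length at most $O(2^r) + \sum_i \length{P_i}$, where the sum ranges over the level-$(r+8)$ path-relevant subproblems; moreover no fringe failure has occurred, so $\Phi'(I_{r+8}) = \Phi(I_{r+8})$, and $\Phi(I_{r+8}) \ge \sum_{s \in I_{r+8}}\phi(s) \ge \sum_s \eta(s)$ is an upper bound on the total number of path-active vertices at level $r+8$. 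Since every vertex of each subpath $P_i$ is a bridge of $P_i$ and hence path-active, and the subpaths partition $\hat{P}$ (so their vertices are disjoint), $\sum_i \length{P_i} \le \Phi(I_{r+8}) = \Phi'(I_{r+8}) \le (1/\sqrt{2})^r \cdot O(n\log^2 n)$.

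Finally I would balance the two terms: taking $r = (2/3)(\lg n + 2\lg\lg n) + \Theta(1)$ makes $2^r = \Theta(n^{2/3}\log^{4/3}n)$ and $(1/\sqrt{2})^r\,n\log^2 n = \Theta(n^{2/3}\log^{4/3}n)$, so the guaranteed path has length $O(n^{2/3}\log^{4/3}n)$; choosing the hidden constant in $D$ large enough forces this below $D/2$, and $r + 8 \le \lg n$ holds for all sufficiently large $n$, which legitimizes the steps above. The arithmetic here is routine --- the real care is in the bookkeeping of the truncation: one must check that $\Phi'$ genuinely meets the monotonicity and contraction hypotheses of \thmref{karp} in all cases, and, crucially, that conditioning on the \emph{absence} of a fringe failure is exactly what lets $\Phi'(I_{r+8})$ (rather than the possibly spiked $\Phi(I_{r+8})$) bound the active-vertex count, so that it can be combined with the path-length bound of \lemref{partree}.
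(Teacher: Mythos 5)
Your proposal is correct and follows essentially the same route as the paper: bound $\Phi'(I_0) = O(n\log^2 n)$, apply \thmref{karp} to the truncated potential $\Phi'$, union-bound the Karp failure, the fringe failures (via \lemref{fringefail} over at most $\lg n$ levels), and the tree-size failure of \lemref{partree} to get success probability $5/8$, then balance $2^r$ against $(1/\sqrt{2})^r n\log^2 n$ at $r = (2/3)(\lg n + 2\lg\lg n) + \Theta(1)$ and absorb the constant into $D$. Your explicit verification that $\Phi'$ satisfies Karp's monotonicity and contraction hypotheses, and that the absence of a fringe failure is what makes $\Phi'(I_{r+8}) = \Phi(I_{r+8})$ bound the active-vertex (hence subpath-length) count, is the same bookkeeping the paper performs, just spelled out in more detail.
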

\begin{proof}
  The starting value of
  $\Phi'(I_0) \leq (1+c_\Phi/\lg n)^{\lg n}(C_\phi + 1)n = O(n\log^2
  n)$.
  For large enough constant $w$, \thmref{karp} states that
  $\prob{\Phi'(I_{r+w}) > (1/\sqrt{2})^r O(n\lg n)} < 1/8$.  Then for
  $r=(2/3)\lg n + (4/3)\lg\lg n+\Theta(1)$, this expression reduces to
  $\prob{\Phi'(I_{r+w}) > c n^{2/3}\lg^{4/3}n} < 1/8$, for some
  constant $c$.

  If a fringe failure occurs, the bound on $\Phi'$ is meaningless.
  The probability of a fringe failure is at most the union bound over
  $r< \lg n$ levels of the failure probability $1/(8\lg n)$ from
  \lemref{fringefail}, which reduces to $1/8$.  If neither of these
  failures occurs, the bound on $\Phi'$ implies a bound on active
  unshortcutted subpaths, as all path vertices counted as bridges
  towards $\Phi'$.  Thus, with failure probability $1/4$, the total
  length of all subpaths in path-relevant subproblems in level-$(r+2)$
  is at most $O(n^{2/3}\lg^{4/3}n)$.
  
  Finally, consider the concatenations and shortcutted leaves via
  \lemref{partree}.  With failure probability $1/8$, the total
  shortcutted length is thus $O(n^{2/3}\lg^{4/3}n)$.  Choose $D$ to be
  a constant factor larger than the constant hidden inside the big-O.
\end{proof}

\subsection{Completing the Proof of \lemref{parmain}}

This section proves bounds on the number of shortcuts and overall work
performed.  The main goal is to show that, with probability at least
$7/8$, the number of vertices (and hence shortcuts) and arcs visited
by searches is consistent with \lemref{keypar}.  Thus, with
probability at least $1/2$, both \lemref{keypar} and this bound hold.
Combining with \lemref{keypar}, this yields a proof of
\lemref{parmain} and hence also \thmref{parfull}.

The settings used are $D=\Theta(n^{2/3}\log^{4/3}n)$,
$C_\phi = \Theta(\log^2n)$, $N_L = \Theta(\log^7n)$,
$\epsilon_\pi = \Theta(1/\log^3n)$, and $N_k = \Theta(\log^4n)$, as
dictated by constraints offered in previous lemmas.   The maximum
search distance is immediate: it is at most $hDN_kN_L = O(\log n \cdot
n^{2/3}\log^{4/3}n\cdot \log^4n \cdot \log^7n) = O(n^{2/3}\log^{14}n)$.

Consider each level of recursion in \algref{par}.  \corref{numvisits}
holds with high probability, so assume that it holds for every node at
every iteration.  Consider any vertex in the iteration in which it is
visited by a core search.  By assumption of \corref{numvisits}, the
vertex, and hence its incident arcs, is visited by at most $O(\log n)$
searches. 

The number of vertices (and hence arcs) may increases with each level
in the tree due to fringe searches.  The final step of the proof is to
argue that the total size of all subproblems in the final level of
recursion is $O(n)$ vertices and $O(m)$ arcs, and hence the total cost
of all levels is $O(n\log^2n)$ vertices and $O(m\log^2n)$ arcs.  

By \lemref{fringe} with $N_L=\log^7n$, the number of vertices and arcs
increases with each level of recursion by an additive $O(n'/\log^2n)$
and $O(m'/\log^2n)$ in expectation, where $n'$ and $m'$ are the
current numbers at that level.  Thus, by Markov's inequality, with
probability at most $1/(8\lg n)$, the increases is not more than a
multiplicative $1+O(1/\log n)$.  Since there are $\lg n$ levels of
recursion, this results in probability at most $1/8$ of exceeding a
total of $n(1+O(1/\log n))^{\lg n} = O(n)$ vertices and
$m(1+O(1/\log n))^{\lg n} = O(m)$ arcs.

\section{Parallel Version}\seclabel{details}

This section analyzes a parallel version of \algref{par} and
\algref{pardiam}.  This section assumes the reader is comfortable
enough with parallel algorithms to infer the details, instead focusing
only on the interesting issues.

The main results are as follows. 
\begin{theorem}\thmlabel{pardiamalgbounds}
  There exists a randomized parallel algorithm taking as input a
  directed graph $\hat{G}=(\hat{V},\hat{E})$ with the following
  guarantees.  Let $n=\card{\hat{V}}$, $m=\card{\hat{E}}$, and without
  loss of generality assume $m\geq n/2$. Then (1) the algorithm
  produces a size-$O(n\log^4n)$ set $S^*$ of shortcuts; (2) the
  algorithm has $O(m\log^6n+n\log^{10}n)$ work; (3) the algorithm has
  $O(n^{2/3}\log^{21}n)$ span; and (4) with high probability, the
  diameter of $G_{S^*} = (\hat{V}, \hat{E}\cup S^*)$ is
  $O(n^{2/3}\log^{4/3}n)$.  
\end{theorem}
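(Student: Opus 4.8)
The plan is to run \algref{pardiam} (hence \algref{par}) essentially unchanged, give each step a standard parallel implementation, and then carry over the combinatorial guarantees from \thmref{parfull} while separately accounting for the parallel overhead.

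\textbf{Carrying over the combinatorial bounds.} The shortcuts produced, the number of vertices and arcs touched by searches, and in fact every internal quantity of \algref{pardiam}, depend only on the random bits, not on the scheduling of independent operations. Hence claim~(1) (the $O(n\log^4 n)$ shortcut bound) and claim~(4) (the diameter bound, with high probability) are exactly \thmref{parfull}, and the ``$O(m\log^4 n + n\log^8 n)$ vertices and arcs visited by searches'' bound of \thmref{parfull} transfers verbatim. As in \thmref{parfull}, the aborts in \algref{pardiam} make the shortcut and work bounds worst case rather than merely expected.

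\textbf{Work.} Each operation of \algref{par} admits a parallel implementation whose work is within an $O(\log^2 n)$ factor of its sequential cost: randomly permuting $V$ and cutting it into the subsequences $X_i$ (prefix sums); a $dD$- or $(d+1)D$-limited BFS from a live pivot, implemented by iteratively expanding the frontier and removing duplicates (the pivots of a common iteration search concurrently), with work linear in the vertices and arcs explored up to the $O(\log^2 n)$ factor; resolving the ``first core search wins'' tags and forming $V_{B,j},V_{S,j},V_{P,j},V_R$ (sorting); and constructing the induced subgraphs, including the copied fringe, before forking the recursive calls. Since \thmref{parfull} shows the searches dominate the sequential operation count and touch $O(m\log^4 n + n\log^8 n)$ vertices and arcs over all $\Theta(\log^2 n)$ runs of \algref{par} inside \algref{pardiam}, the total work is $O((m\log^4 n + n\log^8 n)\log^2 n) = O(m\log^6 n + n\log^{10}n)$, which is claim~(2). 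The only probabilistic input here is \corref{numvisits} (each vertex is reached by $O(\log n)$ searches with high probability), and that is already folded into the counts of \thmref{parfull}.

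\textbf{Span.} This is the main obstacle. The outer loop of \algref{pardiam} runs $\Theta(\log n)$ iterations sequentially, while the $\Theta(\log n)$ executions of \algref{par} inside each iteration are independent and run concurrently; so the span is $\Theta(\log n)$ times the span of one call of \algref{par} started at $h=\lg n$, plus $O(\log^2 n)$ for merging. Inside such a call the for loop runs its $\le N_k$ iterations sequentially (iteration $i+1$ reads $G\gets G[V_R]$ from iteration~$i$), but the recursive calls spawned in the loop operate on subgraphs disjoint from $G[V_R]$ --- fringe vertices are \emph{copied} and core vertices are removed from $G[V_R]$ --- so they can be forked off and run concurrently with the remaining iterations and with one another. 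Consequently the span of the whole call is at most $\sum_{h=1}^{\lg n}$ (the span of a single for loop at recursion height $h$). Within iteration~$i$ at height~$h$ the span is dominated by the $(d+1)D$-limited BFS, which has span $O(\dmax{}D\log n)$ with $\dmax{}=hN_kN_L-(i-1)N_L\le hN_kN_L$, the tagging/sorting/subgraph steps adding only $O(\log n)$ each; summing over the $\le N_k$ iterations gives $O(hN_k^2N_L D\log n)$, and summing over $h$ gives $O(N_k^2N_L D\log^3 n)$. Multiplying by the $\Theta(\log n)$ outer iterations and plugging in $N_k=\Theta(\log^4 n)$, $N_L=\Theta(\log^7 n)$, $D=\Theta(n^{2/3}\log^{4/3}n)$ yields span $O(n^{2/3}\log^{21}n)$, establishing claim~(3); the crude bookkeeping (charging $O(\log n)$ per BFS layer, bounding $\dmax{}\le hN_kN_L$ rather than its exact value, etc.) is absorbed into the exponent. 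The two features that make the span argument go through --- and that require care --- are the monotone-distance design of \algref{par} (the distance offset strictly decreases with both recursion depth and iteration index, so the distances searched along any root-to-leaf path of the recursion stay bounded by $hN_kN_L\cdot D = O(n^{2/3}\log^{14}n)$ and do not compound) and the disjointness of the recursive subgraphs, which is exactly what lets the recursion be overlapped with the loop rather than run after it.
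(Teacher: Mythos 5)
Your overall route is the same as the paper's: pull claims (1) and (4) and the search-size bound from \thmref{parfull}, implement the per-iteration searches as a frontier-based parallel BFS with sorting/prefix-sums at an $O(\log^2 n)$ work overhead per arc visit and $O(\log n)$ span per round (this is exactly the content of the paper's \lemref{parsearches}, resting on \corref{numvisits}), and multiply out the distances, iterations, recursion levels, and outer-loop passes to get the span; your arithmetic lands on the same $O(m\log^6 n + n\log^{10} n)$ work and $O(n^{2/3}\log^{21} n)$ span.

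The concrete gap is the abort mechanism behind claim (2). You assert that ``the aborts in \algref{pardiam} make the shortcut and work bounds worst case,'' but you never say how a run of \algref{par} detects that it has exceeded the work budget, and this is not routine: the budget in \lemref{parmain} is an \emph{aggregate} over all subproblems at all recursion levels, so no individual recursive call can decide locally to stop, and the natural fix of checking elapsed time is not available in the work-span model (elapsed time depends on the scheduler and processor count, which the model abstracts away). The paper resolves this by executing the recursion level-synchronously --- maintaining an array of all subproblems at the current recursion depth, launching the next level only after the current one completes, and using a parallel reduce between levels to count the arcs in the next level's subproblems, aborting if the count is too large. Note that this is in direct tension with your span argument, which forks each recursive call to run concurrently with the remaining iterations of the enclosing loop: under that schedule there is no synchronization point at which the global arc count can be taken. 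Fortunately the overlap buys you nothing --- summing the per-level loop spans over the $O(\log n)$ levels gives the same $O(n^{2/3}\log^{21}n)$ bound even if each level is run to completion before the next starts --- so the fix is to drop the overlapping schedule, adopt the level-synchronous execution, and add the inter-level reduce; without some such mechanism, claim (2) only holds with high probability rather than unconditionally.
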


\begin{corollary}
  There exists a randomized parallel CREW algorithm for digraph
  reachability that has work $O(m \log^6n+n\log^{10}n)$ work and
  $O(n^{2/3}\log^{21}n)$ span, both with high probability.
\end{corollary}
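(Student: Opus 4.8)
The plan is to combine the parallel diameter-reduction algorithm of \thmref{pardiamalgbounds} with a single parallel breadth-first search. First I would run the algorithm of \thmref{pardiamalgbounds} on the input graph $\hat{G}=(\hat{V},\hat{E})$, obtaining a set $S^*$ of $O(n\log^4 n)$ shortcuts such that, with high probability, the diameter of $G_{S^*}=(\hat{V},\hat{E}\cup S^*)$ is $O(n^{2/3}\log^{4/3}n)$; by \thmref{pardiamalgbounds} this step uses $O(m\log^6 n + n\log^{10}n)$ work and $O(n^{2/3}\log^{21}n)$ span. Then I would run a standard parallel BFS from the source $s$ in $G_{S^*}$, processing each BFS layer in parallel. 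Because every shortcut is an arc $(u,v)$ with $u\preceq v$, adding $S^*$ does not change the transitive closure, so the set of vertices discovered by the BFS is exactly $\Succ{\hat{G}}{s}$; the output is therefore correct regardless of whether the shortcutting succeeded in reducing the diameter.

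For the resource bounds, parallel BFS on a graph with $n$ vertices, $m' = m + O(n\log^4 n)$ arcs, and diameter $d$ can be implemented on a CREW PRAM with $O(m'\log n)$ work and $O(d\log n)$ span: each layer costs $O(\log n)$ span for the associated prefix sums and duplicate removal, together with work linear in the arcs incident to that layer. Conditioning on the high-probability event that $d = O(n^{2/3}\log^{4/3}n)$, the BFS phase contributes $O(m'\log n) = O(m\log n + n\log^5 n)$ work and $O(n^{2/3}\log^{7/3}n)$ span, each of which is dominated by the corresponding bound for the shortcutting phase. Summing the two phases and absorbing constants gives total work $O(m\log^6 n + n\log^{10}n)$ and total span $O(n^{2/3}\log^{21}n)$, with high probability, as claimed; the reachability answer itself is always correct.

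The one point deserving a remark, rather than a calculation, is that the span bound is conditional on the diameter reduction succeeding. On the low-probability failure event the BFS still terminates with the correct answer, but its span could be $\Theta(n)$. Since \thmref{pardiamalgbounds} guarantees that its diameter, shortcut-count, and work bounds all hold simultaneously with probability $1 - n^{-\Omega(1)}$, all the stated bounds hold simultaneously with high probability, which is exactly what the statement claims. (If one instead wanted a Las Vegas guarantee on the running time, one could verify after the BFS that the number of layers explored was $O(n^{2/3}\log^{4/3}n)$ and restart the entire procedure otherwise; each attempt succeeds with high probability, so the expected number of attempts is $1+o(1)$ and the bounds are unchanged in expectation.) I expect no real obstacle here: essentially all of the difficulty resides in \thmref{pardiamalgbounds}, and this deduction is routine.
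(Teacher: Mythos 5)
Your proposal is correct and follows essentially the same route as the paper: run the diameter-reduction algorithm of \thmref{pardiamalgbounds} and then a parallel BFS, whose work and span are dominated by the shortcutting phase. The only cosmetic difference is that the paper caps the BFS at $O(n^{2/3}\log^{4/3}n)$ rounds and repeats the whole procedure if it does not finish (the Las Vegas variant you mention parenthetically), whereas you run an unrestricted BFS and absorb the failure event into the high-probability span bound --- both satisfy the statement as written.
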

\begin{proof}
  Perform the diameter reduction algorithm, then run a standard
  parallel BFS but limited to $O(n^{2/3}\log^{4/3})$ hops.  The work
  and span of the diameter reduction dominates.  If the BFS completes
  in the prescribed number of rounds, the algorithm terminates.
  Otherwise, keep repeating the diameter reduction and BFS until
  successful. 
\end{proof}

\paragraph{Model.} This paper adopts the \emph{de facto} standard
\defn{work-span model}~\cite{CLRS}, also called
work-time~\cite{JaJa92} or work-depth model, which abstracts low-level
details of the machine such as the number of processors or how
parallel tasks are scheduled.  The work-span model allows algorithms
to be expressed through the inclusion of parallel loops, i.e., a
parallel foreach. A parallel foreach indicates that each task
corresponding to a loop iteration may execute in parallel, and that
all parallel tasks must complete before continuing to the next step
after the loop.  It is generally straightforward to map algorithms
from the work-span model to a PRAM model; see,
e.g.,~\cite{JaJa92,KarpRa91}.  Like the asynchronous PRAM
model~\cite{Gibbons89}, the work-span model requires that algorithmic
correctness not be tied to any assumptions about how tasks are
scheduled beyond the explicit ordering imposed by the loops.  That is
to say, it should not be assumed that the instructions across
iterations execute in lock step.

The \defn{work} of an algorithm is the same as the sequential running
time in a RAM model if all parallel loops are replaced by sequential
loops.  When multiple tasks are combined through a parallel loop, the
\defn{span} of the composition is the maximum of the span of the
individual subproblems, plus the span of the loop itself.  There are
several variants to the work-span model. In a \defn{binary-forking
  model} such as~\cite{CLRS}, the span of a $k$-way loop is
$\Theta(\lg k)$.  Much of the literature on parallel algorithms,
however, adopts an \defn{unlimited-forking model}, where the span of
launching $k$ parallel tasks adds $O(1)$ to the span.  Since many of
the subroutines employed are analyzed in this model, this paper adopts
the unlimited-forking model.  PRAM algorithms, for example, correspond
to an unlimited forking model.  Both models only differ by logarithmic
factors in the span.

The algorithm is a concurrent-read exclusive-write (\defn{CREW}) algorithm.
CREW means that multiple parallel tasks may read the same data, but
they may not write to the same location.\footnote{CREW is usually a
  restriction applied to the
  PRAM~\cite{FortuneWy78,Goldschlager78,SavitchSt79} machine model,
  e.g., a CREW PRAM.  In contrast, the work-span model is an
  algorithmic cost model, not a machine model.  This paper proposes
  lifting the CREW qualifier to the work-span level rather than the
  PRAM level.}

\subsubsection*{Performing Concurrent Searches}

The key subroutine in \algref{par} are the $dD$-limited searches to
find, e.g., $R^+_j$.  One might simply replace the foreach loops by
parallel loops, but the question is how the bookkeeping should be
performed.  Ordinarily, a BFS keeps track of already-visited vertices
by either annotating vertices in the graph directly, or equivalently
by keeping an extra array indexed by vertex. A natural way to perform
multiple searches in parallel using a CREW algorithm would thus be to
duplicate the bookkeeping efforts for each parallel search, but doing so would
increase the work dramatically just to copy the graph or initialize
the arrays.

The key property that allows an efficient implementation is
\corref{numvisits} --- with high probability, no vertex is visited by
more than $O(\log n)$ parallel searches.  The implementation may
assume that this is the case, and just abort by returning immediately
if a vertex gets visited too many times.

The main goal is to support the following for each call to \foopar.

\begin{lemma}\lemlabel{parsearches}
  Consider an iteration $i$ in call to \foopar on graph $G=(V,E)$.
  Let $n_e$ be the total number of arcs traversed by searches, counting
  each  arc once per search that reaches it.  Then an iteration can be
  implemented with $O(n_e \log^2n + \card{X_i}\log n)$ work and
  $O(n^{2/3}\log^{15})$ span.
\end{lemma}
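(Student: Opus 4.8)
The plan is to implement the body of iteration~$i$ as a sequence of synchronous \emph{rounds}, each advancing every live pivot's four searches (forward/backward, core/fringe) by one BFS step simultaneously. Since the core distance $dD$ and the fringe distance $(d+1)D$ are fixed once per iteration (\algref{par}, lines~6--7) and $dD \le \dmax D \le hN_kN_L\cdot D = O(n^{2/3}\log^{14}n)$ for the stated parameter settings, there are only $O(n^{2/3}\log^{14}n)$ rounds. I would implement each round with $O(\log n)$ span, yielding the claimed $O(n^{2/3}\log^{15}n)$ span overall. The crucial enabling fact is \corref{numvisits}: with high probability no vertex is reached by more than $O(\log n)$ of the parallel searches. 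The implementation runs optimistically under this assumption and simply aborts --- returning immediately, consistent with how \algref{par} is invoked in \algref{pardiam} --- if any vertex ever accumulates too many visitors; this affects only a negligible-probability event and so does not change the guarantees.

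The hard part, and the technical heart of the lemma, is the bookkeeping for the concurrent searches under a CREW (exclusive-write) model: the naive idea of giving each of the $\card{X_i}$ searches its own ``visited'' array indexed by vertex already costs $\Omega(\card{X_i}\cdot\card{V})$ just to allocate and initialize, which is far too much. Instead I would associate with each vertex $w$ a lazily-grown sorted list $\mathit{hit}(w)$ of the indices of searches that have reached $w$; by \corref{numvisits} this list has length $O(\log n)$ throughout. One round then works as follows. The frontier is a multiset of pairs $(v,j)$ meaning ``$v$ was first reached by search $j$ in the previous round,'' and at $v$ there are $O(\log n)$ such $j$ by \corref{numvisits}. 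In parallel over all frontier pairs and all arcs incident to $v$ (out-arcs for forward searches, in-arcs for backward), generate candidate pairs $(w,j)$; the total number of candidates generated over the whole iteration is, by definition, $\Theta(n_e)$. Deduplicate a round's candidates and detect the genuinely new ones by a parallel sort keyed on $(w,j)$ followed by a scan against $\mathit{hit}(w)$; this contributes one logarithmic factor per candidate. The surviving new pairs are appended to the relevant $\mathit{hit}(w)$ lists using a prefix sum over the $O(\log n)$ pairs sharing each $w$, so each writer writes a distinct location (respecting CREW); a second re-sort then produces the next frontier and, once the ``first core search wins'' rule (\algref{par}, line~14) is resolved by a per-vertex minimum over the core hit-list of $w$, the updated sets $R_j^\pm$, $F_j^\pm$. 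Charging $O(\log n)$ to each of the $\Theta(n_e)$ candidates for these sorting and list operations, and $O(\log n)$ to each of the $\card{X_i}$ pivots (live or dead) for per-pivot setup and liveness checks, gives $O(n_e\log^2 n + \card{X_i}\log n)$ work; each sort/prefix-sum has $O(\log n)$ span, and there are $O(n^{2/3}\log^{14}n)$ rounds.

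It remains to account for the post-search work within the iteration: computing $V_{B,j},V_{S,j},V_{P,j}$ for each $j$ (parallel intersections and differences over the $\mathit{hit}$ lists, one pass per vertex); building the induced subgraphs $G[V_{S,j}\cup F^+_j]$ and $G[V_{P,j}\cup F^-_j]$ handed to the recursive calls (array compaction --- each vertex and arc lands in $O(\log n)$ such subgraphs by \corref{numvisits}); marking core vertices dead in the permutation~$X$; and forming $G[V_R]$ by a final compaction. Each of these is a constant number of sorts or compactions over $O(n_e)$ or $O(\card{V})$ items, hence fits within $O(n_e\log^2 n + \card{X_i}\log n)$ work and $O(\log n)$ span, and there are only $O(1)$ of them, so summing the contributions proves the bound. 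The only genuinely delicate point beyond routine parallel primitives is the interaction between a vertex being a fringe vertex of one search and a core vertex of another; handling this by running all searches to completion first and only then applying the first-core-wins rule --- rather than suppressing later searches online as the sequential code does --- keeps the bookkeeping simple, at a cost absorbed by \corref{numvisits}.
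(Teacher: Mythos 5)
Your treatment of the searches themselves is essentially the paper's construction: group all pivots' searches into one synchronous multi-source BFS, tag each vertex with the $O(\log n)$ IDs of searches that reached it (relying on \corref{numvisits}, and aborting if it is violated), deduplicate each round's candidate $(w,j)$ pairs by sorting against the per-vertex lists, resolve the ``first core search wins'' rule only after all searches of the iteration finish, and pay $O(\log n)$ per round of the $O(n^{2/3}\log^{14}n)$-bounded search distance for the span. That part, and its work/span accounting, matches the paper.

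The gap is in your last paragraph, where you ``form $G[V_R]$ by a final compaction'' and assert that sorts/compactions over $O(\card{V})$ items fit inside the budget. They do not: the lemma's work bound is $O(n_e\log^2 n+\card{X_i}\log n)$ \emph{per iteration}, and both $n_e$ and $\card{X_i}$ can be far smaller than $\card{V}$ (e.g., one live pivot whose core and fringe searches reach only a handful of vertices in a huge graph), so even a single $\Theta(\card{V})$-work pass per iteration is not covered. Materializing the induced subgraph $G[V_R]$ is worse still, since it must examine arcs that no search traversed, costing up to $\Theta(\card{E})$ per iteration, which is not chargeable to $n_e$ at all; with up to $\Theta(\log^4 n)$ iterations per call this also threatens the global work bound, not just the lemma. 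This is precisely the point the paper treats separately: $G[V_R]$ is never built explicitly. Instead, each vertex reached by a core search marks itself dead (an exclusive write per reached vertex, hence $O(n_e)$ work, and marking the corresponding entry of $X$ dead is likewise a per-reached-vertex write), and all subsequent searches are augmented to skip dead vertices, so the remaining graph exists only implicitly. Replace your final compaction with this dead-marking convention and the rest of your argument goes through.
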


The remainder of the section is devoted to exhibiting an algorithm
that proves \lemref{parsearches}.

The set of searches from $X_i$ (in one direction) are grouped together
as a single modified BFS.  Rather than marking a vertex with a single
bit indicating whether it has been discovered, a vertex is tagged with
a list of IDs of the pivots that have reached it. Every time this list
of IDs changes, the vertex may be re-added to the frontier and all of
its outgoing arcs explored again.  Since a vertex is not visited too
many times, the overhead is not too high.
 
In more detail, the algorithm is as follows.  At the start of the call
to \foopar, initialize $\Theta(\log n)$ space for each vertex to
record the ID tags, initally all null.  Use an array to store the
frontier vertices along with the ID of the pivot from which this
search originated; a vertex may appear in the frontier multiple times
from different pivots.  Save all frontiers so as to identify all
vertices reached by the searches at the end and also to record all new
shortcuts.

To start a set of searches from $\card{X_i}$, copy all live pivots
$x_j$ to the frontier array and associate with each pivot its own ID
as the originator of the search.  Also update each pivot's tag list to
include itself.

Each round of the BFS operates as follows.  Foreach vertex in the
frontier in parallel, identify the number of outgoing arcs.  Next,
perform parallel prefix sums so that each arc has a distinct index in
the next frontier array.  Foreach arc $(u,v)$ in parallel, let $x_j$
be the associated pivot ID.  Check whether $v$'s ID set includes
$x_j$; this check can be performed in $O(\log n)$ sequential time
(both work and span) by scanning through $v$'s tag list.  If $x_j$ is
not present, record $v$ and $x_j$ in $(u,v)$'s slot in the next
frontier; otherwise record null.  

At this point, a vertex may appear many times in the frontier list,
even from a single search.  Sort the frontier list by vertex (high
priority) and pivot ID (lower priority).  Remove duplicate entries
with a compaction pass.  Now each vertex appears at most once for each
search, so $O(\log n)$ times in total.  For each slot $j$ in the next frontier
in parallel, let $v$ be the vertex stored there.  Check whether this
is the first slot for vertex $v$, i.e., if $j-1$ stores a different
vertex.  If so, scan through the $O(\log n)$ next slots
(sequentially), and for each entry of $v$ append the pivot tag to
$v$'s tag list.

Repeat this process for the number of rounds dictated by the distance
$dD$ for the core searches.  Extend the search another $D$ rounds for
the fringe search, but use a different tag list and frontier array
going forward. 

When the searches complete, sort the arrays of all vertices reached by
core searches and by fringe searches.  Foreach vertex $v$ in core
searches, in parallel, identify the lowest ID pivot reaching~$v$.  Again
use parallel prefix sums and then copy the lowest-ID occurrence of $v$
to a new array for the recursive searches.  Finally, sort the new
array by pivot ID so that all vertices in the same induced subgraph
are adjacent.  Building the induced subgraphs for recursive calls can
again be accomplished with arc counting, prefix sums, and sorting.

\paragraph{Updating $G[V_R]$.} One could build $G[V_R]$ explicitly,
but doing so would require processing the full graph.  The goal
expressed by \lemref{parsearches} is to have work proportional to the
number of arcs reached, but $G[V_R]$ could be much larger.  Instead,
simply mark vertices in $V$ as dead when they have been reached by a
core search.  Augment the search to ignore dead vertices.

\vspace{1em}
\noindent\emph{Completing the proof of \lemref{parsearches}.}
The basic subroutines used in each round such as prefix sums,
compaction, etc, can all be performed in linear work and $O(\log n)$
span. (See e.g.,~\cite{JaJa92}.)  Scanning the list of tags also
requires $O(\log n)$ work per arc on the frontier and $O(\log n)$ span
as it is performed sequentially.  Using Cole's merge
sort~\cite{Cole88}, the cost of a sort is $O(\log n)$ work per element
sorted and $O(\log n)$ span.  Multiplying the search distance by
$O(\log n)$ thus gives the overall span bound.  Since each arc may be
reached by $O(\log n)$ searches, the bound is $O(\log^2n)$ work per
arc visited.\qed

\subsubsection*{Aborting \algref{par}}

To make the work (and shortcut) bound deterministic, \algref{pardiam}
needs the ability to abort any runs of \algref{par} that exceed the
target work bound. (Exceeding the shortcut bound can be handled simply
discarding the result --- a true abort is not necessary there.)

Unfortunately, the proof of \lemref{parmain} examines the work in
aggregate across levels in the recursion tree.  It is likely that
individual recursive subproblems will do more work, so abort decisions
are not local.

One simple option is to augment the algorithm to check the elapsed
time, and to return immediately if some threshold has been reached.
Technically, however, this solution violates the work-span model as
the target time bound would depend on both on how efficiently the
program is scheduled and on the number of processors employed.  

There is a solution in the work-span model --- logically implement the
recursive steps of the algorithm as a BFS.  That is, maintain an array
of subproblems, initially just \foopar{$\hat{G},\lg n$}.  To implement
a level of recursive, perform prefix sums to add up the total number
of vertices across all subproblem, and give each vertex (pivot) a
specific slot to put its recursive subproblem.  Instead of launching
the recursive subproblems immediately, simply record them in the
appropriate slot.  When all subproblems at this level of recursion
complete, launch all problems at the next level (in parallel).   

The work bound can only be exceeded if the total number of arcs in the
next set of subproblems grows too large.  This number can be counted
with a parallel reduce after each level of recursion completes.  None
of these steps increase the work or span asymptotically.

\paragraph{Proof of \thmref{pardiamalgbounds}.}  The diameter and
shortcut bounds are directly from \thmref{parfull}.  Multiplying the
cost per arc of \lemref{parsearches} with the number of arcs searched
in \thmref{parfull} gives the work bound.  Shortcuts can be gathered
and larger graphs built for each iteration of \algref{pardiam} by
sorting, but that is dominated by the other work performed.

The span bound is obtained by multiplying the maximum search distance
of $O(n^{2/3}\log^{14}n)$ by the $O(\log n)$ span per BFS round, the
$N_k = O(\log^4n)$ iterations in a call, the $O(\log n)$ levels of
recursion in a run of \algref{par}, and the $O(\log n)$ iterations of
the outer loop of \algref{pardiam}.  Note that the inner loop of
\algref{pardiam} can be implemented in parallel.  All together, that
gives $O(n^{2/3}\log^{21}n)$ span.  \qed

\section{Building a Directed Spanning Tree}\seclabel{tree}

This section discusses how to augmented the algorithm to produce a
directed spanning tree.  It is not immediately obvious how to do so
even for the sequential algorithm of \secref{sequential}.  To
illustrate the issues, consider the following graph:
$s\rightarrow u \rightarrow v \rightleftarrows w$.  If $w$ is selected
as a pivot first, then a shortcut $(s,w)$ is added and a BFS from $s$
in the shortcutted graph may discover the following path:
$s \leadsto w \rightarrow v$.  Simply splicing in the path
corresponding to the shortcut would result in $s\rightarrow u
\rightarrow v \rightarrow w \rightarrow v$, which is no longer a
simple path.  The goal is to do this splicing, but in a way that
avoids repeated vertices.  The situation is slightly more challenge in
the case of \algref{pardiam} because the arcs shortcutted could
themselves be shortcuts, but the result is just that several
iterations are needed.

The algorithm for building the directed spanning tree is a
postprocessing step performed after the full execution of
\algref{pardiam}.  The algorithm references the BFS trees used to
build shortcuts, however, so all BFS trees need to be saved as
\algref{pardiam} executes.  Each shortcut must also be augmented with
a reference to the BFS tree that produced it.   The forward-search BFS
trees are directed out from the root, whereas the backward-search BFS
trees are directed towards the root.  In this way, the BFS trees
correspond to arcs in some graph.

Let $G_0, G_1, \ldots, G_{k=\Theta(\log n)}$, where $G_0 = \hat{G}$,
denote the sequence of graphs built after each iteration of the outer
loop of \algref{pardiam}.  Running BFS on the resulting graph $G_k$
yields a directed spanning tree $T_k$ in $G_k$.  This section
describes how to transform a directed spanning tree $T_i$ in graph
$G_i$ to a directed spanning tree $T_{i-1}$ in $G_{i-1}$. Iterating
$\Theta(\log n)$ times gives a spanning tree in the original graph.

Start each iteration by labeling every vertex $v$ in the tree with
label $\id{low}(v)=0$ and $\id{high}(v)$, where $\id{high}(v)$ is
$v$'s distance from the root in $T_i$.  This can be accomplished in
linear work and logarithmic span using the Euler-tour
method~\cite{TarjanVi84}.

For the next step, the shortcuts in both directions are treated
differently.  The goal is to essentially splice in the paths, which
results in vertices appearing multiple times.  This multiplicity will
be resolved afterwards. 

For each vertex $u \in T_i$ in parallel, traverse all forward-search
BFS trees created in iteration~$i$ and rooted at $u$.  Label those
vertices $v$ with $\id{high}(v) = \id{high}(u)$, and $\id{low}(v)$ is
$v$'s depth (or distance from $u$) in the tree.  Note that these
labelings should be performed on the BFS trees themselves, not on
$T_i$ or $G_i$, as each vertex may belong to multiple trees and may
otherwise be labeled multiple times concurrently.\footnote{The
  Euler-tour technique could be applied to each tree, but a parallel
  BFS is sufficient here as the trees have depth
  $O(n^{2/3}\log^{4/3}n)$ by construction; the work and span would be
  at most the work and span of constructing the tree in the first
  place.}

For the backward direction, consider all arcs $(u,v)$ in $T_i$ in
parallel.  If $(u,v)$ is a shortcut on a backward-search BFS tree
rooted at $v$, traverse the path from $u$ to $v$ in the BFS tree and
label each vertex $w$ on the path by $\id{high}(w) = \id{high}(u)$.
Also label $\id{low}(w)$ with $w$'s distance from $u$ on the path. 

Finally, sort all arcs $(u,v)$ in the collection of BFS trees
traversed in the above process, as well as the arcs in $T_i$ that also
exist in $G_{i-1}$, by three values: $v$'s ID (most significant),
$\id{high}(u)$, and $\id{low}(u)$ (least significant). Foreach arc
$(u,v)$ in the sorted list in parallel, if it is the first arc
directed toward $v$ in sorted order then include the arc $(u,v)$ in
$T_{i-1}$.

\begin{lemma}
  Suppose that $T_i$ is a directed spanning tree in $G_i$ rooted at
  vertex $s$.  Then the $T_{i-1}$ produced is a directed spanning tree
  rooted at vertex $s$ consisting of only arcs present in $G_{i-1}$.
\end{lemma}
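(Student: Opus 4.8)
The plan is to verify the three required properties of $T_{i-1}$ — that it uses only arcs of $G_{i-1}$, that it spans exactly the vertices reachable from $s$, and that it forms a tree oriented away from $s$ — by exhibiting a quantity that strictly decreases along the parent pointers the construction assigns. Write $\mathcal{C}$ for the set of arcs sorted in the final step: the arcs of all BFS trees traversed during the labeling phase, together with the arcs of $T_i$ that already lie in $G_{i-1}$. First I would record the easy structural facts. (a) Every arc of $\mathcal{C}$ lies in $G_{i-1}$: the traversed BFS trees were produced by the distance-limited searches of iteration $i$, which run on $G_{i-1}$ (or its induced subgraphs), and the remaining arcs of $\mathcal{C}$ are chosen to be in $G_{i-1}$. (b) Both endpoints of every arc of $\mathcal{C}$ lie in $V(T_i)$: every vertex of a traversed forward tree is reachable in $G_{i-1}$ from that tree's root, which lies in $T_i$; every vertex of a traversed backward path is reachable in $G_{i-1}$ from the tail of a $T_i$-arc; and the set reachable from $s$ is the same in $G_{i-1}$ and $G_i$ since shortcuts preserve the transitive closure, so that set equals $V(T_i)$. (c) Every arc of $T_i$ is ``covered'': if it lies in $G_{i-1}$ it is in $\mathcal{C}$ directly; otherwise it is a shortcut created in iteration $i$ whose recorded witnessing tree — forward and rooted at the arc's tail, or backward and rooted at the arc's head — is one of the traversed trees.

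Next I would set up the key invariant. For a vertex $v\in V(T_i)$, call the attached pair $(\mathrm{high}_{T_i}(v),0)$ its $T_i$-label, and for each traversed forward tree (resp. backward path) containing $v$, call $(\mathrm{high}_{T_i}(r),\mathrm{depth}(v))$ its label there, where $r$ is that tree's root (resp. that path's initial vertex). Let $\nu(v)$ be the lexicographic minimum over all labels of occurrences of $v$. Since $s$ is the unique vertex of $T_i$ at distance $0$ from the root, $\nu(s)=(0,0)$ and $\nu(v)=(0,0)$ forces $v=s$. The central claim is: for every $v\in V(T_i)\setminus\{s\}$, $\mathcal{C}$ contains an arc $(w,v)$ whose tail label is lexicographically smaller than $\nu(v)$. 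I would prove this by casing on which occurrence of $v$ realizes $\nu(v)$. If it is an interior occurrence — depth $\geq 1$ in a traversed forward tree, or a non-initial vertex on a traversed backward path — then the tree/path edge into that occurrence lies in $\mathcal{C}$ and its tail label shares the first coordinate with $\nu(v)$ but has a strictly smaller second coordinate. Otherwise $\nu(v)=(\mathrm{high}_{T_i}(v),0)$; since $v\neq s$ it has a $T_i$-parent $u$ with $\mathrm{high}_{T_i}(u)=\mathrm{high}_{T_i}(v)-1$, and by (c) either $(u,v)\in\mathcal{C}$ directly with tail label $(\mathrm{high}_{T_i}(u),0)$, or the last arc of the recorded $u$-to-$v$ path in the witnessing BFS tree lies in $\mathcal{C}$ with tail label $(\mathrm{high}_{T_i}(u),\cdot)$ — and in both cases the first coordinate is strictly below $\mathrm{high}_{T_i}(v)$.

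From the claim, every $v\neq s$ receives a selected incoming arc $(p(v),v)$, namely the one of minimum tail label among arcs of $\mathcal{C}$ into $v$, and that minimum is lexicographically below $\nu(v)$; since the tail label is one of the occurrence labels of $p(v)$, it is at least $\nu(p(v))$, so $\nu(p(v))<_{\mathrm{lex}}\nu(v)$. Assembling the conclusion: the construction gives each vertex of $V(T_i)\setminus\{s\}$ exactly one incoming arc and leaves $s$ rootless, so $T_{i-1}$ is a set of arcs on $V(T_i)$ with in-degree $1$ everywhere except $s$; because $\nu\circ p$ strictly decreases there are no cycles, and iterating $p$ from any vertex must terminate at a vertex with no parent, which — every other vertex having one — is $s$. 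Hence $T_{i-1}$ is a directed spanning tree rooted at $s$ with vertex set $V(T_i)$, which by (b) is exactly the set of vertices reachable from $s$ in $G_{i-1}$, and by (a) all its arcs lie in $G_{i-1}$.

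The step I expect to be the main obstacle is the case analysis establishing the central claim: carefully distinguishing forward-tree and backward-path occurrences, keeping track that labels live on per-occurrence copies rather than on vertices, and verifying point (c) — that a shortcut in $T_i$ from iteration $i$ really does have a traversed witnessing tree of the matching orientation. A minor point to dispatch along the way is confirming the construction never directs an arc into $s$ (equivalently, that $s$ is left as the root), which follows because $\nu(s)=(0,0)$ is the unique lexicographic minimum over $V(T_i)$ and no candidate arc can have tail label below it.
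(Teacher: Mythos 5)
Your proposal is correct and follows essentially the same route as the paper's own proof: every non-root vertex inherits a candidate incoming arc because each $T_i$ arc is either in $G_{i-1}$ or covered by its witnessing BFS tree/path, and the per-copy $(\id{high},\id{low})$ labels strictly decrease lexicographically along the selected minimum-label incoming arcs, which rules out cycles and forces every parent chain to end at $s$. Your $\nu(v)$ (minimum label over copies) is just a more explicit formalization of the paper's "final label is the lowest label associated with any of its copies," so no substantive difference.
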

\begin{proof}
  Since only arcs present in $G_{i-1}$ are considered in the last step
  of the algorithm, $T_{i-1}$ is a subgraph of $G_{i-1}$.  It is not
  obvious, however, that it is a tree, nor is it obvious that it
  spans.  

  The first step is to show that every vertex, except $s$, has an
  incoming arc in $T_{i-1}$.  Consider a vertex $v$ and its incoming
  arc $(u,v)$ in $T_i$.  If $(u,v)$ is present in $G_{i-1}$ as well,
  then it is in consideration the last step, so $v$ must select an
  arc.  If $(u,v)$ is a shortcut, then it corresponds to some path in
  a BFS tree.  All arcs in that path, and specifically the arc
  directed toward $v$, are also in consideration.  Thus, $v$ has an
  incoming arc. 

  For each vertex, let the final label be the lowest label associated
  with any of its copies. If all arcs go from lower label to higher
  label, then there are no cycles.  To prove this is the case, the
  claim is that every copy of each vertex other than the source (and
  in particular the lowest-label copy) has an incoming arc from a
  vertex with a lower label.  Since the minimum incoming arc is
  the one used, that would imply that all arcs are from lower to
  higher label. 

  To prove the claim, consider a copy of vertex $v$.  There are three
  cases.  If $v$ is in a forward-search BFS tree and not the root,
  then $v$ has depth (and hence $\id{low}(v)$ label) one higher than
  its parent in the tree.  If $v$ is in a backward-search path and not
  the source, the same argument holds.

  Otherwise, $v$'s label is the same as in $T_i$.  In $T_i$, $v$'s incoming
  arc $(u,v)$ satisfies $\id{high}(u) < \id{high}(v)$ by
  construction.  If $(u,v)$ is in $G_{i-1}$, then this arc satisfies
  the claim.  Otherwise, $v$ the non-root of a BFS tree with a
  strictly lower $\id{high}$ value, and hence one of the first two
  cases applies.
\end{proof}
\vspace{-.5em}\section{Conclusions}
\vspace{-.5em}
This work makes the first major progress toward work-efficient parallel algorithms for
directed graphs, but it also exposes several new questions.  First, can
the performance be improved?  Shaving logarithmic factors would be
nice, but doing so seems premature --- it is quite likely that
$\Otilde{n^{2/3}}$ is not the final answer.  I would conjecture that
an $n^{1/2 + o(1)}$-diameter reduction is possible using a more
sophisticated algorithm based on the one presented herein.

Is true work efficiency, i.e., $O(m)$ work, possible for the
diameter-reduction problem?  Achieving that
would require first producing an $O(m)$-time sequential algorithm for
the problem.  

Hesse's lower bound provides a lower bound on work-efficient diameter
reduction, but that is not a general lower bound on digraph
reachability.  Can digraph reachability be improved by relaxing the
shortcutting requirements, perhaps by adopting some ideas from
Spencer's algorithm?   Are there good general lower bounds for
work/span tradeoffs of these algorithms? 

Finally, can the algorithm be extended to solve unweighted shortest paths?

\section*{Acknowledgements}
Special thanks to Cal Newport and Justin Thaler for some useful
discussions.  This work is supported in part by NSF grants
CCF-1718700, CCF-1617727, and CCF-1314633.

\bibliographystyle{plain}
\bibliography{reach}

\begin{thebibliography}{10}

\bibitem{Brent74}
Richard~P. Brent.
\newblock The parallel evaluation of general arithmetic expressions.
\newblock {\em J. ACM}, 21(2):201--206, April 1974.

\bibitem{Cole88}
Richard Cole.
\newblock Parallel merge sort.
\newblock {\em SIAM J. Comput.}, 17(4):770--785, August 1988.

\bibitem{CoppersmithFlHe05}
Don Coppersmith, Lisa Fleischer, Bruce Hendrickson, and Ali Pinar.
\newblock A divide-and-conquer algorithm for identifying strongly connected
  components.
\newblock Technical Report RC23744, IBM Research, 2005.

\bibitem{CLRS}
Thomas~H. Cormen, Charles~E. Leiserson, Ronald~L. Rivest, and Clifford Stein.
\newblock {\em Introduction to Algorithms}.
\newblock MIT Press, Cambridge, MA, USA, 2nd edition, 2001.

\bibitem{FortuneWy78}
Steven Fortune and James Wyllie.
\newblock Parallelism in random access machines.
\newblock In {\em Proceedings of the Tenth Annual ACM Symposium on Theory of
  Computing}, pages 114--118, 1978.

\bibitem{Gibbons89}
P.~B. Gibbons.
\newblock A more practical pram model.
\newblock In {\em Proceedings of the First Annual ACM Symposium on Parallel
  Algorithms and Architectures}, pages 158--168, 1989.

\bibitem{Goldschlager78}
Leslie~M. Goldschlager.
\newblock A unified approach to models of synchronous parallel machines.
\newblock In {\em Proceedings of the Tenth Annual ACM Symposium on Theory of
  Computing}, pages 89--94, 1978.

\bibitem{Hesse03}
William Hesse.
\newblock Directed graphs requiring large numbers of shortcuts.
\newblock In {\em Proceedings of the Fourteenth Annual ACM-SIAM Symposium on
  Discrete Algorithms}, pages 665--669, 2003.

\bibitem{JaJa92}
Joseph J{\'{a}}J{\'{a}}.
\newblock {\em An Introduction to Parallel Algorithms}.
\newblock Addison-Wesley, 1992.

\bibitem{KaoKl90}
M.-Y. Kao and P.~N. Klein.
\newblock Towards overcoming the transitive-closure bottleneck: Efficient
  parallel algorithms for planar digraphs.
\newblock In {\em Proceedings of the Twenty-second Annual ACM Symposium on
  Theory of Computing}, pages 181--192, 1990.

\bibitem{Karp94}
Richard~M. Karp.
\newblock Probabilistic recurrence relations.
\newblock {\em J. ACM}, 41(6):1136--1150, November 1994.

\bibitem{KarpRa91}
Richard~M. Karp and Vijaya Ramachandran.
\newblock A survey of parallel algorithms for shared-memory machines.
\newblock Technical Report UCB/CSD-88-408, EECS Department, University of
  California, Berkeley, Mar 1988.

\bibitem{Klein93}
Philip~N. Klein.
\newblock Parallelism, preprocessing, and reachability: A hybrid algorithm for
  directed graphs.
\newblock {\em J. Algorithms}, 14(3):331--343, May 1993.

\bibitem{KleinSu97}
Philip~N Klein and Sairam Subramanian.
\newblock A randomized parallel algorithm for single-source shortest paths.
\newblock {\em J. Algorithms}, 25(2):205--220, November 1997.

\bibitem{LeGall14}
Fran\c{c}ois Le~Gall.
\newblock Powers of tensors and fast matrix multiplication.
\newblock In {\em Proceedings of the 39th International Symposium on Symbolic
  and Algebraic Computation}, pages 296--303, 2014.

\bibitem{SavitchSt79}
Walter~J. Savitch and Michael~J. Stimson.
\newblock Time bounded random access machines with parallel processing.
\newblock {\em J. ACM}, 26(1):103--118, January 1979.

\bibitem{Schudy08}
Warren Schudy.
\newblock Finding strongly connected components in parallel using
  {$O(\log^2n)$} reachability queries.
\newblock In {\em Proceedings of the Twentieth Annual Symposium on Parallelism
  in Algorithms and Architectures}, pages 146--151, 2008.

\bibitem{Spencer97}
Thomas~H. Spencer.
\newblock Time-work tradeoffs for parallel algorithms.
\newblock {\em J. ACM}, 44(5):742--778, September 1997.

\bibitem{TarjanVi84}
R.~E. Tarjan and U.~Vishkin.
\newblock Finding biconnected componemts and computing tree functions in
  logarithmic parallel time.
\newblock In {\em Proceedings of the 25th Annual Symposium on Foundations of
  Computer Science, 1984}, pages 12--20, 1984.

\bibitem{UllmanYa91}
Jeffrey~D. Ullman and Mihalis Yannakakis.
\newblock High probability parallel transitive-closure algorithms.
\newblock {\em SIAM J. Comput.}, 20(1):100--125, February 1991.

\end{thebibliography}

\end{document}